\author{} 
\title{} 
\date{\today} 
\DeclareMathOperator*{\argmax}{arg\,max}
\newtheorem{proposition}{Proposition}
\newtheorem{definition}{Definition}
\newtheorem{example}{Example}
\newtheorem*{assumption*}{Assumption}
\newenvironment{customthm}[1]
  {\innercustomthm}
  {\endinnercustomthm}
\newenvironment{customlem}[1]
  {\innercustomlem}
  {\endinnercustomlem}
\newenvironment{customprop}[1]
  {\innercustomprop}
  {\endinnercustomprop}
  \newtheorem{theorem}{Theorem}
\newtheorem{lemma}{Lemma}
\newcommand{\indep}{\perp \!\!\! \perp}
\newtheorem{condition}{Condition}
\providecommand{\keywords}[1]
{
  \small	
  \textbf{\textit{Keywords: }} #1
}
\title{Structural restrictions in local causal discovery: identifying direct causes of a target variable} 
\author{Juraj Bodik$^{1 \footnote{ Email of the corresponding author: Juraj.Bodik@unil.ch. \\Published in \textit{Biometrika}, 2025, \href{https://doi.org/10.1093/biomet/asaf042}{10.1093/biomet/asaf042}}}$, Valérie Chavez-Demoulin$^1$}
\date{%
    $^1$ {\small HEC, Université de Lausanne, Switzerland} \\%
    }
\begin{document}

\pagenumbering{gobble}

\maketitle
\begin{abstract}
We consider the problem of learning a set of direct causes of a target variable from an observational joint distribution. Learning directed acyclic graphs (DAGs) that represent the causal structure is a fundamental problem in science. Several results are known when the full DAG is identifiable from the distribution, such as assuming a nonlinear Gaussian data-generating process. Here, we are only interested in identifying the direct causes of one target variable (local causal structure), not the full DAG. This allows us to relax the identifiability assumptions and develop possibly faster and more robust algorithms. In contrast to the Invariance Causal Prediction framework, we only assume that we observe one environment without any interventions. We discuss different assumptions for the data-generating process of the target variable under which the set of direct causes is identifiable from the distribution. While doing so, we put essentially no assumptions on the variables other than the target variable. In addition to the novel identifiability results, we provide two practical algorithms for estimating the direct causes from a finite random sample and demonstrate their effectiveness on several benchmark and real datasets. 
\end{abstract}
\keywords{Causal discovery, Identifiability, Target variable, Local causal discovery, Structural causal models, Causal inference}
  

\pagenumbering{arabic}

\section{Introduction}

Causal reasoning holds great significance in numerous fields, including public policy, decision-making and medicine \citep{Holland, TheBookOfWhy}. Randomized control experiments are widely accepted as the gold standard method for determining causal relationships \citep{Pearl_causal_diagrams_biometrika, Rubin}. However, the feasibility of such experiments is often hindered by their high costs and ethical concerns. In such a case, it is important to estimate causal relations from observational data, which are obtained by observing a system without any interventions \citep{Elements_of_Causal_Inference}. 

A classical approach for causal discovery  is to characterize the Markov equivalence class of structures (MEC, \cite{Meek}); however, the full causal structure is typically unidentifiable. 
A variety of research papers have proposed various methodologies to deal with unidentifiable structures. These methods are either structural-restriction-based, meaning we add some additional assumptions about the functional relations between the variables, such as assuming nonlinear Gaussian data-generation process \citep{hoyer2008, peters2013identifiability,  Chen_biometrika_causal_discovery, Nonlinear_Causal_Discovery_with_Confounders, bodik2023identifiability}; score-based, meaning we pick a causal structure with the best fit on the data according to some score function \citep{Greedy_search, Score-based_causal_learning}; or information-theory-based, using entropy, mutual information and approximations of Kolmogorov complexity \citep{IGCI, Slope, Natasa_Tagasovska}.  However, these methods were designed either for a bivariate case or to infer the entire causal structure of the system. 

In contrast, we focus on inferring only a local structure rather than a global one. This decision is motivated by the expectation that it may lead to simpler, faster, more robust, accurate, and powerful procedures. The assumptions needed for the identifiability are less strict; to be more specific, our methodology relies on so-called local causal sufficiency, an assumption much weaker than the (global) causal sufficiency.

Some methodologies have been proposed to infer a local structure around a target variable. These methods are typically divided into three categories: learning a local skeleton (unoriented graph), learning a minimal Markov blanket (a sufficient set), or learning a set of direct causes of the target variable (goal of this paper). Under causal sufficiency and faithfulness \citep{Pearl_book}, the PC algorithm \citep{PCalgorithm} can identify the MEC and consistently learn the skeleton of the full structure. \cite{Local_causal_discovery_2008}, \cite{Local_Causal_discovery2010},  \cite{Wang2014_local_DAG_learning} and \cite{Ling_local_causal_discovery} discuss modifications of the PC algorithm focusing only on the local structure. 
\cite{Local_Causal_discovery_Gao_2021} suggest a methodology for estimating the minimal Markov blanket based on comparing entropies of the variables.  
\cite{Peters_invariance} introduced an invariance causal prediction method (ICP) for estimating the direct causes of the target variable.
\cite{Local_Causal_discovery_Mona_Azadkia} propose a method to learn the direct causes of the target variable under the assumption that the causes are identifiable (specifically, assuming that the underlying structure is a polytree). In contrast, our work focuses on the identifiability of the direct causes and aims to distinguish between different local causal structures (local MEC) using an structural-restrictions-based approach, where we take the ideas from classical approaches and use them locally.

\subsection{General identifiability of a causal graph}

Consider a DAG (directed acyclic graph) $\mathcal{G}=(V,E)$ with a finite set of vertices (nodes) $V$ and a set of directed edges $E$. We adapt the usual notation of graphical models  (e.g., \citealp{PCalgorithm}); for example, we write $pa_i$, $ch_i$ and $an_i$ for parents, children and ancestors of the node $i$, respectively. Consider a random vector $(X_i)_{i\in V}$ over a probability space $(\Omega, \mathcal{A}, P)$, and we denote $\textbf{X}_S = \{X_s{:}\,\,  s\in S\}$ for $S\subseteq V$. For simplicity, we denote $V = \{0, \dots, p\}$, where $X_0$ (usually denoted by $Y$) is a target variable and $\textbf{X}=(X_1, \dots, X_p)^\top$ are other variables. 

A {structural causal model} (SCM) with a DAG $\mathcal{G}$ over $(X_0,\textbf{X})$ represents a data-generating process where the variables arise from structural equations 
\begin{equation}\label{definition_general_SCM}
    X_i=f_i(\textbf{X}_{pa_i}, \eta_i),\,\,\,\,\,\,\,\,\, f_i\in\mathcal{F}_i,\,\,\,\,\,\,\,i=0,1,\dots, p, 
\end{equation}
where $f_i\in\mathcal{F}_i$ are the assignments (link functions), $\mathcal{F}_i$ are some subsets of measurable functions, and $\eta_i$ are jointly independent random variables. $X_j$ is called a direct cause of $X_i$ if $j\in pa_i$. A set of variables in a SCM is said to be \textit{causally sufficient} if there is no hidden common cause that is causing more than one variable. $\textbf{X}$ is \textit{locally causally sufficient} for $X_0$ if there is no hidden common cause that is causing both $X_0$ and a parent of $X_0$ in $\textbf{X}$. 

We say that $\mathcal{G}$ is \textit{identifiable} under $(\mathcal{F}_0, \dots, \mathcal{F}_p)$ from the joint distribution of $\textbf{X}$ (we also say that the causal model is identifiable) if there is no DAG $\mathcal{G}'\neq \mathcal{G}$ and functions $f_i'\in\mathcal{F}_i, i=0, \dots, p$ generating the same joint distribution. 
We say that the SCM follows an $\mathcal{F}$\textit{-model}, if $\mathcal{F}_i = \mathcal{F}$ for all $i=0, 1, \dots, p$ ; in other words, each structural equation in the SCM satisfies $f_i\in\mathcal{F}, i=0, \dots, p$. 

A large number of $\mathcal{F}$-models were proposed in the literature when the full graph $\mathcal{G}$ is identifiable. \cite{Lingam} show that $\mathcal{G}$ is identifiable under the LiNGaM model (Linear Non-Gaussian additive Models where $\mathcal{F}$ consists of all linear functions and the noise variables are non-Gaussian).
\cite{hoyer2008} and \cite{Peters2014}  developed a framework for additive noise models (ANM) where $\mathcal{F}$ consists of functions additive in the last input, that is,  $X_i = g(\textbf{X}_{pa_i}) +  \eta_i$. 
\cite{Zhang2009} consider the post-nonlinear (PNL) causal model where $\mathcal{F}$ consists of post-additive functions, that is, $
X_i = g_1\big(g_2(\textbf{X}_{pa_i}) +  \eta_i\big)$ with an invertible link function $g_1$.  Note that the former two are special cases of the PNL model. 
 \cite{Khemakhem_autoregressive_flows}, \cite{immer2022identifiability}, and \cite{strobl2022identifying} propose several methods and identifiability results for location-scale models, where  $\mathcal{F}$ consists of location-scale functions, that is, $X_i = g_1(\textbf{X}_{pa_i}) +  g_2(\textbf{X}_{pa_i})\eta_i.$
\cite{bodik2023identifiability} proved the identifiability of $\mathcal{G}$ for a class of conditionally parametric causal models ($CPCM(F)$), where the data-generation process is of the form 
\begin{equation}\label{CPCM_def}
X_i=f_i(\textbf{X}_{pa_i}, \varepsilon_i) = F^{-1}\big(\varepsilon_i; \theta_i(\textbf{X}_{pa_i})\big), \,\,\,\,\,\text{ equivalently } X_i\mid \textbf{X}_{pa_i}\sim F\big(\theta_i(\textbf{X}_{pa_i})\big),
\end{equation}
where $\varepsilon_i\sim U(0,1),$ and $F$ is a known distribution function with parameters $\theta_i\in\mathbb{R}^q$ being functions of the direct causes of $X_i$. Note that if $F$ is Gaussian, then we are in the Gaussian Location-scale Models frame-work, in which case (\ref{CPCM_def}) is equivalent to $X_i = \mu_i(\textbf{X}_{pa_i}) + \sigma_i(\textbf{X}_{pa_i})\eta_i,$ where $\eta_i$ is normally distributed and $\theta_i = (\mu_i, \sigma_i)$.

However, we do not require full identifiability of $\mathcal{G}$ to identify the parents of the target variable. The ICP method \citep{Peters_invariance} does not assume a pre-specified $\mathcal{F}$-model, but rather assumes that the target variable $Y = X_0$ is structurally generated as 
\begin{equation}\label{qwertyuiop}
Y = f_Y(\textbf{X}_{pa_Y}, \eta_Y),\,\,\,\,\,\,\, f_Y \in \mathcal{F}_A, \,\,\,\,\,\,\, \eta_Y \indep \textbf{X}_{pa_Y},
\end{equation}
where $\mathcal{F}_A$ is a space of \textit{additive} functions (the original formulation of ICP \citep{Peters_invariance} was limited to linear functions, but later work extended it to non-linear additive models \citep{Christina}, and more general frameworks have since been proposed \citep{Mooij2020}). The authors additionally assume a multi-environmental setting; that is, we assume that we observe an environmental variable $E$ that is an ancestor of $Y$ but not its parent. This assumption allows for the identification of (a subset of) the parents of $Y$ due to the following invariance property. A set $S \subseteq \{1, \dots, p\}$ is called an $E$-plausible set of causal predictors if $Y \indep E \,| X_{S}$, and 
$$S_E(Y) := \bigcap_{\tilde{S} \subseteq \{1, \dots, p\}:\,\,\tilde{S} \text{ is E-plausible}} \tilde{S}$$ 
is called the $E$-identifiable set of causal predictors (note that we slightly modified the notation from the original paper). It can be shown that always $S_E(Y) \subseteq pa_Y$.  However, $S_E(Y) = pa_Y$ only if $E$ is ``rich'' enough \citep[Section 4]{Peters_invariance}. 

In this work, we do not require the existence of an environmental variable $E$, but only consider one (observational) dataset. Instead, we restrict the functional space of $f_Y$ to achieve the identifiability of the direct causes of $Y$. For example, we demonstrate that in many scenarios, we can identify (a subset of) the parents of $Y$ assuming \textit{only} (\ref{qwertyuiop}). This contributes to innovative theoretical developments that are relevant for causal identifiability.

\subsection{Main idea of our framework}

We assume the data-generation process of $Y$ in the form 
\begin{equation}\label{SCM_for_Y}
Y = f_Y(\textbf{X}_{pa_Y}, \varepsilon_Y), \quad f_Y \in \mathcal{F}, \quad \varepsilon_Y \indep \textbf{X}_{pa_Y}, \quad \varepsilon_Y \sim U(0,1).
\end{equation}
In line with the structural-restrictions-based approach, we assume $f_Y \in \mathcal{F}$, where $\mathcal{F}$ is a subset of all measurable functions (e.g., the class of linear functions). Importantly, we allow for the presence of hidden confounders among the covariates $\textbf{X}$. As opposed to \eqref{definition_general_SCM}, we distinguish between uniformly distributed noise variables (denoted by $\varepsilon_Y$) and arbitrarily distributed ones (denoted by $\eta_Y$). These representations are equivalent under the transformation $q^{-1}(\varepsilon_Y) = \eta_Y$, where $q^{-1}$ denotes the quantile function of $\eta_Y$. Throughout the paper, we adopt the uniform representation without loss of generality. The objective is to estimate the set $pa_Y$ from a random sample of $(Y, \textbf{X})$.

Throughout the paper, we assume the following two assumptions: $f_Y$ is \textit{invertible} and \textit{minimal almost surely}, represented by the notation $f_Y\in\mathcal{I}_m$. Invertibility means that the noise variables can be recovered from the observed variables; that is, a function $f_Y^\leftarrow$ exists such that $\varepsilon_Y = f_Y^\leftarrow(\textbf{X}_{pa_Y},Y).$ Minimality represents the property that there does not exist a lower-dimensional function $g$ and $k\leq m\in\mathbb{N}$ such that $f_Y(x_1, \dots, x_m) = g(x_1, \dots, x_{k-1}, x_{k+1}, \dots, x_m)$ almost surely. The assumption of minimality of a function is closely related to causal minimality. For more details and a rigorous definition of the class of functions $\mathcal{I}_m$, see Appendix \ref{Appendix_A.1.}. Overall, we assume that $f_Y\in\mathcal{F}\subseteq\mathcal{I}_m$. 

Our framework is built on the notion of $\mathcal{F}$-identifiability that we now present.  Recall that, without loss of generality, we assume $\varepsilon_Y\sim U(0,1)$.
\begin{definition}\label{Definition1}
A non-empty set \( S \subseteq \{1, \dots, p\} \) is an \(\mathcal{F}\){-plausible} set of parents of \( Y \) if there exists \( f \in \mathcal{F} \) such that
\[
\varepsilon_S := f^{\leftarrow}(\textbf{X}_S, Y) \quad \text{satisfies} \quad \varepsilon_S \indep \textbf{X}_S \quad \text{and} \quad \varepsilon_S \sim U(0,1).
\]

The set of \(\mathcal{F}\){-identifiable} parents of \( Y \) is defined as
\[
S_\mathcal{F}(Y) := \bigcap_{\substack{S \subseteq \{1, \dots, p\},\, S \neq \emptyset \\ \text{S is } \mathcal{F}\text{-plausible}}} S.
\]
\end{definition}

The constrains on the functional class $\mathcal{F}$ correspond to the data-generation process of $Y$.
If we assume linearity of the covariates, this represents the assumption $f_Y\in\mathcal{F}_L$, where
\begin{equation*}
\mathcal{F}_L = \{f\in\mathcal{I}_m: f(\textbf{x}, \varepsilon) = \beta^T\textbf{x} + q^{-1}(\varepsilon) \text{ for some quantile function }q^{-1} \text{ and } \beta\in\mathbb{R}^{|\textbf{x}|}\}.
\end{equation*}
Note that the restriction $f\in\mathcal{I}_m$ guarantees that the arguments $\beta_i\neq 0$ for all $i$. On the other hand, if we assume that the distribution of $Y\mid \textbf{X}_{pa_Y}$ belongs to a family $F$ (such as Gaussian family), this corresponds to assuming $f_Y\in\mathcal{F}_F$ where 
\begin{equation*}
\mathcal{F}_F := \{f\in\mathcal{I}_m: f(\textbf{x},\varepsilon) =  F^{-1}(\varepsilon;\theta(\textbf{x})) \text{ for some function } \theta\}.
\end{equation*}
We call this restriction conditionally parametric causal model assumption ($CPCM(F)$, see (\ref{CPCM_def})). Table~\ref{tableDefinitions} lists all functional spaces considered in this paper.

\begin{table}[ht]
\centering
\begin{tabular}{|l|}
\hline
\multicolumn{1}{|c|}{Summary of different $\mathcal{F}\subset \mathcal{I}_m$ used in the paper}                                                                                                                    \\ \hline
$\mathcal{F}_L = \{f\in\mathcal{I}_m: f(\textbf{x}, \varepsilon) = \beta^T\textbf{x} + q^{-1}(\varepsilon) \text{ for some quantile function }q^{-1} \text{ and }\beta\in\mathbb{R}^{|\textbf{x}|}\}$                                \\ \hline
$\mathcal{F}_A = \{f\in\mathcal{I}_m: f(\textbf{x}, \varepsilon) = \mu(\textbf{x}) + q^{-1}(\varepsilon) \text{ for some }\mu(\cdot)\text{ and quantile function }q^{-1}     \}$        \\ \hline
$\mathcal{F}_{LS} = \{f\in\mathcal{I}_m: f(\textbf{x}, \varepsilon) = \mu(\textbf{x}) + \sigma(\textbf{x}) q^{-1}(\varepsilon) $, \,\,\,\,\,\ for some function $\mu$,\\
\,\,\,\,\,\,\,\,\,\,\,\,\,\,\,\,\,\,\,\,\,$\text{ positive function }\sigma\,\text{ and a quantile function }q^{-1}\}$\\ \hline
$\mathcal{F}_F := \{f\in\mathcal{I}_m: f(\textbf{x},\varepsilon) =  F^{-1}(\varepsilon;\theta(\textbf{x})) \text{ for some function } \theta:\mathbb{R}^{|\textbf{x}|}\to\mathbb{R}^q\}$         \\ \hline
\end{tabular}
\caption{
The table summarizes different functional spaces \( \mathcal{F} \) used in the paper. 
\( \mathcal{F}_L \), \( \mathcal{F}_A \), \( \mathcal{F}_{LS} \), and \( \mathcal{F}_F \) correspond to the linearity assumption, additivity assumption, location-scale assumption, and CPCM\((F)\) assumption, respectively. 
All classes are subsets of \( \mathcal{I}_m \), meaning their functions are assumed to be invertible with respect to the noise variable.
}
\label{tableDefinitions}
\end{table}

The concept of \(\mathcal{F}\)-identifiability provides theoretical limitations for causal estimates under the assumption \(f_Y \in \mathcal{F}\). The set of \(pa_Y\) can be impossible to estimate (or even ill-defined) even with an infinite number of observations. However, we can consistently estimate the set \(S_{\mathcal{F}}(Y)\) (as discussed in Section~\ref{section_algorithm}). The main focus of this paper is to determine which elements belong to \(S_{\mathcal{F}}(Y)\): When does it hold that \(S_{\mathcal{F}}(Y) = pa_Y\)?
In the following, we present an example to illustrate and clarify the notation and ideas   presented in Section~\ref{Section_3}.

\begin{example}[3 variable case]\label{example_3_variable_case_introduction}
Consider the following structural causal model: $\mathcal{G}$ is in the form $ X_1 \to Y \to X_2$, where $Y$ is generated as $Y = f_Y(X_1, \varepsilon_Y) =  f_0(X_1) + q^{-1}(\varepsilon_Y)$, with $\varepsilon_Y \indep X_1$, for some non-constant function $f_0$ and a quantile function $q^{-1}$. 

Notice that $f_Y \in \mathcal{F}_A$, and $f_Y^{\leftarrow}(X_1, Y) = q(Y-f_0(X_1)) = \varepsilon_Y\indep X_1$. Therefore,  $S = \{1\} = pa_Y$ is $\mathcal{F}_A$-plausible set and we obtain $S_{\mathcal{F}_A}(Y) \subseteq pa_Y$. In Section~\ref{section_algorithm}, we propose an estimator $\hat{S}_{\mathcal{F}_A}(Y)$ that satisfies  $\hat{S}_{\mathcal{F}_A}(Y) \subseteq pa_Y$ with large probability.  

It is important to note that  we do not impose any assumptions on $X_1$ or $X_2$. In Section~\ref{Section_3}, we demonstrate that typically $S_{\mathcal{F}_A}(Y) = pa_Y$ except in some special cases similar to the special cases when ANM is non-identifiable \citep{Zhang2009}. Hence, we can typically identify and consistently estimate the direct causes of $Y$ from a random sample assuming only $f_Y \in \mathcal{F}_A$ and  $\varepsilon_Y \indep \textbf{X}_{pa_Y}$. To the best of our knowledge, there is no similar result in the literature.
\end{example}

Our methodology is quite general, and the scope of potential applications is broad and encompasses a wide range of fields and domains. Assuming additive or heteroskedastic models is a common practice in domains such as gene expressions, economics or biological networks \citep{yuan2006model}.

From a practical point of view, we propose two algorithms for estimating the direct causes of a target variable from a random sample. One provides an estimate of $S_\mathcal{F}(Y)$ with a coverage guarantees; that is, with large probability our estimate is a subset of the parents. Such guarantees are rare and highly desirable in causal discovery. However, the output does not have to contain all direct causes. The second is a score-based algorithm estimating $pa_Y$ based on a goodness-of-fit.

In Section~\ref{Section_3}, we dive deeper into mathematical properties of $S_\mathcal{F}(Y)$, where the aim is to find conditions under which $S_\mathcal{F}(Y) = pa_Y$. In Section~\ref{section_algorithm}, we describe our proposed algorithms for estimating $S_\mathcal{F}(Y)$ and $pa_Y$ from a random sample. Section~\ref{section_simulations} contains a short simulation study followed by an application on a real dataset. The paper has four appendices: Appendix~\ref{Speci_appendix} contains some detailed notions and results omitted from the main text for clarity,  Appendix~\ref{section_appendix_simulations}  contains some details about the simulations and the application, Appendix~\ref{Appendix_Auxiliary} provides some auxiliary results needed for the proofs, and the proofs can be found in Appendix~\hyperref[Section_proofs]{D}.

\section{Properties of \texorpdfstring{$\mathcal{F}$}{}-identifiable parents}
\label{Section_3}
Recall that we assume the data-generation process of $Y$ in the form 
\begin{equation}
Y=f_Y(\textbf{X}_{pa_Y}, \varepsilon_Y),\,\,\,\,\,\,\, f_Y\in\mathcal{F}, \,\,\,\,\,\,\,\varepsilon_Y\indep \textbf{X}_{pa_Y}, \,\,\,\,\,\,\,\varepsilon_Y\sim U(0,1).
\tag{\ref{SCM_for_Y}}
\end{equation}
This assumption implies that $S=pa_Y$ is always  $\mathcal{F}$-plausible set, since $f_Y^{\leftarrow}(\textbf{X}_{pa_Y}, Y)\indep\textbf{X}_{pa_Y} $. Therefore, under \eqref{SCM_for_Y}, it always holds that
\begin{equation}\label{subseteq_parents}
S_\mathcal{F}(Y) \subseteq pa_Y.
\end{equation}
However, the equality $S_\mathcal{F}(Y)= pa_Y$ does not need to hold. Observe that 
$\text{if }\mathcal{F}_1\subseteq \mathcal{F}_2, \text{ then }S_{\mathcal{F}_1}(Y)\supseteq S_{\mathcal{F}_2}(Y).$
This is not surprising, as the more restrictions we put on the data-generation process, the larger the set of identifiable parents. Note that \eqref{SCM_for_Y} inherently assumes local causal sufficiency for $Y$, but full observability of the variables is not required for (\ref{subseteq_parents}) to hold.

The case where \( pa_Y = \emptyset \) needs to be addressed separately since an empty set cannot, by definition, be \(\mathcal{F}\)-plausible. In some untypical situations, such as when the full DAG is non-identifiable, this might even lead to an incorrect conclusion \( S_{\mathcal{F}}(Y) \neq \emptyset = pa_Y \).
This contrasts with the ICP framework \citep{Peters_invariance}, where the case \( pa_Y = \emptyset \) is testable. However, if we use ICP and find \( \widehat{pa}_Y = \emptyset \), we cannot distinguish between two possibilities: 1) \( pa_Y \) is indeed empty, or 2) the environments are not rich enough. Since our framework observes only one environment, we can never reject the second possibility without additional assumptions.

The case \(\mathcal{F} = \mathcal{F}_F\) offers an elegant option for assessing the validity of \( pa_Y = \emptyset \). If $F$ is the marginal distribution of \( Y \) with some (unknown but constant) parameters, we say that \( pa_Y = \emptyset \) is plausible. Although this elegantly corresponds to the \(\mathcal{F}_F\) framework, this does not imply it is a reasonable approach in practice.
Consequently, assuming \( pa_Y \neq \emptyset \) can often be justified either by expert knowledge about the problem or by employing other causal inference methods, such as conducting a do-intervention or orienting certain edges using Meek rules \citep{Meek}.

In this section, we discuss which elements belong to $S_{\mathcal{F}}(Y)$ and we provide various identifiability results under which $S_{\mathcal{F}}(Y) = pa_Y$. We focus on the additive case $\mathcal{F} = \mathcal{F}_A$; however, several counterparts of the shown results for different restrictions such as $\mathcal{F}=\mathcal{F}_{LS}$ and $\mathcal{F}_F$ can be found in Appendix~\ref{Appendix_location_scale_definition}. 
Recall that we interchangeably use the notation $X_0=Y$ for the target variable.

\subsection{Global identifiability \texorpdfstring{$\implies$}{} Local \texorpdfstring{$\mathcal{F}$}{}-identifiability }
\label{section_general_implies_local_identifiability}
In the following, we demonstrate that classical identifiability results from the literature can be used to assess the $\mathcal{F}$-plausibility of a set $S$. Informally, we show that if all variables in the SCM follow an identifiable $\mathcal{F}_A$-model, then any set $S$ containing a child of $Y$ cannot be $\mathcal{F}_A$-plausible. 

To state the result, we use the notion of \textit{restricted additive noise model} (restricted $\mathcal{F}_A$-model), introduced in \citep[Definition 27]{Peters2014}; a submodel of $\mathcal{F}_A$ such that the causal graph is identifiable. It is well known that a bivariate additive noise model is identifiable as long as $(f_j ,P_{X_i})$  does not solve a certain differential equation (leading to non-identifiable cases such as linear Gaussian case \citep{Zhang2009}). A restricted additive noise model consists of such $f\in\mathcal{F}_A$ that does not solve this differential equation for all marginals in the SCM. As an example, one can consider a SCM where $X_j = f_j(\textbf{X}_{pa_j})+\eta_j$, where $\eta_j$ are Gaussian and $f_j$ are non-linear in any component. For more details, see Appendix \ref{Appendix_restricted_additive_noise_model} or   \citep[Section 3.2]{Peters2014}.

For simplicity, we focus on the case when $\textbf{X} = (X_1, \dots, X_p)$ are neighbors (either direct causes or direct effects) of $Y$ in the corresponding SCM. Using the classical conditional independence approach and d-separation \citep{Pearl_causal_diagrams_biometrika}, we can eliminate other variables from being potential parents of $Y$. 

\begin{proposition}
\label{TheoremFidentifiabilityWithChild}
Let $(Y, \textbf{X})$ follow an (identifiable) restricted $\mathcal{F}_A$-model with DAG $\mathcal{G}$, where all $\textbf{X}$ are neighbors of $Y$ in $\mathcal{G}$.  Let $S \subseteq \{1, \dots, p\}$ contain a child of $Y$ in $\mathcal{G}$.  Then, $S$ is not $\mathcal{F}_A$-plausible. 
\end{proposition}

The proof is in \hyperref[proof of TheoremFidentifiabilityWithChild]{Appendix D}. Appendix~\ref{Appendix_pairwise_identifiability} provides an analogous result for general class $\mathcal{F}$.

Following Example~\ref{example_3_variable_case_introduction}, consider $X_2 = f_2(Y) + \eta_2$, where $Y\indep\eta_2$. Combining Proposition~\ref{TheoremFidentifiabilityWithChild} with Theorem~1 in \cite{Zhang2009}, we find that $S=\{2\}$ is not $\mathcal{F}_A$-plausible for a ``typical'' combination of $(f_2, \eta_2)$; for example, if $f_2$ is non-linear and $\eta_2$ has the Gaussian distribution. Therefore, we get $S_{\mathcal{F}_A}(Y)=pa_Y=\{1\}$.

\subsection{Deriving assumptions under which \texorpdfstring{$S_{\mathcal{F}}(Y) = pa_Y$}{Complete set of identifiable parents}}\label{Subsection3.2}

In this section, we explore the $\mathcal{F}$-plausibility of a subset of parents $S\subsetneq pa_Y$. We show that if the function \(f_Y\) is ``sufficiently complicated'', then no subset \(S \subsetneq pa_Y\) is $\mathcal{F}$-plausible.
We focus on the additive case $\mathcal{F} = \mathcal{F}_A$; however, counterparts of the results for $\mathcal{F}_{LS}, \mathcal{F}_{F}$ can be found in Appendix~\ref{Appendix_location_scale_definition}.

\begin{theorem}
\label{Theorem_in_section2}
Let \((Y, \mathbf{X}) \in \mathbb{R} \times \mathbb{R}^p\) be continuous and satisfy \eqref{SCM_for_Y}. Suppose \(\mathcal{F} = \mathcal{F}_A\), and consider a non-empty set \(S \subsetneq pa_Y\). 

\begin{itemize}
 \item (Independent case) Assume that \(\mathbf{X}_{pa_Y}\) has independent components (which can often be achieved through a change of coordinates, for example via independent component analysis \citep{Hyvarinen2001}) and \(f_Y\) is an injective function. Then, the set \(S\) is \(\mathcal{F}_A\)-plausible if and only if \(f_Y\) can be decomposed as follows:  
\begin{equation}\label{decomposition_condition_independent}
        f_Y(\mathbf{x}, e) = h_1(\mathbf{x}_S) + h_2(\mathbf{x}_{pa_Y \setminus S}) + q^{-1}(e), \quad\quad \forall
        \mathbf{x} \in \mathbb{R}^{|pa_Y|}, \, e \in (0, 1),
    \end{equation}
    where \(h_1, h_2\) are measurable functions, \(q^{-1}\) is a quantile function, and \(pa_Y \setminus S = \{i \in pa_Y:i \not\in S\}\).

    \item (General case) The set \(S\) is \(\mathcal{F}_A\)-plausible if and only if the function \(f_Y\) can be expressed as:
    \begin{equation}\label{decomposition_condition}
        f_Y(\mathbf{x}, e) = h_1(\mathbf{x}_S) + h_2(\mathbf{x}) + q^{-1}(e), \quad\quad \forall
        \mathbf{x} \in \mathbb{R}^{|pa_Y|}, \, e \in (0, 1),
    \end{equation}
    for some measurable function \(h_1\), quantile function \(q^{-1}\), and a function \(h_2\in \mathcal{H}_{\mathbf{X}_{pa_Y}}(S)\) where
    \[  \mathcal{H}_{\mathbf{X}_{pa_Y}}(S) := \{f : \mathbb{R}^{|pa_Y|} \to \mathbb{R} \mid f(\mathbf{X}_{pa_Y}) \indep \mathbf{X}_S\}.
    \]

    \item As a consequence, \(S_{\mathcal{F}_A}(Y) = pa_Y\) if and only if:  
    \begin{enumerate}
        \item  \(f_Y\) cannot be expressed in the form of \eqref{decomposition_condition} for any \(S \subsetneq pa_Y\), and  
        \item every set \(S\) that is neither a subset nor a superset of \(pa_Y\) (i.e., \(pa_Y \not\subseteq S \not\subseteq pa_Y\)) is not \(\mathcal{F}_A\)-plausible (e.g., under the assumptions of Proposition~\ref{TheoremFidentifiabilityWithChild}).  
    \end{enumerate}
\end{itemize}
\end{theorem}

\textit{Idea of the proof.}
Full proof is in \hyperref[Proof of Theorem_in_section2]{Appendix D}. Here, we show the main steps of the ''$\implies$'' direction in the case when $p=2$,  $pa_Y = \{1,2\}$ and \( S = \{1\} \). We use a notation  \( Y = f_0(X_1, X_2) + q^{-1}(\varepsilon_Y) \), where \( \varepsilon_Y \indep \textbf{X}_{pa_Y} \). 

\textit{Second bullet-point:} Let \( S = \{1\} \) be an \( \mathcal{F}_A \)-plausible set. That means, there exists \( f \in \mathcal{F}_A \) such that $f^{\leftarrow}(X_1, Y) \indep X_1$. Since \( f \in \mathcal{F}_A \), it has an additive form \( f(x, e) = \mu(x) + \tilde{q}^{-1}(e) \) for \( x \in \mathbb{R}^{|S|} \) and \( e \in (0,1) \), where \( \mu \) is some function and \( \tilde{q}^{-1} \) is a quantile function (strictly increasing due to continuity assumption). Additive functions have an inverse in the form \( f^{\leftarrow}(x, y) = \tilde{q}(y - \mu(x)) \) for \( x \in \mathbb{R}^{|S|} \) and \( y \in \mathbb{R} \) (see the discussion in Appendix~\ref{Appendix_A.1.}). We therefore have:
\begin{equation*}
    \begin{split}
\text{S is } \mathcal{F}_A\text{-plausible}& \iff f^{\leftarrow}(X_1, Y) \indep X_1 \iff Y - \mu(X_1) \indep X_1 \\&
\iff    f_0(X_1, X_2) + q^{-1}(\varepsilon_Y) - \mu(X_1) \indep X_1
 \\&
\iff    f_0(X_1, X_2) - \mu(X_1) \indep X_1.         
    \end{split}
\end{equation*}
Hence,  we directly obtain $f_0(x_1, x_2) - \mu(x_1)\in \mathcal{H}_{\textbf{X}}(S) $, which is the form in \eqref{decomposition_condition}. 

\textit{First bullet-point (case $X_1\indep X_2$):} Fix \( a_0 \in \mathbb{R} \). Since  $f_0(X_1, X_2)  - \mu(X_1) \indep X_1$,  the conditional distribution of \( f_0(X_1, X_2) - \mu(X_1) \mid X_1 = x \) must be the same as \( f_0(X_1, X_2) - \mu(X_1) \mid X_1 = a_0 \) for any \( x \in \mathbb{R} \). Thus, due to the independence, 
\( f_0(x, X_2) - \mu(x) \overset{D}{=} f_0(a_0, X_2) - \mu(a_0).\) By rewriting, we obtain
\begin{equation*}
    f_0(x, X_2) \overset{D}{=} \underbrace{\mu(a_0) - \mu(x)}_{h_1(x)} + \underbrace{f_0(a_0, X_2)}_{h_2(X_2)}.
\end{equation*}
This is almost in the form of \eqref{decomposition_condition_independent}, though the equality holds only in distribution. However, \hyperref[lemma_additivity_equation]{Lemma~C3} extends the result to equality everywhere, provided that an inverse of $f_0$ with respect to $x_2$ exists (which holds under the injectivity assumption). 

\textit{The third bullet-point} is a trivial consequence of Proposition~\ref{TheoremFidentifiabilityWithChild} and the second bullet-point.  
\hfill$\Box$
\newline

Theorem~\ref{Theorem_in_section2} demonstrates that a subset \(S \subsetneq pa_Y\) is an $\mathcal{F}_A$-plausible set if and only if the influence of \(\textbf{X}_{pa_Y}\) on \(Y\) can be decomposed into two independent components, \(h_1\) and \(h_2\). If the function \(f_Y\) is ``sufficiently complicated'', in a sense that this decomposition is not feasible, then no subset \(S \subsetneq pa_Y\) is $\mathcal{F}_A$-plausible. While the function space $\mathcal{H}_{\textbf{X}{pa_Y}}$ is considerably smaller than the space of all possible link functions, its prevalence in real-world scenarios remains unclear. 
\newline

We present additional identifiability results for cases where the support of \( Y \mid \mathbf{X}_S \) is finite. Specifically, for cases when   $\underline{\Psi}(\mathbf{x}_S)\in\mathbb{R}$, where
\[
\underline{\Psi}(\mathbf{x}_S) := \inf \mathrm{supp}(Y \mid \mathbf{X}_S = \mathbf{x}_S) = \inf\{y \in \mathbb{R} : P(Y \leq y \mid \mathbf{X}_S = \mathbf{x}_S) > 0\}, \quad\textbf{x}_S\in \mathrm{supp}(\textbf{X}_S).
\] Then, \( S \) is not \(\mathcal{F}_A\)-plausible if  $Y - \underline{\Psi}(\mathbf{X}_S) \not\indep \mathbf{X}_S.$ A detailed statement with examples is provided in Appendix~\ref{Appendix_support}.
If \( S \subsetneq pa_Y \), this result implies that the only viable candidate for \( h_1 \) in Equation~\eqref{decomposition_condition} is \( h_1 = \underline{\Psi} \) (assuming \( \underline{\Psi} \) is finite). Additionally, if \( S \) includes a child of \( Y \), then \( S \) is typically not \(\mathcal{F}_A\)-plausible (again, assuming \( \underline{\Psi} \) is finite). For further details, refer to Appendix~\ref{Appendix_support}.

\subsection{Issue with linear models }
\label{Section3.1}
The following lemma shows that linear models are not ``sufficiently complicated'' for identifying the parents of $Y$. We use the well-known notion of d-separation, defined in \cite{Pearl_book}.  

\begin{lemma}\label{LemmaAboutUnidentifiabilityFL}
Let $(Y, \textbf{X})\in\mathbb{R}\times \mathbb{R}^p$ follow an $\mathcal{F}_L$-model (linear structural causal model) with DAG $\mathcal{G}_0$ and $pa_Y(\mathcal{G}_0)\neq\emptyset$. Then, $|S_{\mathcal{F}_L}(Y)| \leq 1$. Moreover, if there are any $a,b\in an_Y(\mathcal{G}_0)$ that are d-separated in $\mathcal{G}_0$, then $S_{\mathcal{F}_L}(Y) = \emptyset$. 
\end{lemma}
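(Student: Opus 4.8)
The plan is to prove both claims simultaneously by exhibiting a one-element $\mathcal{F}_L$-plausible set built from a source ancestor, exactly in the spirit of Remark~\ref{ExampleEasyDAG}. Throughout I would use that in a linear SCM every variable is a linear combination of the jointly independent noises, so in reduced form $Y=\sum_i c_i\eta_i$ and $X_j=\sum_i w_{ji}\eta_i$, where $c_i=0$ unless $i$ is an ancestor of $Y$ or $i=Y$ (and similarly for $w_{ji}$). Recall also that $pa_Y$ is itself $\mathcal{F}_L$-plausible, so by \eqref{subseteq_parents} we have $S_{\mathcal{F}_L}(Y)\subseteq pa_Y$; hence it suffices to intersect $pa_Y$ with one further plausible set.

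First I would prove $|S_{\mathcal{F}_L}(Y)|\le 1$. Since $pa_Y\neq\emptyset$, following parent links upward in the finite DAG $\mathcal{G}_0$ reaches a source node $s\in an_Y$ (with $pa_s=\emptyset$, so $X_s=\eta_s$). Let $c_s$ be the coefficient of $\eta_s$ in the reduced form of $Y$, i.e. the total causal effect of $X_s$ on $Y$. Mimicking the remark, set $f(x,\varepsilon)=c_s x+\tilde q^{-1}(\varepsilon)$, where $\tilde q$ is the distribution function of $R:=Y-c_s X_s$. Because $X_s=\eta_s$ while $R=\sum_{i\neq s}c_i\eta_i$ involves no $\eta_s$, we get $R\indep X_s$ for \emph{any} noise distribution; taking $\varepsilon_{\{s\}}=f^{\leftarrow}(X_s,Y)=\tilde q(R)$ then yields $\varepsilon_{\{s\}}\sim U(0,1)$ and $\varepsilon_{\{s\}}\indep X_s$. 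Provided $c_s\neq 0$, this $f$ lies in $\mathcal{I}_m$ and witnesses that $\{s\}$ is $\mathcal{F}_L$-plausible, so $S_{\mathcal{F}_L}(Y)\subseteq pa_Y\cap\{s\}$, which has at most one element.

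For the second claim, suppose $a,b\in an_Y$ are d-separated in $\mathcal{G}_0$. Marginal d-separation forces the ancestral closures $an_a\cup\{a\}$ and $an_b\cup\{b\}$ to be disjoint: a common ancestor $c$ would create a fork $a\leftarrow\cdots\leftarrow c\rightarrow\cdots\rightarrow b$, and an ancestry relation would create a directed connecting path, each contradicting d-separation. Choosing a source node $s_1$ inside $an_a\cup\{a\}$ and a source node $s_2$ inside $an_b\cup\{b\}$ gives two \emph{distinct} source ancestors of $Y$, and by the construction above both $\{s_1\}$ and $\{s_2\}$ are $\mathcal{F}_L$-plausible. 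Hence $S_{\mathcal{F}_L}(Y)\subseteq\{s_1\}\cap\{s_2\}=\emptyset$.

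The one technical point, and the main obstacle, is guaranteeing $c_s\neq 0$ (and $c_{s_1},c_{s_2}\neq 0$): a priori the total effect of a source ancestor could vanish through exact cancellation of path coefficients, in which case $Y\indep X_s$ and the singleton ceases to be plausible. This is excluded under faithfulness of $P$ with respect to $\mathcal{G}_0$ — since $s\in an_Y$ is d-connected to $Y$, faithfulness gives $X_s\not\indep Y$, equivalently $\mathrm{Cov}(X_s,Y)=c_s\,\mathrm{Var}(\eta_s)\neq 0$ — and I would state this mild assumption explicitly. The remaining ingredients are routine: continuity of the relevant noise (so that $\tilde q(R)$ is genuinely $U(0,1)$ and $\tilde q^{-1}$ a bona fide quantile function, placing $f$ in $\mathcal{I}_m$), and the elementary graph fact used in the second part. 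Notably, the whole argument is agnostic to the noise distribution, because independence is obtained from disjoint supports in the noise coordinates rather than from any Gaussian-type cancellation, which is precisely what lets a single construction cover both the Gaussian and non-Gaussian regimes.
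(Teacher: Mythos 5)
Your proof is correct and follows essentially the same route as the paper's: write $Y$ in reduced form as a weighted sum of the jointly independent noises, exhibit the singleton of a source ancestor as $\mathcal{F}_L$-plausible via $f(x,\varepsilon)=\beta_{a\to 0}x+g^{-1}(\varepsilon)$ with independence coming from disjoint noise coordinates, and, for the second claim, use d-separation of $a,b$ to get disjoint ancestral closures and hence two distinct source ancestors whose plausible singletons intersect to $\emptyset$. The one difference is to your credit: you explicitly flag the path-cancellation case $c_s=0$ (where the witness $f$ is constant in $x$, hence not in $\mathcal{I}_m$, and the singleton ceases to be $\mathcal{F}_L$-plausible) and patch it with an explicit faithfulness-type assumption, whereas the paper's proof asserts ``$f$ obviously satisfies $f\in\mathcal{F}_L$'' without addressing this degenerate possibility.
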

The proof is in \hyperref[Proof of LemmaAboutUnidentifiabilityFL]{Appendix D}. 
Lemma~\ref{LemmaAboutUnidentifiabilityFL} assumes a causally sufficient model, but this can be relaxed to include hidden variables; see Lemma~\ref{LemmaAboutUnidentifiabilityFL2} in  Appendix~\ref{Appendix_A_Lemma}.

Lemma~\ref{LemmaAboutUnidentifiabilityFL} shows a more general principle that goes beyond the linear models. If we can \textit{marginalize} a causal model to a smaller submodel without breaking $f_Y\in\mathcal{F}$, then only the submodel is relevant for inference about  $S_{\mathcal{F}}(Y)$. 

\begin{lemma}\label{lemma158}
 Let $\mathcal{F}\subseteq\mathcal{I}_m$. Let $(X_0, \textbf{X})\in\mathbb{R}\times \mathbb{R}^p$ follow an $\mathcal{F}$-model with DAG $\mathcal{G}_0$ and $pa_{X_0}(\mathcal{G}_0)\neq \emptyset$. Let  $S\subseteq \{1, \dots, p\}$ be a non-empty set. Let $(X_0, \textbf{X})$ be ``marginalizable'' to $S\cup\{0\}$; that is, $(X_0, \textbf{X}_S)$  can also be written as an $\mathcal{F}$-model with some underlying DAG $\mathcal{G}_S$.  Then $S_{\mathcal{F}}(X_0)\subseteq S$. 
\end{lemma}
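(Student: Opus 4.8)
The plan is to use the defining structure of $S_{\mathcal{F}}(X_0)$ as the intersection of all $\mathcal{F}$-plausible sets: to conclude $S_{\mathcal{F}}(X_0)\subseteq S$ it is enough to produce a single $\mathcal{F}$-plausible set that is contained in $S$, since the intersection lies inside each of its members. Marginalizability is precisely the tool that will supply such a set.

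First I would unpack the hypothesis. Because the $\mathcal{F}$-model is marginalizable to $S\cup\{0\}$, the subvector $\textbf{X}_{S\cup\{0\}}$ can be written as an $\mathcal{F}$-model with some DAG $\mathcal{G}_{S\cup\{0\}}$ on the vertices $S\cup\{0\}$. Let $S':=pa_{X_0}(\mathcal{G}_{S\cup\{0\}})$ denote the parents of $X_0$ in this marginalized DAG; as the graph lives on $S\cup\{0\}$ and is acyclic, $S'\subseteq S$. This reduces the problem to showing that $S'$ is $\mathcal{F}$-plausible.

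To verify plausibility, I would read off the structural equation that the marginalized model assigns to $X_0$, namely $X_0=\tilde f_0(\textbf{X}_{S'},\tilde\varepsilon_0)$ with $\tilde f_0\in\mathcal{F}$ and $\tilde\varepsilon_0\indep\textbf{X}_{S'}$, where the noise may be taken with $\tilde\varepsilon_0\sim U(0,1)$ without loss of generality as in (\ref{SCM_for_Y}). Since $\tilde f_0\in\mathcal{F}\subseteq\mathcal{I}$ is invertible in its last argument, putting $\varepsilon_{S'}:=\tilde f_0^{\leftarrow}(\textbf{X}_{S'},X_0)$ recovers $\varepsilon_{S'}=\tilde\varepsilon_0$, so that $\varepsilon_{S'}\indep\textbf{X}_{S'}$ and $\varepsilon_{S'}\sim U(0,1)$; this is exactly condition (\ref{Definition_F_plausible}). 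In other words, the reasoning that establishes relation (\ref{subseteq_parents}) --- a true parent set is always plausible --- is applied here inside the marginalized model rather than the original one. Hence $S'$ is $\mathcal{F}$-plausible and $S_{\mathcal{F}}(X_0)\subseteq S'\subseteq S$.

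The step I expect to be the main obstacle is guaranteeing that $S'$ is non-empty, so that it genuinely qualifies as a plausible set in the sense of Definition \ref{Definition1}. The delicate case is when $X_0$ turns into a source node of $\mathcal{G}_{S\cup\{0\}}$ after marginalization (so $S'=\emptyset$), which can really happen when the original parents of $X_0$ are marginalized away. To close this gap I would either choose the marginalized DAG so that an edge is oriented into $X_0$ --- which is possible whenever $X_0$ remains connected to $S$, and is where the standing assumption that the covariates are neighbors of the target enters --- or, failing that, argue directly that $S_{\mathcal{F}}(X_0)=\emptyset\subseteq S$ by combining the general bound $S_{\mathcal{F}}(X_0)\subseteq pa_{X_0}(\mathcal{G}_0)$ from (\ref{subseteq_parents}) with the observation that a source representation of $X_0$ forces the plausible sets coming from its neighbors to be disjoint from its original parents. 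Making this alternative precise, while respecting the constraint $\mathcal{F}\subseteq\mathcal{I}_m$ that forbids structural functions constant in any argument, is the part of the argument that needs the most care.
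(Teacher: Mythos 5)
Your first two paragraphs are the paper's own proof, essentially verbatim: marginalizability to $S\cup\{0\}$ yields $f_0\in\mathcal{F}$ with $X_0=f_0(\textbf{X}_{\tilde S},\varepsilon_0)$ for some $\tilde S\subseteq S$, $\varepsilon_0\indep \textbf{X}_{\tilde S}$, $\varepsilon_0\sim U(0,1)$; inverting $f_0$ gives exactly condition (\ref{Definition_F_plausible}), so $\tilde S$ is $\mathcal{F}$-plausible and $S_{\mathcal{F}}(X_0)\subseteq\tilde S\subseteq S$. Your $S'$ is the paper's $\tilde S$, and up to this point the argument is correct and identical in route.

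Where you diverge is the third paragraph, and there the situation is more delicate than your proposed fixes acknowledge. You are right that there is a rough edge: Definition \ref{Definition1} defines plausibility only for non-empty sets, and the paper's proof simply writes ``for some $\tilde S\subseteq S$'' and declares $\tilde S$ plausible, tacitly assuming $X_0$ does not become a source of the marginalized DAG (an assumption satisfied in every use the paper makes of the lemma, e.g.\ Remark \ref{example158}). However, both of your repairs fail. Repair (a) --- re-choosing the marginalized DAG so that some edge points into $X_0$ --- is not a free choice: which orientations admit an $\mathcal{F}$-representation is a property of the joint distribution (this is the entire content of identifiability), and connectivity of $X_0$ to $S$ gives nothing when, say, $X_0\to X_j$ is an identifiable direction; also, the neighbor restriction is not a hypothesis of this lemma. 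Repair (b) is false as stated: take $\mathcal{F}=\mathcal{F}_A$, let $X_1\to X_0$ be an identifiable nonlinear additive noise model, and let $X_2\indep(X_0,X_1)$. Under the convention (used by the paper itself in Remark \ref{example158}) that source equations are admissible in an $\mathcal{F}$-model, $(X_0,X_2)$ is an $\mathcal{F}_A$-model with no edges, so the model is marginalizable to $\{0,2\}$ with $X_0$ a source; yet the only $\mathcal{F}_A$-plausible set is $\{1\}$, so $S_{\mathcal{F}_A}(X_0)=\{1\}$, no plausible set disjoint from $pa_{X_0}(\mathcal{G}_0)$ exists, and the lemma's conclusion $S_{\mathcal{F}}(X_0)\subseteq\{2\}$ itself fails under this literal reading. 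The upshot is that non-emptiness of $\tilde S$ cannot be \emph{proved} from the stated hypotheses; it must be read into the marginalizability hypothesis (or added explicitly), and once it is, your first two paragraphs already constitute the complete proof, exactly as in the paper.
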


\subsection{Hidden confounders}
\label{Section2_hidden_confounders}
We now discuss our framework under the presence of a hidden confounding. Let $\textbf{X} = (\textbf{X}^{obs}, \textbf{X}^{hid})$ denote observed and unobserved covariates, respectively, where $obs\cup hid=\{1, \dots, p\}, obs\cap hid = \emptyset$. 
By definition, the set of $\mathcal{F}$-identifiable parents of $Y$ can be written as
\begin{equation*}
S_\mathcal{F}(Y):= \bigcap_{\substack{S\subseteq obs , S\neq \emptyset\\{ \text{ S is an } \mathcal{F} \text{-plausible}}\\{\text{set of parents of Y}}  }}S.
\end{equation*}
In the presence of a hidden confounder, (\ref{subseteq_parents}) does not need to hold since $\textbf{X}^{hid}$ can create a spurious dependence between \(\varepsilon_Y\) and \(\textbf{X}_{pa_Y \cap obs}\). However, \(S_\mathcal{F}(Y)\) depends on the nature of this dependence, and (\ref{subseteq_parents}) might still hold, as suggested by the following lemma.

\begin{lemma}\label{lemma_hidden_confounder}

 Consider $(Y, \textbf{X})\in\mathbb{R}\times \mathbb{R}^p$  satisfy \eqref{SCM_for_Y} with  $\mathcal{F} = \mathcal{F}_A$. Consider $\emptyset\neq hid\subset pa_Y$. Let $S\subseteq pa_Y \cap obs$ and $\tilde{S}:=(pa_Y\cap obs)\setminus S$ such that
 $(\textbf{X}^{hid}, \textbf{X}_S)\indep \textbf{X}_{\tilde{S}}$ (one can consider that $\textbf{X}^{hid}$ cause $\textbf{X}_S$ and $Y$, but not $\textbf{X}_{\tilde{S}}$). 
 
 If $f_Y$ has a form \begin{equation*}
        f_Y(\textbf{x}, e) = h_1(\textbf{X}^{hid}, \textbf{X}_S) + h_2(\textbf{X}_{\tilde{S}}) + q^{-1}(e), \,\,\,\,\,\textbf{x}\in\mathbb{R}^{|pa_Y|}, e\in(0,1),
     \end{equation*}
for some continuous non-constant real functions  $h_1, h_2$ and a quantile function $q^{-1}$. Then, $S_{\mathcal{F}_A}(Y) \subseteq \tilde{S}\subset pa_Y$.
\end{lemma}

The proof is in \hyperref[Proof of lemma_hidden_confounder]{Appendix D}.  Lemma~\ref{lemma_hidden_confounder} demonstrates that we still obtain $S_{\mathcal{F}_A}(Y) \subseteq pa_Y$ as long as $\textbf{X}^{hid}$ affect $Y$ in a ``sufficiently simple'' way. In this case, ``sufficiently simple'' means that $f_Y$ can be splitted into the part affected by $\textbf{X}^{hid}$ and a part that is not affected by $\textbf{X}^{hid}$. Notice the discrepancy between Theorem~\ref{Theorem_in_section2} and Lemma~\ref{lemma_hidden_confounder}. The function \(f_Y\) should be ``sufficiently complicated'' in order to identify the observed parents but ``sufficiently simple'' to handle hidden confounders. 

\subsection{Non-additive example}\label{subsection3.3}

In many applications, the target variable of interest (e.g., income or salaries) follows a Pareto distribution. In such a case, assuming \( f_Y \in \mathcal{F}_F \) where \( F \) is the Pareto distribution can be a reasonable. This assumption, together with local causal sufficiency, is often enough for identifiability of the direct causes of $Y$. 

\begin{wrapfigure}{r}{2.7cm}
\includegraphics[scale=0.35]{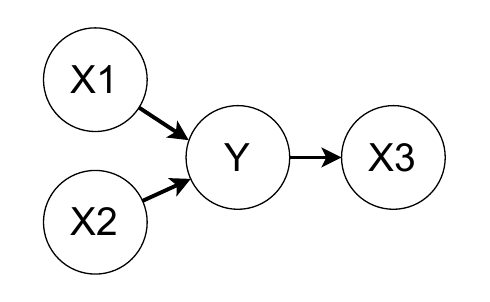}
\label{wfaef}
\end{wrapfigure} 
Consider, for instance, that the (unknown) ground truth is as follows: \( (Y, X_1, X_2, X_3) \) follows an SCM with the DAG on the right. Consider \( Y\mid\textbf{X}_{pa_Y} \) following a Pareto distribution \( F \) with parameter \( \theta(\textbf{X}_{pa_Y}) \). Then \( S_{\mathcal{F}_F}(Y) \subseteq \{1,2\} = pa_Y \).

Additionally, the equality \( S_{\mathcal{F}_F}(Y) = \{1,2\} = pa_Y \) holds under the assumptions of Proposition~\ref{LemmaOParetoinseparabilite} (e.g. if \( \theta(X_1, X_2) = h_1(X_1) + h_2(X_2) \)) and Proposition~\ref{proposition_for_pairwise_F_model_identifiability} (which holds in a  'typical' case, see Consequence~1 in \cite{bodik2023identifiability}).

We have demonstrated that assuming  \eqref{SCM_for_Y} typically allows us to identify the direct causes of $Y$. In the next section, we will show that, in practice, the identifiable direct causes can also be consistently estimated from a random sample.

\newpage

\section{Estimation}
\label{section_algorithm}

\subsection{ISD Algorithm}
We introduce two algorithms for estimating the set of direct causes of a target variable. The first is based on statistical testing of $\mathcal{F}$-plausibility, while the second is a score-based approach. Both methods rely on a random sample of size $n \in \mathbb{N}$ from $(Y, \textbf{X}) \in \mathbb{R} \times \mathbb{R}^p$, where $Y$ denotes the target variable, $\textbf{X}$ the covariates, and $\mathcal{F} \subseteq \mathcal{I}_m$ is the assumed functional class.

For a non-empty set $S\subseteq\{1, \dots, p\}$, define the hypothesis $$H_{0, S} (\mathcal{F}): S \text{   is an }\mathcal{F} \text{-plausible set of parents of Y}. $$ Suppose for the moment that a statistical test for $H_{0, S} (\mathcal{F})$ with size smaller than a significance level $\alpha$ is available. Then, we define $$\hat{S}_{\mathcal{F}}(Y):=\bigcap_{ \emptyset\neq S\subseteq\{1, \dots, p\}:   H_{0, S} (\mathcal{F}) \text{ is not rejected}}S$$ as an intersection of all sets for which  $H_{0, S} (\mathcal{F})$ was not rejected. 

In order to test  $H_{0, S} (\mathcal{F})$,  we propose a procedure called ISD (Independence $+$ Significance $ +$ Distribution). The idea is to decompose  $H_{0, S} (\mathcal{F})$ into three sub-hypothesis. In particular, $H_{0, S} (\mathcal{F})$ is true if and only if there exist a function $\hat{f}_S$ such that $\hat{\varepsilon}_S:=\hat{f}_S^{\leftarrow}(\textbf{X}_S, Y)$  satisfies: 
\begin{enumerate}
    \item   $H_{0, S}^I: \hat{\varepsilon}_S \indep \textbf{X}_S$, \noindent\hfill (\textbf{I}ndependence)
    \item   $H_{0, S}^S$: $\hat{f}_S\in\mathcal{F}$,  \noindent\hfill (\textbf{S}ignificance) \newline (recall that $\hat{f}_S$ must be minimal almost surely or in other words, all inputs are significant)
    \item    $H_{0, S}^D:$ $\hat{\varepsilon}_S \sim U(0,1)$. \noindent\hfill (\textbf{D}istribution)
\end{enumerate}

We reject  $H_{0, S} (\mathcal{F})$ if and only if one of  $H_{0, S}^I, H_{0, S}^S, H_{0, S}^D$  is rejected. 

\begin{theorem}
\label{theorem_consistency_ISD}
Let $(Y, \mathbf{X})$ satisfy \eqref{SCM_for_Y} with $pa_Y \neq \emptyset$.  Assume that the estimator $\hat{S}_{\mathcal{F}}(Y)$ is constructed as described above, using $\hat{f}_{pa_Y} = f_Y$ and valid tests $H_{0, S}^I, H_{0, S}^S, H_{0, S}^D$  for all $S\subseteq \{1, \dots, p\}$ at level $\alpha$ in a sense that for all $S$, $\sup_{P: H_{0, S}^\cdot \text{is true}}P(H_{0, S}^\cdot \text{ is rejected})\leq \alpha$ for all $\cdot \in \{S, I, D\}$. Then
\begin{equation*}
    P(\hat{S}_{\mathcal{F}}(Y) \subseteq pa_Y) \geq 1-3\alpha. 
\end{equation*}

Furthermore, suppose ${S}_{\mathcal{F}}(Y) = pa_Y$, and assume that all tests have non-zero power, i.e., $lim_{n\to\infty}P(H_{0, S}^\cdot \text{ is rejected}\mid  H_{0, S}^\cdot \text{ is false})=1$ for all $\cdot \in \{S, I, D\}$ and all $S\not\supseteq pa_Y$. Then, there exists an integer $n_0$ such that for all $n \geq n_0$, it holds that
\begin{equation*}
    P(\hat{S}_{\mathcal{F}}(Y) = pa_Y) \geq 1-3\alpha. 
\end{equation*}
\end{theorem}
The proof is in \hyperref[Proof of theorem_consistency_ISD]{Appendix D}.  
In order to find a suitable candidate for the function $\hat{f}_S$, we use classical methods from machine learning. 
If $\mathcal{F} = \mathcal{F}_A$ or $\mathcal{F} = \mathcal{F}_{LS}$, we can apply random forest, neural networks, GAM, or other classical methods \citep{GAM,GAMLSS, Paul2014StatisticallyII}. Using one of these methods, we estimate the conditional mean ${\mu}$ (and variance ${\sigma}$ in $\mathcal{F}_{LS}$ case) and output the residuals $\hat{\eta}_S:=Y -\hat{\mu}(\textbf{X}_S)$ (or $ \hat{\eta}_S:=\frac{Y -\hat{\mu}(\textbf{X}_S)}{\hat{\sigma}(\textbf{X}_S)}$ in $\mathcal{F}_{LS}$ case). Possibly, re-scale the residuals $\hat{\varepsilon}_S := \hat{q}(\hat{\eta}_S)$, where $\hat{q}$ is the empirical distribution function of $\hat{\eta}$ (see the discussion about $H_{0, S}^D$ below). If  $\mathcal{F} = \mathcal{F}_F$ for some distribution function $F$, we can use GAMLSS \citep{GAMLSS} or GAM algorithms for estimating $\theta$. Then, we define $\hat{\varepsilon}_S:=F\big(Y, \hat{\theta}(\textbf{X}_S)\big)$.

Notice that if the chosen method is consistent and \eqref{SCM_for_Y} holds, $\hat{f}_{pa_Y}$ converges to $f_{Y}$. Therefore, the choice $\hat{f}_{pa_Y} = f_Y$ in Theorem~\ref{theorem_consistency_ISD} is justified in large sample sizes. The following tests can be used for practical testing of  $H_{0, S}^I, H_{0, S}^S$ and $H_{0, S}^D$:

\begin{enumerate}
    \item $H_{0, S}^I$: We can use a kernel-based HSIC test \citep{Kernel_based_tests} or a copula-based test \citep{copula_based_independence_test}.
    \item $H_{0, S}^S$:  This test ensures that we do not include non-significant (and hence non-causal) covariates into an $\mathcal{F}$-plausible set. In practice, we test the alternative hypothesis \({H}_{0, S}^{S, \text{alt}}: \hat{f}_S \not\in \mathcal{F}\), and we reject \(H_{0, S}^S\) if and only if we do not reject \({H}_{0, S}^{S, \text{alt}}\). The reason is that many methods have been developed for testing \({H}_{0, S}^{S, \text{alt}}\). For example, in the case of linear regression \(Y = \beta \textbf{X}_S + \eta_S\), we test if \(\beta_i \neq 0\) for all \(i \in S\) via classical significance testing. Analogously for GAM or GAMLSS. Alternatively, we can use a permutation test to assess the significance of the covariates \citep{Paul2014StatisticallyII}.
    \item $H_{0, S}^D$: This step is only relevant when a specific noise distribution is assumed. However, the hypothesis \( H_{0, S}^D \) is automatically true in cases such as \(\mathcal{F} = \mathcal{F}_L\), \(\mathcal{F}_A\), or \(\mathcal{F}_{LS}\). In these instances, we omit this test. The reason is that we can use a probability integral transform of the estimated noise to obtain $\hat{\varepsilon}_S\sim U(0,1)$. However in cases such as  $\mathcal{F}=\mathcal{F}_F$, the integral transform breaks the condition $\hat{f}_S\in\mathcal{F}_F$ and testing for $\hat{\varepsilon}_S\sim U(0,1)$ is necessary.  \newline If we opt for testing $H_{0, S}^D:$ $\hat{\varepsilon}_S \sim U(0,1)$, we can use a Kolmogorov-Smirnov or Anderson-Darling test \citep{AD-KStest}.
\end{enumerate}

In our implementation, we opt for HSIC test, GAM estimation, and the Anderson-Darling test. We summarize the algorithm in case of $\mathcal{F} = \mathcal{F}_A$ as follows:

\begin{algorithm}[H]
  \SetAlgoLined
  \KwData{Random sample $(y_1, x_1^1, \dots, x_1^p), \dots, (y_n, x_n^1, \dots, x_n^p)$}
  \KwResult{ REJECT or NOT REJECT  }

1) Estimate \(\hat{f}_S\) in the model \(Y = f_S(\textbf{X}_S) + \eta_S\) (using GAM estimation, for example). Define \(\hat{\eta}_S := Y - \hat{f}_S(\textbf{X}_S)\).

2) Test \(\hat{\eta}_S \indep \textbf{X}_S\) at level \(\alpha\) (using the HSIC test, for example). Set \(H_{0, S}^I = \text{REJECT}\) if this test was rejected, otherwise set \(H_{0, S}^I = \text{NOT REJECT}\).

3) Set \(H_{0, S}^S = \text{NOT REJECT}\) if all covariates are significant at level \(\alpha\) in the model from step 1 (using the permutation test for covariate significance, for example). Otherwise, set \(H_{0, S}^S = \text{REJECT}\).

4) Automatically define \(H_{0, S}^D = \text{NOT REJECT}\) (this step is not relevant in the case \(\mathcal{F} = \mathcal{F}_A\)).

5) Return \(\text{NOT REJECT}\) if all \(H_{0, S}^I\), \(H_{0, S}^S\), and \(H_{0, S}^D\) were not rejected. Otherwise, return \(\text{REJECT}\).

\caption{Testing $H_{0, S} (\mathcal{F})$ in case of $\mathcal{F} = \mathcal{F}_A$ }
  \label{Algorithm}
\end{algorithm}

\subsection{Score-based estimation of \texorpdfstring{$pa_Y$}{}}\label{Section_algorithm2}
We propose a score-based algorithm for estimating the set of direct causes of $Y$. It is a local counterpart of score-based algorithms for estimating the full DAG $\mathcal{G}_0$, following the ideas from \cite{Score-based_causal_learning}, \cite{Peters2014}, and \cite{bodik2023identifiability}. Recall that 
under \eqref{SCM_for_Y}, the set $S=pa_Y$ should satisfy that $\varepsilon_S\indep \textbf{X}_S$, every ${X}_i, i\in S$ is significant and $\varepsilon_S\sim U(0,1)$. 
Therefore, we use the following score function: 
\begin{equation*}
\begin{split}
\widehat{pa}_Y =  \argmax_{\substack{S\subseteq\{1, \dots, p\}\\
                  S\neq\emptyset}}score(S) &= \argmax_{\substack{S\subseteq\{1, \dots, p\}\\
                  S\neq\emptyset}}\lambda_1 (Independence) + \lambda_2(Significance)+ \lambda_3 (Distribution),
\end{split}
\end{equation*}
where $\lambda_1, \lambda_2, \lambda_3\in [0, \infty)$,  ``\textit{Independence}'' is a measure of independence between $ (\hat{\varepsilon}_S, \textbf{X}_S)$, ``\textit{Significance}'' is a measure of significance of covariates $\textbf{X}_S$, and ``\textit{Distribution}'' is a distance between the distribution of $\hat{\varepsilon}_S$ and $U(0,1)$, where $\hat{\varepsilon}_S$ is the noise estimate defined in Section~\ref{section_algorithm}. 

\textit{The measure of independence} can be chosen as the p-value of the independence test (such as the kernel-based HSIC test or the copula-based test). 
\textit{The measure of significance} corresponds to the estimation method analogously to the ISD case. For linear regression (similarly for GAM or GAMLSS), we compute the corresponding p-values for the hypotheses $\beta_i=0, i\in S$.  Then, \textit{Significance} is the minus of the maximum of the corresponding p-values (worst case option). We can also use a permutation test to assess the covariate's significance in terms of the predictability power and choose the largest p-value. 
\textit{The distance between the distribution of} $\hat{\varepsilon}_S$ \textit{and} $U(0,1)$ can be chosen as the p-value of the Anderson-Darling test. 

The choice of $\lambda_1, \lambda_2, \lambda_3$ describes weights we put on each of the three scores: if $\lambda_1>\lambda_2, \lambda_3$, then our estimate will be very sensitive against the violation of the independence $\varepsilon_S\indep \textbf{X}_S$, but not as sensitive against the violation of the other two properties. 

In our implementation, we opt for the following choices. The \textit{Independence} term is the logarithm of the p-value of the Kernel-based HSIC test, and the \textit{Distribution} term is the logarithm of the p-value of the Anderson-Darling test. We use GAM for the estimation of $\hat{f}_S$ and minus the logarithm of the maximum of the corresponding p-values for the \textit{Significance} term. The logarithmic transformation of the three p-values is used to re-scale the values from $[0,1]$ to $(-\infty, 0]$. The practical choice for the weights is   $\lambda_1 = \lambda_2 = \lambda_3 =1 $ (unless $\mathcal{F}=\mathcal{F}_L,\mathcal{F}_A$, or $\mathcal{F}_{LS}$ when we put $\lambda_3 = 0$). 

\subsection{Consistency}

Consistency of the proposed algorithm follows from the results presented in \cite{reviewANMMooij}, who showed consistency of the score-based DAG estimation for additive noise models. In the following, we consider $\mathcal{F} = \mathcal{F}_A$, although it is straightforward to generalize these results for other types of $\mathcal{F}$ (for a discussion about $\mathcal{F} = \mathcal{F}_{LS}$, see \cite{sun2023causeeffect}, and for $\mathcal{F} = \mathcal{F}_F$, see \cite{bodik2023identifiability}). For simplicity, we assume that the measure of independence is the negative value of HSIC test itself (not its p-value as we use in our implementation), and the estimate $\hat{f}_S$ is \textit{suitable} in the sense that noise estimate $\hat{\varepsilon} = \hat{f}_S^\leftarrow(\textbf{X}_S, Y)$ satisfies 
$$
\lim_{n\to\infty}\mathbb{E}_{}\bigg( \frac{1}{n}\sum_{i=1}^n(\varepsilon_i-\hat{\varepsilon}_i)^2   \bigg) = 0,
$$
where the expectation is taken with respect to the distribution of the random sample  \cite[Appendix A.2]{reviewANMMooij}. 

\begin{proposition}\label{Proposition_consistency}
Consider $\mathcal{F} = \mathcal{F}_A$ and 
let $(Y, \textbf{X})\in\mathbb{R}\times \mathbb{R}^p$ follow an SCM with DAG $\mathcal{G}_0$ satisfying \eqref{SCM_for_Y}. Assume that every $S \neq pa_Y\neq \emptyset$ is not $\mathcal{F}$-plausible. 
Then,   

\begin{equation}
  \lim_{n\to\infty} \mathbb{P}(\widehat{pa}_Y  \neq pa_Y) = 0,
\end{equation}   
where $n$ is the size of the random sample and $\widehat{pa}_Y$ is our score-based estimate from Section~\ref{Section_algorithm2} with $\lambda_1, \lambda_2>0, \lambda_3 = 0$, suitable estimation procedure, and HSIC independence measure. 
\end{proposition}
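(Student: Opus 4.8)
The plan is to exploit the finiteness of the search space together with a population-level separation argument. Since the maximization in Section~\ref{Section_algorithm2} ranges over the $2^p-1$ nonempty subsets $S\subseteq\{1,\dots,p\}$, a union bound reduces the claim to showing, for each fixed $S\neq pa_Y$, that $\mathbb{P}\big(score_n(S)\geq score_n(pa_Y)\big)\to 0$, where $score_n$ denotes the empirical score computed from the sample of size $n$. Summing over the finitely many competitors then gives $\mathbb{P}(\widehat{pa}_Y\neq pa_Y)\to 0$. Two ingredients are needed: (i) a population score $score_\infty$ that is strictly maximized at $pa_Y$, with a uniform gap $\delta>0$ over all $S\neq pa_Y$; and (ii) uniform convergence $score_n(S)\to score_\infty(S)$ in probability over the finite collection of candidate sets.

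For the population analysis I would use the characterization of $\mathcal{F}_A$-plausibility in terms of the conditional mean. For $\mathcal{F}_A$ one has $f^\leftarrow(\textbf{X}_S,Y)=q\big(Y-\mu(\textbf{X}_S)\big)$, and if any additive $\mu$ makes $Y-\mu(\textbf{X}_S)\indep\textbf{X}_S$ then $\mathbb{E}[Y-\mu(\textbf{X}_S)\mid\textbf{X}_S]$ is constant, so necessarily $\mu(\cdot)=\mathbb{E}[Y\mid\textbf{X}_S=\cdot]$ up to an additive constant. Hence, with $\eta_S:=Y-\mathbb{E}[Y\mid\textbf{X}_S]$, the set $S$ is $\mathcal{F}_A$-plausible precisely when $\eta_S\indep\textbf{X}_S$ and the conditional mean is non-constant in each of its arguments (the $\mathcal{I}_m$/significance requirement, cf. Definition~\ref{Definition1}). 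Taking the population Independence term to be $-HSIC(\eta_S,\textbf{X}_S)$, which is $0$ iff $\eta_S\indep\textbf{X}_S$ by the characteristic-kernel property, and the population Significance term to be maximal exactly when every covariate genuinely enters the conditional mean, I would argue: at $S=pa_Y$ we have $\eta_{pa_Y}=q^{-1}(\varepsilon_Y)$ with $\varepsilon_Y\indep\textbf{X}_{pa_Y}$ by (\ref{SCM_for_Y}), so the Independence term attains its maximum $0$, and since $f_Y\in\mathcal{F}_A\subseteq\mathcal{I}_m$ the mean is non-constant in each parent, so Significance is maximal too; thus $score_\infty(pa_Y)$ is the global maximum. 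For every $S\neq pa_Y$, non-plausibility forces either $HSIC(\eta_S,\textbf{X}_S)>0$ or a covariate that does not contribute to the conditional mean, so with $\lambda_1,\lambda_2>0$ we get $score_\infty(S)<score_\infty(pa_Y)$; finiteness of the collection yields $\delta:=\min_{S\neq pa_Y}\big(score_\infty(pa_Y)-score_\infty(S)\big)>0$.

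For the convergence in (ii) I would invoke the consistency machinery of \cite{reviewANMMooij}. The \emph{suitability} assumption $\lim_n\mathbb{E}\big(\tfrac1n\sum_i(\varepsilon_i-\hat\varepsilon_i)^2\big)=0$ gives $\hat\varepsilon_S\to\varepsilon_S$ in the appropriate $L^2$ sense, and because the empirical HSIC is a continuous and bounded functional of the joint empirical distribution, $\widehat{HSIC}_n(\hat\varepsilon_S,\textbf{X}_S)\to HSIC(\eta_S,\textbf{X}_S)$ in probability; together with consistency of the significance estimate (the corresponding regression or permutation statistic converging to its population counterpart) this yields $score_n(S)\xrightarrow{p}score_\infty(S)$ for each of the finitely many $S$. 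On the event $\{\max_S|score_n(S)-score_\infty(S)|<\delta/2\}$, whose probability tends to $1$, the ordering of the population scores is preserved and hence $\widehat{pa}_Y=pa_Y$, which proves the claim.

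The step I expect to be the main obstacle is establishing (ii) rigorously, specifically the convergence of the HSIC evaluated at the \emph{estimated} residuals $\hat\varepsilon_S$ rather than the true residuals; this is where one must combine the $L^2$-consistency of the noise estimate with the stability of the HSIC statistic, and it is precisely the part delegated to the ANM consistency results of \cite{reviewANMMooij}. A secondary but conceptually delicate point is the population equivalence of the second paragraph: one must ensure the Significance term genuinely penalizes supersets $S\supsetneq pa_Y$ (for which the Independence term alone may vanish when the extra covariates are independent of $(Y,\textbf{X}_{pa_Y})$), so that the hypothesis ``every $S\neq pa_Y$ is not $\mathcal{F}_A$-plausible'' translates into a strict population gap through the $\lambda_2$-weighted term.
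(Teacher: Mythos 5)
Your proposal takes essentially the same route as the paper's proof: a population-level separation in which $HSIC(\textbf{X}_S,\varepsilon_S)>0$ for every non-plausible $S$ while $HSIC(\textbf{X}_{pa_Y},\varepsilon_{pa_Y})=0$ (Lemma~12 of \cite{reviewANMMooij}), combined with convergence of the empirical HSIC evaluated at the \emph{estimated} residuals under the suitability assumption (Theorem~20 of \cite{reviewANMMooij}), with the Significance term covering sets that fail $\mathcal{F}_A$-plausibility only through the minimality requirement. The two obstacles you flag are exactly where the paper itself leans on external results or is deliberately informal: the HSIC-at-estimated-residuals step is delegated wholesale to Theorem~20 of \cite{reviewANMMooij}, and the Significance term for supersets $S\supsetneq pa_Y$ is handled ``somewhat vaguely'' by the paper's own footnoted admission (it only uses that the term converges for $pa_Y$), so your explicit union bound, uniform gap $\delta$, and conditional-mean characterization of the $\mathcal{F}_A$ residual are refinements of, not departures from, the paper's argument.
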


The proof is in \hyperref[Proof of Proposition_consistency]{Appendix D}.  If several sets are $\mathcal{F}$-plausible, the score-based algorithm provides no guarantees that the set $S=pa_Y$ will have the best score among them (as opposed to the ISD algorithm that outputs their intersection). 

\subsection{What is a suitable \texorpdfstring{$\mathcal{F}$}{} in practice?}

The choice of the class $\mathcal{F}$ is a crucial step in the algorithm. The choice of an appropriate model is a common problem in classical statistics; however, it is more subtle in causal discovery. It has been shown \citep{Peters2014} that methods based on restricted structural equation models can outperform traditional methods (these results were shown when estimating the entire DAG). Even if assumptions such as additive noise or Gaussian distribution of the effect given the causes can appear to be strong, such methods have turned out to be rather useful, and small violations of the model still lead to a good estimation procedure. 

The size of  $\mathcal{F}$ is the most important part. If $\mathcal{F}$ contains too many functions, we find that most sets are  $\mathcal{F}$-plausible. On the other hand, overly restrictive  $\mathcal{F}$ can lead to rejecting all sets as potential causes.  If we have knowledge about the data-generation process (such as when $Y$ is a sum of many small events), choosing $\mathcal{F}_F$ for a distribution function $F$ (such as Gaussian) is reasonable. For the choices $\mathcal{F} = \mathcal{F}_A$ or $\mathcal{F}_{LS}$,  there are numerous papers justifying such assumptions in several settings (when estimating the full DAG, \cite{reviewANMMooij}, \cite{Elements_of_Causal_Inference},  \cite{immer2022identifiability}).

\section{Simulation studies and case study}
\label{section_simulations}

We conducted a simulation study to evaluate the performance of the algorithms. These simulations can be found in \hyperref[section_appendix_simulations]{Appendix} \ref{section_appendix_simulations}. The first simulation study (in \hyperref[section_appendix_simulations]{Appendix} \ref{Section_Additive})  demonstrates a congruence between theoretical insights and the outcomes obtained from simulations.  The second simulation study (in \hyperref[section_appendix_simulations]{Appendix} \ref{Section_benchmarks}) involves a comparison between our algorithms and classical approaches using three benchmark datasets.  Table~\ref{Table_results} provides a concise overview of the results.

\begin{table}[h]
\centering
\begin{tabular}{|l|c|c|c|c}
\cline{1-4}
\multirow{2}{*}{}        & \textbf{First}     & \textbf{Second}                         & \textbf{Third}                          & \textbf{Total}  \\
                         & \textbf{Benchmark} & \multicolumn{1}{l|}{\textbf{Benchmark}} & \multicolumn{1}{l|}{\textbf{Benchmark}} & \textbf{(Mean)} \\ \hline
\textbf{IDE algorithm}   & 98\%/ 100\%        & 82\%/ 100\%                             & 72\%/ 100\%                             & 84\%/ 100\%     \\ \cline{1-4}
\textbf{Score algorithm} & 100\%/ 100\%       & 98\%/ 100\%                             & 92\%/ 88\%                              & 93\%/ 96\%      \\ \cline{1-4}
RESIT                    & 52\%/ 18\%         & 36\%/ 100\%                             & 2\%/ 94\%                               & 30\%/ 71\%      \\ \cline{1-4}
CAM-UV                      & 96\%/ 40\%         &  2\%/ 100\%                             & 0\%/ 100\%                              & 32\%/ 80\%      \\ \cline{1-4}
Pairwise bQCD            & 100\%/ 0\%         & 56\%/ 100\%                             & 80\%/ 26\%                              & 78\%/ 42\%      \\ \cline{1-4}
Pairwise IGCI            & 0\%/ 48\%          & 0\%/ 100\%                              & 70\%/ 34\%                              & 17\%/ 70\%      \\ \cline{1-4}
Pairwise Slope           & 100\%/ 2\%         & 100\%/ 100\%                            & 100\%/ 22\%                             & 100\%/ 41\%     \\ \cline{1-4}
\end{tabular}
\caption{Performance of different algorithms on the three benchmark datasets. The first number represents the ``percentage of discovered correct causes'', and the second is the ``percentage of no false positives''. All information can be found in \hyperref[section_appendix_simulations]{Appendix} \ref{section_appendix_simulations}.   }
\label{Table_results}
\end{table}

To illustrate our methodology using a real-world example, we consider data on the fertility rate. This example was also used by \cite{Christina}, who employed the ICP methodology. We show that our results are in line with the findings of  \cite{Christina}, while we drop the assumption of different environments and consider only one environment.

The target variable of interest is $Y = \text{`Fertility rate'},$ measured yearly in more than 200 countries.  Developing countries exhibit a significantly higher fertility rate than Western countries \citep{hirschman1994}. The fertility rate can be predicted by considering covariates such as the `infant mortality rate' or `GDP.' However, if one wants to explore the potential effect of a particular law or a policy change, it becomes necessary to leverage the causal knowledge of the underlying system.

Randomized studies are not possible to design in this context since factors like `infant mortality rate' cannot be isolated for manipulation. Even so, understanding the impact of policies to reduce infant mortality rates within a country remains an important question, even if randomized studies are unfeasible.

Here, we consider covariates $\textbf{X} = (X_1, X_2, X_3, X_4)^\top$, where $X_1=$`GDP (in US dollars)', $X_2 = $`Education expenditure (\% of GDP)', $X_3 = $`Infant mortality rate (infant deaths per 1,000 live births)', $X_4 = $`Continent'. The data are taken from \cite{worldbank_data, unitednations_data_about_fertility}. 

We apply the methodology developed in this paper to estimate the causes of $Y$. Since our variables are continuous and regular, it seems natural and justifiable to use the following choices for $\mathcal{F}$:  $\mathcal{F}_A, \mathcal{F}_{F}$ or $ \mathcal{F}_{LS}$, where $F = Gaussian$. Note that $\mathcal{F}_A, \mathcal{F}_F\subset\mathcal{F}_{LS}$.

For the choice $\mathcal{F} = \mathcal{F}_A$, we observe that all sets $S\subseteq \{1, \dots, 4\}$ are strongly rejected as $\mathcal{F}$-plausible and our estimate is an empty set. Our data show heteroschedasticity and much more complex relations than those that can be described by just one parameter (the mean).  

Applying our methodology with the choices $\mathcal{F}_{LS}$ and $ \mathcal{F}_{F}$, we obtain the results described in Table~\ref{Table_application}. The results suggest that $X_3$ is the identifiable cause of $Y$. This is in line with findings from \cite{Christina} (backed up by research from sociology in \cite{hirschman1994}), who also discovered the variable $X_3$ to be causal. Furthermore, the score-based estimate  indicates that $X_2$ is a member of $\widehat{pa}_Y$ across both selections of $\mathcal{F}$. This suggests that $X_2$ is a cause of $Y$ as well, even though the score-based estimate does not have the same guarantees as the set $\hat{S}_{\mathcal{F}}(Y)$. Note that sets $\{2,3\}, \{1,2,3\}$ are $\mathcal{F}$-plausible for both choices of $\mathcal{F}$. 

Explaining changes in fertility rate is still a topical issue. In our study, we focus on using our developed framework to provide data-driven answers about the potential causes of changes in fertility rates. While models for the fertility rate often have a DAG structure when dynamics are measured, marginalizing to a cross-section may produce relationships which violate the acyclicity constraint \citep{koyama2022how}. This application serves as an illustration of our methodology, while we do not discuss validity of a DAG structure of the variables measured. Moreover the findings rely on the local causal sufficiency of $Y$, an assumption that can surely be questioned. For instance, other variables such as `religious beliefs' or a `political situation' may explain the fertility rate, but are hard to measure.

\begin{table}[h]
\centering
\begin{tabular}{|c|c|c|c|}
\hline
\multirow{2}{*}{$\mathcal{F}$}       & \multirow{2}{*}{$\mathcal{F}$-plausible sets} & ISD estimate of the                                       & Score-based\\
                                     &                                               & $\mathcal{F}$-identifiable set $\hat{S}_{\mathcal{F}}(Y)$ &                                               estimate of $\widehat{pa}_Y$\\ \hline
$\mathcal{F}_{LS}$                   & \{2,3\}, \{3,4\}, \{1,2,3\}, \{1,3,4\}        & \{3\}                                                     & \{2,3\}                                       \\ \hline
$\mathcal{F}_{F}$& \{2,3\}, \{2,3,4\}, \{1,2,3\}, \{1,3,4\}      & \{3\}                                                     & \{1,2,3\}                                     \\ \hline
\end{tabular}
\caption{Estimated causal predictors of fertility rate under different function classes $\mathcal{F}$. Here, $F$ denotes the Gaussian distribution.}
\label{Table_application}
\end{table}

\section{Discussion and future work}

In this work, we studied the problem of estimating the direct causes of a target variable \( Y \). We introduced a general framework that leverages identifiability theory for full causal graphs \( \mathcal{G} \) in a localized setting. This allows for more flexible and scalable applications of causal discovery. 

Several avenues for future work remain. It would be valuable to adapt other causal discovery methods, such as IGCI or those based on Kolmogorov complexity \citep{IGCI, Natasa_Tagasovska}, to the local setting. Similarly, extensions of our framework to time series data \citep{bodik, bodik2024grangercausalityextremes} are a natural next step. Ideas from recent advances in the invariance framework, such as defining sets and simultaneous false discovery bounds \citep{Christina, li2024simultaneousICP}, could also be incorporated into our ISD algorithm to improve its power and computational efficiency. Our local framework is also well-suited for prediction under distribution shift, such as covariate shift or extrapolation \citep{jin2024reweightingpredictiverolecovariate, bodik_extrapolation}, where identifying direct causes enables more robust predictions. Furthermore, leveraging tools such as instrumental variables \citep{Imbens2014} or anchor regression \citep{10.1111/rssb.12398} may offer a principled way to address unobserved confounding and improve robustness in complex settings. 

A central limitation of our approach lies in the choice of the functional class $\mathcal{F}$. As in most causal discovery methods, such as those based on LiNGAM, post-nonlinear models, or location-scale assumptions, the validity and interpretability of the results depend critically on how well the chosen model class matches the underlying data-generating process. While the computational complexity may increase with the dimensionality of $\textbf{X}$, especially when testing many subsets, this also opens opportunities for improvement through scalable algorithms, dimension reduction techniques, or regularization strategies tailored to the local setting.

Overall, the theory developed in this work contributes to a deeper understanding of causal structure and the fundamental limitations of purely data-driven approaches to causal inference. We believe that causal discovery on a local scale provides a promising path toward practical applications, particularly in high-dimensional settings, and this work takes an important step toward making such methods more accessible, interpretable, and robust.

\section*{Acknowledgment}
This study was supported by the Swiss National Science Foundation, grant number 201126.

\section*{Supplementary material}
\label{SM}
The supplementary material contains some detailed definitions and more technical explanations of the theory omitted from the main text for clarity, simulation study, some auxiliary lemmas, and all the technical details and proofs.

The code and data are available in the online repository \url{https://github.com/jurobodik/Structural-restrictions-in-local-causal-discovery.git} or on request from the author.

\bibliography{bibliography}
\newpage
\appendix
\section{Appendix: omitted definitions and technical details} \label{Speci_appendix}
Appendix~\ref{Speci_appendix} contains some detailed definitions and more technical explanations of the theory omitted from the main text. It has 6 parts:

\begin{itemize}
    \item Appendix~\ref{Appendix_A.1.} introduces the concepts of invertibility and minimality, defines a class of invertible causal models, and demonstrates the connection between minimal link functions and causal minimality.
    \item Appendix~\ref{Appendix_A_Lemma} explores unidentifiability in linear structural causal graphs under hidden confounding.
    \item Appendix~\ref{Appendix_restricted_additive_noise_model} provides the definition of a restricted additive noise model from \citep{Peters2014}.
    \item Appendix~\ref{Appendix_pairwise_identifiability} extends Proposition~\ref{TheoremFidentifiabilityWithChild} from Section~\ref{section_general_implies_local_identifiability} to a general functional class $\mathcal{F}$.
    \item Appendix~\ref{Appendix_support} explores $\mathcal{F}$-plausibility when the support of $Y$ is finite. 
    \item Appendix~\ref{Appendix_location_scale_definition} adapts Theorem~\ref{Theorem_in_section2} for non-additive functional classes $\mathcal{F}$.
\end{itemize}

\subsection{Class of invertible and minimal functions, invertible causal model, and causal minimality}
\label{Appendix_A.1.}

First, let us formally introduce the notions of invertibility and minimality of a real function. Next, we define a class of measurable functions $\mathcal{I}_m$ and a subclass of SCM called invertible causal models. We show that minimality of a link function is equivalent with causal minimality of the causal model. 

\begin{definition}[Invertibility]\label{I}
Let $\mathcal{X}_x\subseteq\mathbb{R}^{p},\mathcal{X}_y\subseteq\mathbb{R}, \mathcal{X}_z\subseteq\mathbb{R}$ be measurable sets. A measurable function $f:\mathcal{X}_x\times\mathcal{X}_y\to \mathcal{X}_z$  is called \textbf{invertible for the last element}, notation $f\in \mathcal{I}$, if there exists a function $f^{\leftarrow}:\mathcal{X}_x\times\mathcal{X}_z\to \mathcal{X}_y$ that fulfills the following: $\forall \textbf{x}\in\mathcal{X}_x, \forall y\in\mathcal{X}_y,z\in\mathcal{X}_z$ such that $y=f(\textbf{x},z)$, then $z=f^{\leftarrow}(\textbf{x},y)$. 
\end{definition}
The previous definition indicates that the element $z$ in a relationship $y=f(\textbf{x},z)$ can be uniquely recovered from $(\textbf{x},y)$. To provide an example, for the function $f(x,z) = x+z$, it holds that $f^{\leftarrow}(x, y) = y-x$, since $f^{\leftarrow}(x, f(x,z)) = f(x,z) - x = z$. More generally, for the additive function defined as $f(\textbf{x},z) = g_1(\textbf{x})+ g_2(z)$, where $g_2$ is invertible, it holds that $f^{\leftarrow}(\textbf{x}, y) = g_2^{-1}(y - g_1(\textbf{x}))$, $\textbf{x}\in\mathbb{R}^d, y, z\in \mathbb{R}$. Overall, if $f$ is differentiable and the partial derivative of $f(\textbf{x},z)$ with respect to $z$ is monotonic, then $f\in\mathcal{I}$ (follows from inverse function theorem). 

\begin{definition}[Minimality]
    We say that a function $f:\mathbb{R}^n\to\mathbb{R}$ \textit{is minimal almost surely}, notation $f\in\mathcal{M}$, if there does not exist a function $g:\mathbb{R}^{n-1}\to \mathbb{R}$ and $k\leq n$, such that $f(x_1, \dots, x_n) = g(x_1, \dots, x_{k-1}, x_{k+1}, \dots, x_n)$ for almost all $ \textbf{x}\in\mathbb{R}^n$ in the support of $f$. Recall that the notion 'almost all' represents the fact that the measure of a set $\{\textbf{x}\in\mathbb{R}^n: f(x_1, \dots, x_n) \neq g(x_1, \dots, x_{k-1}, x_{k+1}, \dots, x_n)\}$ has a Lebesgue measure zero. 
\end{definition}

\begin{definition}[Invertible causal model]
 We denote the set of invertible and almost surely minimal functions by
$$
\mathcal{I}_{m} = \{f\in\mathcal{I}\cap \mathcal{M} \}.
$$
We define the \textbf{ICM} (invertible causal model) as a SCM (\ref{definition_general_SCM}) with structural equations satisfying $f_i\in\mathcal{I}_{m}$ for all $i=0, \dots, p$.  
\end{definition}

Note that $f_i\in\mathcal{I}_{m}$ implies causal minimality of the ICM model, as the following lemma suggests. Recall that a distribution $P_\mathbf{X}$ over $\textbf{X}$ satisfies \textbf{causal minimality} with respect to $\mathcal{G}$ if it is Markov with respect to $\mathcal{G}$, but not to any proper subgraph of $\mathcal{G}$. 

\begin{lemma}\label{Lemma_about_ICM_minimality}
Consider a distribution generated by \hyperref[I]{ICM} with graph $\mathcal{G}_0$ (see Definition  \ref{I}). Let all structural equations $f_j\in\mathcal{I}_{m}$, $\forall j=0, \dots, p$.  Then, the distribution is causally minimal with respect to  $\mathcal{G}_0$. Conversely, if $f_j\not \in\mathcal{M}$ for some $j\in\{0, \dots, p\}$, then the causal minimality is violated. 
\end{lemma}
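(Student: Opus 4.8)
The plan is to prove both implications through the standard reduction of causal minimality to a single-edge conditional independence, combined with the defining property of $\mathcal{I}$ (recoverability of the noise). First I would recall that the distribution generated by any SCM with jointly independent noises is Markov with respect to its own graph $\mathcal{G}_0$, as is standard. Since the Markov property is preserved when edges are \emph{added}, if $P$ were Markov with respect to a proper subgraph $\mathcal{G}'\subsetneq\mathcal{G}_0$, then picking any edge $i\to j\in\mathcal{G}_0\setminus\mathcal{G}'$ we have $\mathcal{G}'\subseteq\mathcal{G}_0-(i\to j)$, so $P$ is also Markov with respect to $\mathcal{G}_0$ with that single edge deleted. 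Hence causal minimality is \emph{equivalent} to the statement that for every edge $i\to j$ of $\mathcal{G}_0$,
\[
X_j \not\indep X_i \mid \mathbf{X}_{pa_j(\mathcal{G}_0)\setminus\{i\}}.
\]
Indeed, in $\mathcal{G}_0-(i\to j)$ the node $i$ is still a non-descendant of $j$ (deleting an edge cannot create a path $j\to\cdots\to i$, which with $i\to j$ would be a cycle in $\mathcal{G}_0$) and is no longer a parent of $j$, so the local Markov property forces the displayed independence. This equivalence is the backbone of the argument.

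For the converse (the easy direction), suppose $f_j\in\mathcal{I}\setminus\mathcal{I}_m$ for some $j$. Because $f_j\in\mathcal{I}$ is invertible, hence non-constant, in its noise argument, the coordinate in which it is constant must be a parent slot, say the $X_i$ one with $i\in pa_j(\mathcal{G}_0)$. Then $X_j=f_j(\mathbf{X}_{pa_j},\eta_j)$ does not genuinely depend on $X_i$, so the same SCM with the edge $i\to j$ removed (same functions, same noises) produces the identical distribution and is Markov with respect to the proper subgraph $\mathcal{G}_0-(i\to j)$. Thus causal minimality fails.

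For the forward direction, assume all $f_j\in\mathcal{I}_m$ and fix an edge $i\to j$. I would condition on $\mathbf{X}_{pa_j\setminus\{i\}}=\mathbf{x}_{-i}$; since $\eta_j$ is by construction independent of $\mathbf{X}_{pa_j}$, its law is unchanged and within this slice $X_j=f_j(x_i,\mathbf{x}_{-i},\eta_j)$ with $\eta_j\indep X_i$. Invertibility in the noise makes $\eta\mapsto f_j(x_i,\mathbf{x}_{-i},\eta)$ a bijection, so the conditional law of $X_j$ given $X_i=x_i$ is the injective pushforward of $P_{\eta_j}$ through this map; in the canonical monotone representation $\varepsilon_j\sim U(0,1)$ used throughout the paper this map is exactly the conditional quantile function $\varepsilon\mapsto f_j(x_i,\mathbf{x}_{-i},\varepsilon)$. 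If $X_j\indep X_i\mid\mathbf{X}_{-i}$ held, these conditional quantile functions would coincide for almost every $x_i$ in the support, i.e. $f_j(x_i,\mathbf{x}_{-i},\cdot)=f_j(x_i',\mathbf{x}_{-i},\cdot)$, which says $f_j$ is constant in its $X_i$ argument, contradicting $f_j\in\mathcal{I}_m$. Running this over all edges yields causal minimality.

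The hard part will be exactly this last translation: pointwise non-constancy of $f_j$ in a coordinate does not by itself force conditional dependence, since for a merely injective, non-monotone $f_j$ two parent values could induce the \emph{same} pushforward law (e.g. a noise sign-flip), washing out the dependence. This is why the rigorous definition of $\mathcal{I}_m$ in this appendix must be read distributionally, as non-constancy on a positive-measure set that is not undone by the noise map, and why the monotone-in-noise (conditional-quantile) representation is the right vehicle: there, equality of conditional laws genuinely forces equality of the structural function in that coordinate. I would also take care with null sets (the conditional statements hold for a.e.\ $\mathbf{x}_{-i}$) and with the existence of the conditional quantiles, reducing the general $f_j\in\mathcal{I}$ to this canonical monotone representation at the outset.
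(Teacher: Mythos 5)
Your converse direction is the same as the paper's, which simply invokes Proposition 4 in \cite{Peters2014}: constancy of $f_j$ in a parent slot (the noise slot is excluded since $f_j\in\mathcal{I}$ forces injectivity, hence non-constancy, in $\eta_j$) lets the same functions and noises define a sub-SCM, whence Markovianity with respect to a proper subgraph. For the forward direction your route differs genuinely in execution. The paper picks an edge $i\to j$ present in $\mathcal{G}_0$ but absent from the violating subgraph $\tilde{\mathcal{G}}$, \emph{asserts} that the distribution then admits a structural equation $X_j=\tilde f_j(\textbf{X}_{pa_j\setminus\{i\}},\varepsilon_j)$ driven by the \emph{same} noise $\varepsilon_j$, and concludes from the pathwise equality $f_j(x_i,\textbf{x},\varepsilon_j)=\tilde f_j(\textbf{x},\varepsilon_j)$ that $f_j$ is constant in its first argument, contradicting $\mathcal{I}_m$. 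You instead reduce the failure of minimality to the edge-wise criterion $X_j\not\indep X_i\mid \textbf{X}_{pa_j\setminus\{i\}}$ (your subgraph-to-supergraph Markov argument and the acyclicity observation that $i$ stays a non-descendant of $j$ are correct) and then compare the conditional laws of $X_j$ across values of $x_i$. This buys rigor at the graphical end, since Markovianity with respect to $\tilde{\mathcal{G}}$ does not by itself yield an SCM representation, let alone one with the identical $\varepsilon_j$ --- a step the paper leaves unjustified.

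The two arguments then meet at the same crux, and you have located it precisely: equality of conditional laws forces equality of the structural maps only under a monotone-in-noise (quantile) normalization. But your proposed repair --- pass WLOG to the conditional-quantile representation at the outset --- is not innocent, because the re-representation can exit $\mathcal{I}_m$ exactly in the problematic cases. Take $f_j(x,\eta)=s(x)\,\eta$ with $s=\pm1$ non-constant and $\eta_j$ symmetric: this $f_j$ is invertible and pointwise non-constant in both arguments, so it lies in $\mathcal{I}_m$ under the literal reading of Definition \ref{I}, yet $X_j\indep X_i$ and the quantile representation is constant in $x$. Under that literal reading the forward implication itself fails on this example; the paper's proof breaks at the same spot (no same-noise $\tilde f_j$ exists there, so its asserted equation is unavailable), and your WLOG step papers over the identical difficulty. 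Your closing observation --- that ``not constant in any argument'' must be read distributionally, as non-constancy of the induced conditional law --- is exactly the reading under which both the lemma and your argument (and the paper's) go through; you make explicit a hypothesis the paper's proof uses tacitly, but you should state that strengthened definition up front rather than invoke the monotone reduction ``WLOG.''
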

\begin{proof}
The second claim follows directly from Proposition 4 in \cite{Peters2014}. For the first claim, we use a similar approach as in Proposition 17 from \cite{Peters2014}. 

Let $f_j\in\mathcal{I}_{m}$ for all $j=0, \dots, p$ and let the causal minimality be violated, i.e., let $\tilde{\mathcal{G}}$ be a subgraph of $\mathcal{G}_0$ such that the distribution is Markov wrt $\tilde{\mathcal{G}}$. Find $i,j\in\mathcal{G}_0$ such that $i\to j$ in $\mathcal{G}_0$ but $i\not\to j$ in $\tilde{\mathcal{G}}$. 

In graph $\mathcal{G}_0$ we have a structural equation  $X_j = f_j(\textbf{X}_{pa_j(\mathcal{G}_0)}, \varepsilon_j) = f_j(X_i, \textbf{X}_{pa_j(\mathcal{G}_0)\setminus\{i\}}, \varepsilon_j)$  but in $\tilde{\mathcal{G}}$ we have $X_j = \tilde{f}_j(\textbf{X}_{pa_j(\mathcal{G}_0)\setminus\{i\}}, \varepsilon_j)$. Hence, functions $f_j(X_i, \textbf{X}_{pa_j(\mathcal{G}_0)\setminus\{i\}}, \varepsilon_j)$ and $\tilde{f}_j( \textbf{X}_{pa_j(\mathcal{G}_0)\setminus\{i\}}, \varepsilon_j)$ have to be equal almost surely, which contradicts $f_j\in\mathcal{I}_{m}$. 
\end{proof}

\subsection{ Lemma~\ref{LemmaAboutUnidentifiabilityFL} modified for hidden confounders}
\label{Appendix_A_Lemma}

In the following, we use the notion of m-separability, a generalization of d-separability for mixed-type graphs. For details see \cite{Richardson}. Moreover, we say that a node in a graph is a source node, if all edges associated to the node are directed out-going edges (i.e. only $v\to \cdot$ are allowed). 

\begin{lemma}\label{LemmaAboutUnidentifiabilityFL2}
Let $(Y, \textbf{X})\in\mathbb{R}\times \mathbb{R}^p$ follow an $\mathcal{F}_L$-model with DAG ${\mathcal{G}}_0$. Let $\tilde{\textbf{X}}\subseteq \textbf{X}$ be observed variables (and $\textbf{X}\setminus \tilde{\textbf{X}}$ are unobserved hidden confounders). Let $\tilde{\mathcal{G}}_0$ be a projection of ${\mathcal{G}}_0$ on the observed variables. If there exist a source variable $a\in pa_Y(\tilde{\mathcal{G}}_0)$ then $|S_{\mathcal{F}_L}(Y)| \leq 1$. Moreover, if there exist a pair of source variables  $a,b\in an_Y(\tilde{\mathcal{G}}_0)$ that are m-separated in $\tilde{\mathcal{G}}_0$, then $S_{\mathcal{F}_L}(Y) = \emptyset$. 
\end{lemma}

\begin{proof}\label{Proof of LemmaAboutUnidentifiabilityFL 2}
The proof is fully analogous to the proof of Lemma~\ref{Proof of LemmaAboutUnidentifiabilityFL}. Since we added the assumption that $a,b$ are source variables, the fact that some variables are unobserved does not change any step in the proof. 
\end{proof}

\subsection{Definition of restricted additive noise model from \texorpdfstring{\cite{Peters2014}}{}}
\label{Appendix_restricted_additive_noise_model}

We restate the definition of restricted additive noise model presented in Section 3 in \cite{Peters2014}. 

\begin{definition}
An  $\mathcal{F}_A$-model is called a restricted additive noise model if for all \(j \in V\), \(i \in \text{PA}_j\) and all sets \(S \subseteq V\) with \(\text{PA}_j \setminus \{i\} \subseteq S \subseteq \text{ND}_j \setminus \{i, j\}\), there is an \(x_S\) with \(p_S(x_S) > 0\), such that

\[
\left( f_j ( x_{\text{PA}_j \setminus \{i\}}, \underbrace{\cdot}_{X_i} ), P_{(X_i \mid \textbf{X}_S = \textbf{x}_S)}, P_{\eta_j} \right)
\]
satisfies Condition~\ref{Condition19}. Here, the underbrace indicates the input component of \(f_j\) for variable \(X_i\). In particular, we require the noise variables to have non-vanishing densities and the functions \(f_j\) to be continuous and three times continuously differentiable.
\end{definition}

\begin{condition}\label{Condition19}
    The triple \((f_j, P_{X_i}, P_{\eta_j})\) does not solve the following differential equation for all \(x_i, x_j\) with \(\nu''(x_j - f(x_i)) f'(x_i) \neq 0\):

\[
\xi''' = \xi'' \left( -\frac{\nu''' f'}{\nu''} + \frac{f''}{f'} \right) - 2 \nu'' f'' f' + \nu' f''' + \frac{\nu' \nu''' f'' f'}{\nu''} - \nu' \frac{(f'')^2}{f'},
\]
where \(f := f_j\), \(\xi := \log p_{X_i}\), and \(\nu := \log p_{\eta_j}\) are the logarithms of the strictly positive densities. To improve readability, we have skipped the arguments \(x_j - f(x_i)\), \(x_i\), and \(x_i\) for \(\nu\), \(\xi\), and \(f\) and their derivatives, respectively.
\end{condition}

\cite[Theorem 28]{Peters2014} showed that $\mathcal{G}$ is  identifiable from the joint distribution under a causally minimal restricted additive noise model.

\begin{theorem}[Theorem 20 in \cite{Peters2014}]
\label{theorem20}
     Let \(P_{(X_0, X_1)} \) be generated by a bivariate additive noise model with graph \( \mathcal{G}_0 \) satisfying Condition~\ref{Condition19} and assume causal minimality, i.e., a non-constant function \( f_j \). Then, \( \mathcal{G}_0 \) is identifiable from the joint distribution.
\end{theorem}

\begin{theorem}[Theorem 28 in \cite{Peters2014}]
\label{theorem28}
 Let \( P_{(X_1, \ldots, X_p)} \) be generated by a restricted additive noise model with graph \( \mathcal{G}_0 \) and let \( P_{(X_1, \ldots, X_p) }\) satisfy causal minimality with respect to \( \mathcal{G}_0 \) (which holds for example if the functions \( f_j \) are minimal). Then, \( \mathcal{G}_0 \) is identifiable from the joint distribution.

\end{theorem}

\subsection{(Global) identifiability \texorpdfstring{$\implies$}{} (local) \texorpdfstring{$\mathcal{F}$}{}-identifiability for a general non-additive \texorpdfstring{$\mathcal{F}$}{}}
\label{Appendix_pairwise_identifiability}

In the following, we restate the result from Section~\ref{section_general_implies_local_identifiability} for general $\mathcal{F}$. One may expect that $\mathcal{F}$-identifiability of $Y$ follows automatically if we assume an identifiable $\mathcal{F}$-model for all variables in the SCM. This is not the case in general.  
\begin{example}
Consider the following SCM: 
\begin{equation*}
X_1 = \eta_1; \,\,\,\,\,\,\,\,Y = X_1^2 + \eta_Y; \,\,\,\,\,\,\,\,X_2 = \beta_2 Y + \eta_2, 
\end{equation*}
where $\eta_1, \eta_2, \eta_Y$ are independent, $\eta_1\sim U(0,1), \eta_Y, \eta_2\sim N(0, 1)$. This SCM is identifiable (for example, SCM with graph $X_2\to Y\to X_1$ does not allow writing $X_1 = f_1(Y) + \tilde{\eta}_1$ for any $f_1, \tilde{\eta}_1$). However, notice that under conditioning on $X_1 = x\in (0,1)$ we obtain a linear Gaussian case and we can revert the equation between $(Y, X_2)$ and obtain $Y = X_1^2 + \tilde{\beta}_2X_2 + \tilde{\eta}_Y$ for some $\tilde{\beta}_2\in\mathbb{R}, \tilde{\eta}_Y\sim N(0, \sigma^2)$. 
\end{example}

We require a slightly stronger notion of identifiability of $\mathcal{G}$ that we call ``pairwise identifiability''. 

\begin{definition}
Let $(X_0, \textbf{X})\in\mathbb{R}\times \mathbb{R}^p$ follow a SCM (\ref{definition_general_SCM}) with DAG $\mathcal{G}_0$. Let $\mathcal{F}$ be a subset of all measurable functions. We say that the  $\mathcal{F}$-model is \textbf{identifiable}, if there does not exist a graph $\mathcal{G}'\neq \mathcal{G}_0$ and functions $f_i'\in\mathcal{F}, i=0, \dots, p$ generating the same joint distribution. 

We say that the $\mathcal{F}$-model is \textbf{pairwise identifiable}, if for all $i,j\in\mathcal{G}_0, i\in pa_j$ hold the following: $\forall S\subseteq V$ such that  $pa_j\setminus \{i\}\subseteq S \subseteq nd_j\setminus\{i,j\}$ there exist $\textbf{x}_{S}: p_S(\textbf{x}_S)>0$ satisfying that a bivariate model defined as $Z_1=\tilde{\varepsilon}_1, Z_2 = \tilde{f}(Z_1, {\varepsilon}_j)$ is identifiable, where $P_{\tilde{\varepsilon}_1} = P_{X_i\mid \textbf{X}_{S} =\textbf{ x}_S}$, $\tilde{f}(x, \varepsilon) = f(\textbf{x}_{pa_j\setminus\{i\}}, x, \varepsilon)$, $\tilde{\varepsilon}_1\indep {\varepsilon_j}$.  
\end{definition}

In the bivariate case, the notion of identifiability and pairwise identifiability trivially coincides. Note the following observation. 

\begin{lemma}\label{pairwise_implies_global}
Pairwise identifiable $\mathcal{F}$-model is identifiable.  
\end{lemma}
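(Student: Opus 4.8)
The plan is to argue by contraposition: assuming the $\mathcal{F}$-model is \emph{not} identifiable, I will manufacture a violation of pairwise identifiability. So suppose there is a DAG $\mathcal{G}'\neq\mathcal{G}_0$ and functions $f_i'\in\mathcal{F}$, $i=0,\dots,p$, inducing the same joint distribution $P$. By Lemma~\ref{Lemma_about_ICM_minimality} the noises being in $\mathcal{I}_m$ make $P$ causally minimal with respect to \emph{both} $\mathcal{G}_0$ and $\mathcal{G}'$, and $P$ is Markov with respect to both. The first and combinatorially delicate step is to distill the discrepancy $\mathcal{G}_0\neq\mathcal{G}'$ into a single edge handled differently. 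Concretely, I would produce vertices $i,j$ with $i\to j$ in $\mathcal{G}_0$ together with a set $S$ satisfying $pa_j\setminus\{i\}\subseteq S\subseteq nd_j\setminus\{i,j\}$, such that in $\mathcal{G}'$ the conditional relationship of $(X_i,X_j)$ given $\textbf{X}_S$ is \emph{not} the forward $i\to j$ model — prototypically because the edge is reversed, $j\to i$ in $\mathcal{G}'$. To locate such a pair I would compare topological orderings of the two graphs, picking an edge that is extremal in an ordering of $\mathcal{G}'$ so that the remaining variables can be frozen; causal minimality is what excludes a pure skeleton mismatch from surviving (an edge present in $\mathcal{G}_0$ but absent in $\mathcal{G}'$ would let $\mathcal{G}'$ screen $X_i$ and $X_j$ off by a blocking set, whereas minimality of $P$ with respect to $\mathcal{G}_0$ forces $X_i\not\indep X_j\mid\textbf{X}_S$).

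Having fixed such an $i\to j$ and $S$, I would carry out the reduction to a bivariate problem in both directions simultaneously. Conditioning on $\textbf{X}_S=\textbf{x}_S$ blocks every path between $X_i$ and $X_j$ except the direct edge, so the $\mathcal{G}_0$-model collapses to $X_j=\tilde f(X_i,\varepsilon_j)$ with $\varepsilon_j\indep X_i$ and $\tilde f(x,\varepsilon)=f(\textbf{x}_{pa_j\setminus\{i\}},x,\varepsilon)\in\mathcal{F}$ — which is exactly the forward bivariate model appearing in the definition of pairwise identifiability. At the same time, because the edge runs the other way in $\mathcal{G}'$ and $S$ (being a nondescendant set of $j$ containing the relevant parents) absorbs the confounding of $X_i$ in $\mathcal{G}'$, the $\mathcal{G}'$-model collapses to a bivariate $\mathcal{F}$-model in the \emph{opposite} direction, $X_i=\tilde g(X_j,\varepsilon_i')$ with $\varepsilon_i'\indep X_j$. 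Hence the conditional distribution $P_{(X_i,X_j)\mid\textbf{X}_S=\textbf{x}_S}$ is induced by $\mathcal{F}$-models of both one-edge DAGs, and this holds for \emph{every} $\textbf{x}_S$ with $p_S(\textbf{x}_S)>0$.

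Finally I would close the argument by invoking pairwise identifiability for this very edge $i\to j$ and this very set $S$: it guarantees a value $\textbf{x}_S$ with $p_S(\textbf{x}_S)>0$ at which the forward bivariate model is identifiable, i.e.\ at which the DAG $i\to j$ is the only one-edge graph admitting an $\mathcal{F}$-model for $P_{(X_i,X_j)\mid\textbf{X}_S=\textbf{x}_S}$. But the reduction of the previous paragraph exhibits, at that same $\textbf{x}_S$, an $\mathcal{F}$-model for the reversed DAG $j\to i$, contradicting this uniqueness. The contradiction rules out any $\mathcal{G}'\neq\mathcal{G}_0$, so the $\mathcal{F}$-model is identifiable. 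I expect the main obstacle to be the combinatorial first step together with the verification that a \emph{single} set $S$ reduces both multivariate models to clean bivariate models in opposite directions; this is precisely where the constraint $S\subseteq nd_j\setminus\{i,j\}$, the inclusion $pa_j\setminus\{i\}\subseteq S$, and the topological extremality of the chosen edge are needed, and it mirrors the multivariate-to-bivariate reduction used for restricted additive noise models in \cite{Peters2014}.
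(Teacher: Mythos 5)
Your proposal is correct and follows essentially the same route as the paper's proof: the paper invokes Proposition 29 of \cite{Peters2014} to obtain the reversed edge ($K\to L$ in one graph, $L\to K$ in the other) together with a single conditioning set $S$ containing both parent sets and contained in both nondescendant sets, then uses Lemmas 36 and 37 of \cite{Peters2014} to collapse both models to bivariate $\mathcal{F}$-models in opposite directions at the $\textbf{x}_S$ supplied by pairwise identifiability, exactly as you sketch. The only difference is that the combinatorial step you outline by hand (topological extremality, ruling out skeleton mismatches via causal minimality) is delegated in the paper to the cited Proposition 29, so your sketch reproduces rather than replaces the paper's argument.
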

 
The proof can be found in  \hyperref[Proof of pairwise_implies_global]{Appendix D}. The following proposition is a counterpart of Proposition~\ref{TheoremFidentifiabilityWithChild}
 from Section~\ref{section_general_implies_local_identifiability} with general $\mathcal{F}$.

\begin{proposition}
\label{proposition_for_pairwise_F_model_identifiability}
Let $(X_0, \textbf{X})$ follow a pairwise identifiable $\mathcal{F}$-model with DAG $\mathcal{G}$, such that all $\textbf{X}$ are neighbors of $X_0$ in $\mathcal{G}$.   Let $S \subseteq \{1, \dots, p\}$ contain a child of $X_0$ in $\mathcal{G}$.  Then, $S$ is not $\mathcal{F}$-plausible. 
\end{proposition}

\subsection{ \texorpdfstring{$\mathcal{F}$}{}-plausibility under restricted support        }
\label{Appendix_support}

The following proposition discusses a case when $\mathcal{F}$-implausibility results from restricting the support of $Y$ by conditioning on the child of $Y$. This result is specific for an additive and a location-scale space of functions $\mathcal{F}_{A}, \mathcal{F}_{LS}$, but can be easily modified for other types of $\mathcal{F}$.

\begin{proposition}[Assuming bounded support]
\label{Support_proposition}
Let $(Y, \textbf{X})\in\mathbb{R}\times \mathbb{R}^p$ follow an SCM with DAG $\mathcal{G}_0$. Let $S\subseteq\{1, \dots, p\}$ be a non-empty set.

\begin{itemize}
    \item (Additive case) Let  $\underline{\Psi}: \mathbb{R}^{\mid S\mid}\to \mathbb{R}$ be a lower support of $ (Y\mid \textbf{X}_S=\textbf{x}_S)$. Let $\underline{\Psi}$ be finite. 
Then, $S$ is not $\mathcal{F}_{A}$-plausible, if
\begin{equation} \label{eq9987_additive}
Y - \underline{\Psi}(\textbf{X}_S)\not\indep \textbf{X}_S.
\end{equation}

 \item (Location-scale case) Let  $\underline{\Psi},\overline{\Psi}: \mathbb{R}^{\mid S\mid}\to \mathbb{R}$ be real functions such that
\begin{equation*} 
supp(Y\mid \textbf{X}_S=\textbf{x}) = \big(\underline{\Psi} (\textbf{x}),\overline{\Psi}(\textbf{x})\big), \,\,\,\,\,\,\, \forall \textbf{x}\in supp(\textbf{X}_S).
\end{equation*}
Then, $S$ is not $\mathcal{F}_{LS}$-plausible, if
\begin{equation} \label{eq9987}
\frac{Y - \underline{\Psi}(\textbf{X}_S)}{\overline{\Psi}(\textbf{X}_S) - \underline{\Psi}(\textbf{X}_S)}\not\indep \textbf{X}_S.
\end{equation} 
\end{itemize}
\end{proposition}
The proof can be found in \hyperref[Proof of Support_proposition]{Appendix D}. Proposition \ref{Support_proposition} can be expressed as follows. If the support of $Y$ given $\textbf{X}_S = \textbf{x}_S$ is bounded, then $S$ can be  $\mathcal{F}_{LS}$-plausible only in a very specific case when (\ref{eq9987}) does not hold. 
Typically, (\ref{eq9987}) holds if $S$ contains a child of $Y$. 
\begin{example}\label{Example_o_Supporte}
Consider SCM where $Y$ is a parent of $X_1$ and $X_1 = Y + \eta$, where $Y\indep \eta$. Assume that  $Y, \eta$ are non-negative ($supp(Y) = supp(\eta) = (0, \infty)$). 
Then, $\underline{\Psi}(x)=0$ and  $\overline{\Psi}({x})=x$, since the support of $[Y\mid Y+\eta=x]$ is $(0,x)$. Hence, (\ref{eq9987}) reduces to $\frac{Y}{X_1} \not\indep X_1$. If $\frac{Y}{X_1} \not\indep X_1$, then $S=\{1\}$ is not $\mathcal{F}_{LS}$-plausible. 

How strong is the assumption $\frac{Y}{X_1} \not\indep X_1$? We claim that it holds in typical situations. A notable exception when  $\frac{Y}{X_1} \indep X_1$ holds is when $Y, \eta$ have Gamma distributions with equal scales. 
\end{example}

Proposition \ref{Support_proposition} is applicable only when $S$ contains a child of $Y$. If $S\subseteq pa_Y$, then (\ref{eq9987}) typically does not hold, as the following example illustrates.  
\begin{example}
Consider a bivariate SCM with $X_1\to Y$. Let $Y = X_1 + \eta$, where $X_1\indep \eta$. Assume that $supp(X) = supp(\eta) = (0, 1)$. Then, $\underline{\Psi}(x)=x$ and  $\overline{\Psi}({x})=1+x$. Hence, (\ref{eq9987}) reduces to $Y-X_1 \not\indep X_1$, which is not satisfied, so Proposition \ref{Support_proposition} is not applicable. 
\end{example}

\subsection{Theorem~\ref{Theorem_in_section2} restated for location-scale models \texorpdfstring{$\mathcal{F}_{LS}$}{} and CPCM(\texorpdfstring{$F$}{}) models \texorpdfstring{$\mathcal{F}_F$}{}}
\label{Appendix_location_scale_definition}

We restate similar results to Theorem~\ref{Theorem_in_section2} for $\mathcal{F} = \mathcal{F}_{LS}$ and $\mathcal{F}_F$ functional classes. We only focus in the independence case (second bullet-point of Theorem~\ref{Theorem_in_section2}); the general case can be stated analogously. We start with the location-scale result. 

\begin{proposition}[Location-scale]\label{LemmaOLocationScaleinseparabilite}
Let \((Y, \mathbf{X}) \in \mathbb{R} \times \mathbb{R}^p\) be continuous and satisfy \eqref{SCM_for_Y} with \(pa_Y \neq \emptyset\) and $\mathcal{F} = \mathcal{F}_{LS}$. Let $\textbf{X}_{pa_Y}$ have independent components. Then,  $S\subsetneq pa_Y$ is not $\mathcal{F}_{LS}$-plausible if 
 $f_Y$  has the form $$f_Y(\textbf{x}, e)=\mu(\textbf{x}) + \sigma(\textbf{x})e, $$where $\theta(\textbf{x}) = \big(\mu(\textbf{x}), \sigma(\textbf{x})\big)^\top$ is additive in both components, that is,  
$\mu(\textbf{x}) = h_{1, \mu}(x_1)+\dots + h_{k, \mu}(x_k)$ and 
$\sigma(\textbf{x}) =h_{1, \sigma}(x_1)+\dots + h_{k, \sigma}(x_k)$ for some continuous non-constant non-zero functions $h_{i,\cdot}$, where we also assume $h_{i,\sigma}>0$, $i=1, \dots, k$.  
\end{proposition}
Proof can be found in \hyperref[Proof of LemmaOLocationScaleinseparabilite]{Appendix D}. 

Now we focus on the case $\mathcal{F} = \mathcal{F}_{F}$, where a distribution function $F$ has $q\in\mathbb{N}$ parameters $\theta = (\theta_1, \dots, \theta_q)^\top$. We restrict to such $F$ satisfying the following definition.

\begin{definition}\label{DefLS}
Let $F$ be a distribution function with one ($q=1$) parameter $\theta$. We say that the \textbf{parameter acts additively} in $F$, if an invertible real function $f_2$ and a function $f_1\in\mathcal{I}_m$ exist such that for all $\theta_1, \theta_2$ holds \footnote{Notation $F_{\theta_1}\big(F^{-1}_{\theta_2}(z)\big)$ is equivalent to $F(F^{-1}(z, \theta_2), \theta_1)$. We believe that this improves the readability.  } 
\begin{equation}\label{postAdditiveDefinition}
F_{\theta_1}\big(F^{-1}_{\theta_2}(z)\big) = f_1\big(z, f_2(\theta_1) + \theta_2\big), \,\,\,\forall z\in(0,1). 
\end{equation}

We say that the \textbf{parameter acts  multiplicatively} in $F$ if an invertible real function $f_2$ and a function $f_1\in\mathcal{I}_m$ exist such that for all $\theta_1, \theta_2$ holds  \begin{equation}\label{postMultiplDefinition}
F_{\theta_1}\big(F^{-1}_{\theta_2}(z)\big) = f_1\big(z, f_2(\theta_1) \cdot\theta_2\big), \,\,\,\forall z\in(0,1).
\end{equation}

Let $F$ be a distribution function with two ($q=2$) parameters $\theta = (\mu, \sigma)^\top\in\mathbb{R}\times \mathbb{R}_+$. We say that $F$ is a \textbf{Location-Scale} distribution, if for all $\theta$ holds  
\begin{equation*}
F_{\theta}\left( \frac{x-\mu}{\sigma}\right) = F_{\theta_0}(x),\,\,\,\,\forall x\in\mathbb{R},
\end{equation*} 
where $F_{\theta_0}$ is called standard distribution and corresponds to a parameter $\theta_0 = (0,1)^\top$.  
\end{definition}
Examples of $F$ whose parameter acts  additively include a Gaussian distribution with fixed variance or a Logistic distribution/Gumbel distribution with fixed scales.  Note that typically, $f_2(x) = -x$,  since $F_{\theta_1}\big(F^{-1}_{\theta_1}(z)\big)=z$ needs to hold.

Examples of $F$ whose parameter acts  multiplicatively include a Gaussian distribution with the fixed expectation or a Pareto distribution (where $F_{\theta_1}\big(F^{-1}_{\theta_2}(z)\big) = z^{\frac{\theta_1}{\theta_2}}= f_1\big(z, f_2(\theta_1) \cdot\theta_2\big)$ for $f_1(z,x) = z^{-1/x}$ and $f_2(x)=-1/x$). Functions $ f_1, f_2$ are not necessarily uniquely defined. 

Examples of Location-Scale types of distributions include Gaussian distribution, logistic distribution, or Cauchy distribution, among many others.

\begin{proposition}\label{LemmaOParetoinseparabilite}
 Consider continuous $(Y, \textbf{X})\in\mathbb{R}\times \mathbb{R}^p$ satisfying \eqref{SCM_for_Y} with  $pa_Y\neq \emptyset$ and $\mathcal{F} = \mathcal{F}_{F}$. Let $F$ be a distribution function whose parameter acts multiplicatively. 
 Let $\textbf{X}_{pa_Y}$ have independent components. 
\begin{itemize}
\item Consider $f_Y\in\mathcal{F}_F$ in the form $f_Y(\textbf{x}, \varepsilon)=F^{-1}\big(\varepsilon, \theta(\textbf{x})\big)$ with additive function $\theta(x_1, \dots, x_k) = h_1(x_1)+\dots + h_k(x_k)$, where $h_i$ are continuous non-constant real functions. Then, then every $S\subsetneq pa_Y$ is not $\mathcal{F}_{F}$-plausible. 
\item Consider $f_Y\in\mathcal{F}_F$ in the form $f_Y(\textbf{x}, \varepsilon)=F^{-1}\big(\varepsilon, \theta(\textbf{x})\big)$ with multiplicative function   $\theta(x_1, \dots, x_k) = h_1(\textbf{x}_S)\cdot h_2(\textbf{x}_{\{1, \dots, k\}\setminus S})$ for some $S\subsetneq \{1, \dots, k\}$, where $h_1, h_2$ are continuous non-constant non-zero real functions. Then, $S_{\mathcal{F}_F}(Y)=\emptyset$.
\end{itemize}
\end{proposition}
Proof can be found in \hyperref[Proof of LemmaOParetoinseparabilite]{Appendix D}. 
Analogous Proposition~\ref{LemmaOParetoinseparabilite} can be stated for $F$ being additive or location-scale type, where Lemma \ref{CoolLemma} part 3 and 4 would be used instead of part 2.

\setcounter{subsection}{0}
\section{Appendix: Simulations and application}
\label{section_appendix_simulations}

Appendix~\ref{Section_Additive} offers an illustrative  simulations to demonstrate the theoretical findings discussed in Section \ref{Section_3}. Appendix~\ref{Section_benchmarks} pertains to the evaluation of the algorithm's performance on three benchmark datasets. To randomly generate a $d$-dimensional function, we use the concept of the Perlin noise generator \citep{PerlinNoise}. Examples of such generated functions can be found in Appendix~\ref{Appendix_Simulations}. The two algorithms presented in Section~\ref{section_algorithm}, all simulations, and the Perlin noise generator are coded in the programming language \texttt{R} \citep{Rstudio} and can be found in the supplementary package or at \url{https://github.com/jurobodik/Structural-restrictions-in-local-causal-discovery.git}.

\subsection{Highlighting the results from Section \ref{Section_3}}
\label{Section_Additive}

Consider a target variable $Y$ with two parents $X_1, X_2$, where $\textbf{X} = (X_1, X_2)$ is a centered normal random vector with correlation $c\in\mathbb{R}$. The generation process of $Y$ is as follows: \begin{equation}\label{eq576}
    Y = g_1(X_1)+g_2(X_2) + \gamma\cdot g_{1,2}(X_1, X_2)+\eta,\,\,\,\,\,\,\text{ with }\eta\sim N(0,1)\,\text{ and  }\gamma\in\mathbb{R},
\end{equation}
 where $g_1, g_2, g_{1,2}$ are fixed functions generated using the Perlin noise approach.

Theorem~\ref{Theorem_in_section2} suggests that if $c= \gamma= 0 $ then we should find that  $S_{\mathcal{F}}(Y) = \emptyset$. Moreover, if  $c\in\mathbb{R},$ and $\gamma\neq 0 $, then  $S_{\mathcal{F}}(Y) = pa_Y = \{1,2\}$. Moreover, the choice of $c$ can affect the finite sample properties. 

Figure~\ref{Plot_sim_additive} confirms these results. For a range of parameters $c\in[0,0.9], \gamma\in[0,1]$, we generate 50 times such a random dataset of size $n=500$ 
and estimate $S_{\mathcal{F}}(Y)$ using the ISD algorithm. If $\gamma$ is small, we discover direct causes of $Y$ only in a small number of cases. However, the larger the $\gamma$, the larger the number of discovered parents. Figure~\ref{Plot_sim_additive} also suggests that the correlation between the parents can actually be beneficial. The reason is that even if (\ref{eq576}) is additive in each component, the correlation between the components can create a bias in estimating $g_1$ (resp. $g_2$). This results in a dependency between the residuals and $X_1$ (resp. $X_2$) in the model where we regress $Y$ on $X_1$ (or on $X_2$), and we are more likely to reject the plausibility of $S = \{1\}$ (or $S = \{2\}$). 

\begin{figure}[ht]
\centering
\includegraphics[scale=0.5]{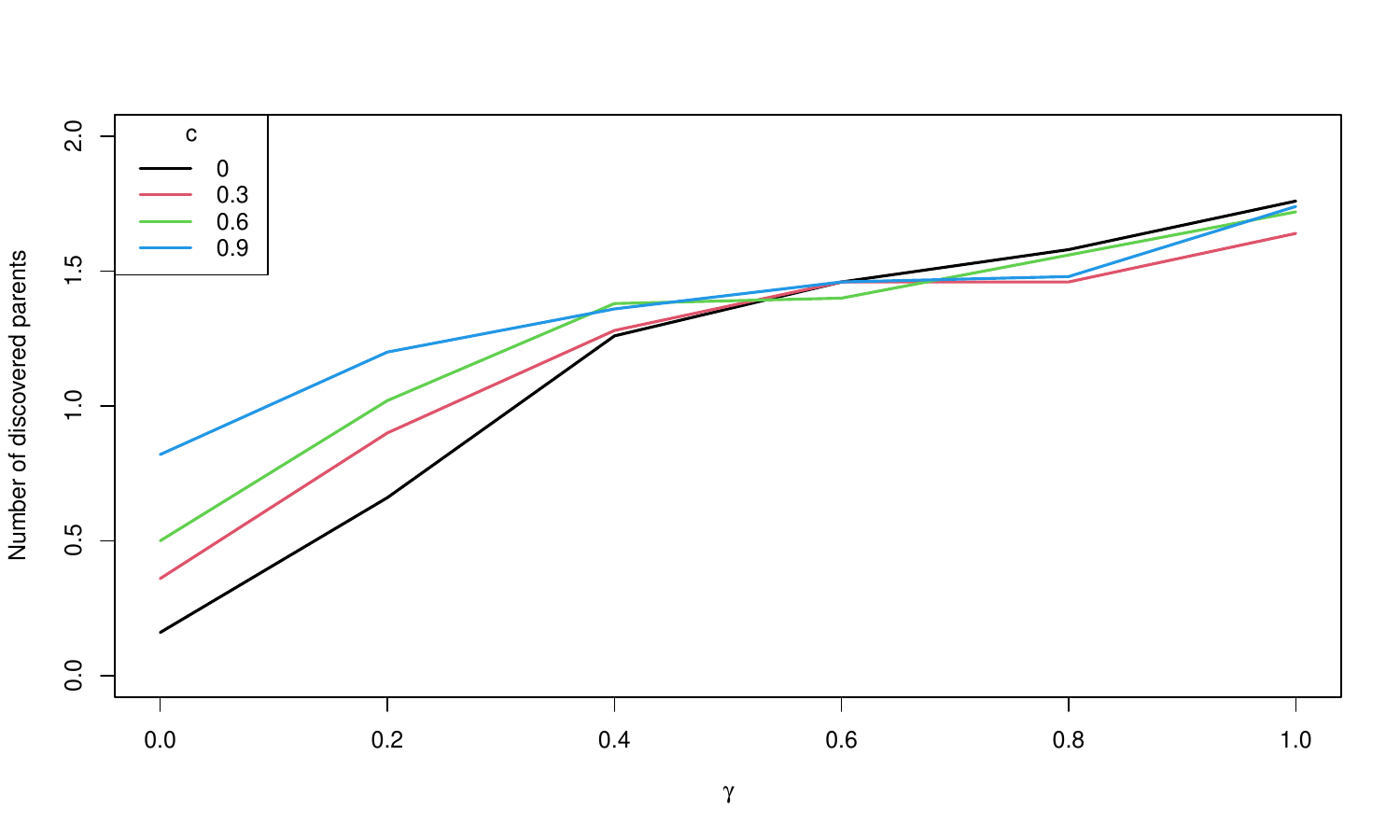}
\caption{Results of the simulation study corresponding to the additive case of Section \ref{Section_Additive}}
\label{Plot_sim_additive}
\end{figure}

\subsection{Benchmarks}\label{Section_benchmarks}
We created three benchmark datasets to assess the performance of our methodology. Two of them correspond to additive noise models ($\mathcal{F}=\mathcal{F}_A$), and the third to $\mathcal{F}=\mathcal{F}_F$ with the Pareto distribution $F$.

The first benchmark dataset consists of $\textbf{X} = (X_1, X_2, X_3, X_4)^\top$ and the response variable $Y$, where $pa_Y = \{1\}$ with the corresponding graph drawn in Figure~\ref{Figure_DAG_for_sim1}A. The data-generation process is as follows: 
\begin{equation*}
 X_1 =\eta_1, \,\,\,Y = g_Y(X_1)+\eta_Y, \,\,\,X_i = g_i(Y,\eta_i),\,\,\, i=2,3,4,     
\end{equation*}
where $g_Y, g_2, g_3, g_4$ are fixed functions generated using the Perlin noise approach, $\eta_1, \eta_2, \eta_3, \eta_4$ are correlated uniformly distributed noise variables, and $\eta_Y\sim N(0,1)$ is independent of $\eta_1, \dots, \eta_4$. The challenge is to find the (one) direct cause among all variables. 
 
The second benchmark dataset consists of $\textbf{X} = (X_1, X_2, X_3)^\top$ and the response variable $Y$, where $pa_Y = \{1,2,3\}$ with the corresponding graph drawn in Figure~\ref{Figure_DAG_for_sim1}B. The data-generation process is as follows:
\begin{align*}
   \begin{pmatrix}
          X_1 \\
            X_2 \\
            X_3
         \end{pmatrix} &\sim N\bigg(\begin{pmatrix}
           0 \\
            0 \\
            0
         \end{pmatrix}, 
         \begin{pmatrix}
          1 ,   c , c \\
            c,   1, c\\
            c,c,1
         \end{pmatrix}\bigg),\,\,\,\, Y = g_Y(X_1, X_2, X_3) + \eta_Y, \,\,\,\text{ where }\eta_Y\overset{}{\sim} N(0,1),
 \end{align*}
for $c=0.5$ and a fixed function $g_Y$ generated using the Perlin noise approach. The challenge is to estimate as many direct causes of $Y$ as possible. 

The third benchmark dataset consists of $\textbf{X} = (X_1, X_2, X_3)^\top$ and the response variable $Y$ corresponding to the DAG C of Figure~\ref{Figure_DAG_for_sim1}. Here, every edge is randomly oriented; either  $\to$ or $\leftarrow$ with probability $\frac{1}{2}$. The source variables (variables without parents) are generated following the standard Gaussian distribution. $Y$ is generated as (\ref{CPCM_def}) with the Pareto distribution $F$ with a fixed function $\theta(\textbf{X}_{pa_Y})$ generated using the Perlin noise approach. Finally, if $X_i$ is the effect of $Y$, it is generated as $X_i = f_i(Y, \eta_i)$, where $\eta_i\sim U(0,1)$,  $ \eta_i\indep Y$ and $f_i$ is a fixed function generated as a combination of functions generated using the Perlin noise approach.

In all datasets, we consider a sample size of $n=500$. 
\begin{figure}[ht]
\centering
\includegraphics[scale=0.55]{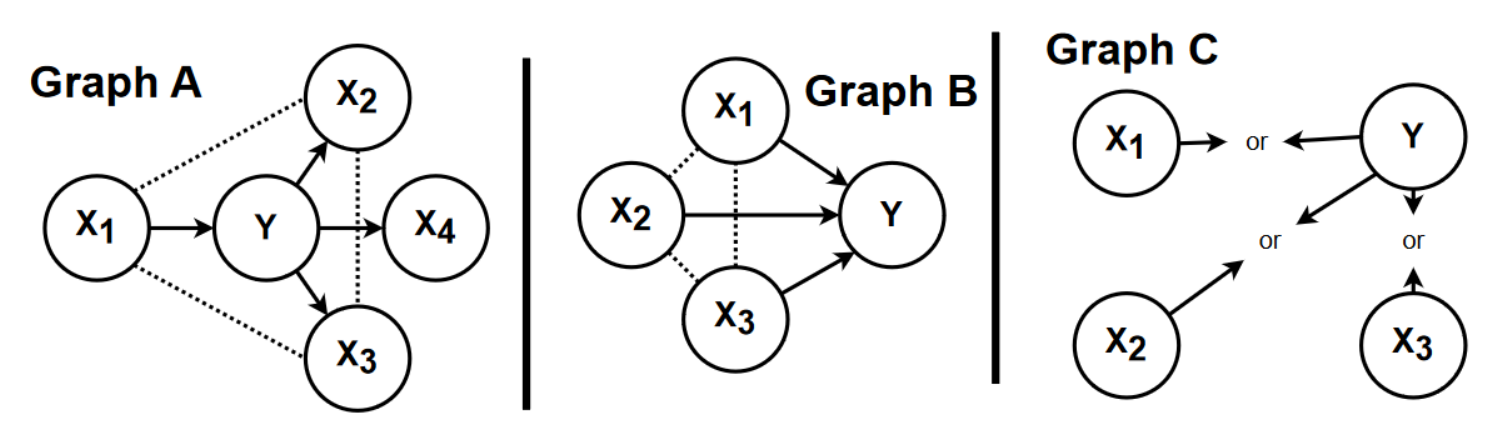}
\caption{Graph A corresponds to the first benchmark dataset, where the noises of $X_1, X_2, X_3$ are correlated (denoted by dashed lines). Graph B corresponds to the second benchmark dataset, where $(X_1, X_2, X_3)$ is generated as a correlated multivariate normally distributed random vector. Graph C corresponds to the third benchmark dataset, where each edge is randomly oriented, either  $\to$ or $\leftarrow$ with probability $\frac{1}{2}$. }
\label{Figure_DAG_for_sim1}
\end{figure}

We compare our proposed algorithms with specific methods for causal discovery, which are: RESIT \citep{Peters2014}, CAM-UV \citep{maeda2021causal}, pairwise bQCD \citep{Natasa_Tagasovska}, pairwise IGCI with the Gaussian reference measure \citep{IGCI}, and pairwise slope \citep{Slope}. When we use the term ``pairwise,'' we are referring to orienting each edge between $(X_i, Y)$ separately, $i=1, \dots, p$. 

For evaluating the performance, we simulate 100 repetitions of each of the three benchmark datasets and use two metrics: ``percentage of discovered correct causes'' and ``percentage of no false positives'' which measures the percentage of cases with no incorrect variable in the set of estimated causes. As an example, consider $pa_Y = \{1,2,3\}$. If we estimate $\widehat{pa}_Y = \{1,2\}$ in $80\%$ of cases and  $\widehat{pa}_Y = \{1,4,5\}$ in $20\%$ of cases, the percentage of discovered correct causes is $\frac{2}{3}\frac{8}{10} + \frac{1}{3}\frac{2}{10} \approx 60\%$ and the percentage of no false positives is $80\%$. 

Table \ref{Table_results} shows the performance of all methodologies. As shown, our two algorithms outperform the other approaches by a significant margin. The IDE algorithm never includes a wrong covariate in the set of causes. On the other hand, although the scoring algorithm demonstrates better overall performance and power, it tends to include non-causal variables more frequently.

\subsection{Visualization of benchmark datasets}
\label{Appendix_Simulations}

In the following, we provide examples of functions generated using the Perlin noise approach. For a one-dimensional case, let $X_1, \eta_Y\overset{iid}{\sim} N(0,1)$ and $Y = g(X_1)+\eta_Y$, where $g$ is generated using the Perlin noise approach. Such (typical) datasets are plotted in Figure~\ref{Figure_perlin_1}. 

For a two-dimensional case, let $X_1,X_2, \eta_Y\overset{iid}{\sim} N(0,1)$ and $Y = g(X_1, X_2)+\eta_Y$, where $g$ is generated using the Perlin noise approach. Such (typical) datasets are plotted in Figure~\ref{Figure_perlin_2}. 

For a three-dimensional case, let $X_1,X_2,X_3, \eta_Y\overset{iid}{\sim} N(0,1)$ and $Y = g(X_1, X_2,X_3)+\eta_Y$, where $g$ is generated using the Perlin noise approach.
\begin{figure}[ht]
\centering
\includegraphics[scale=0.6]{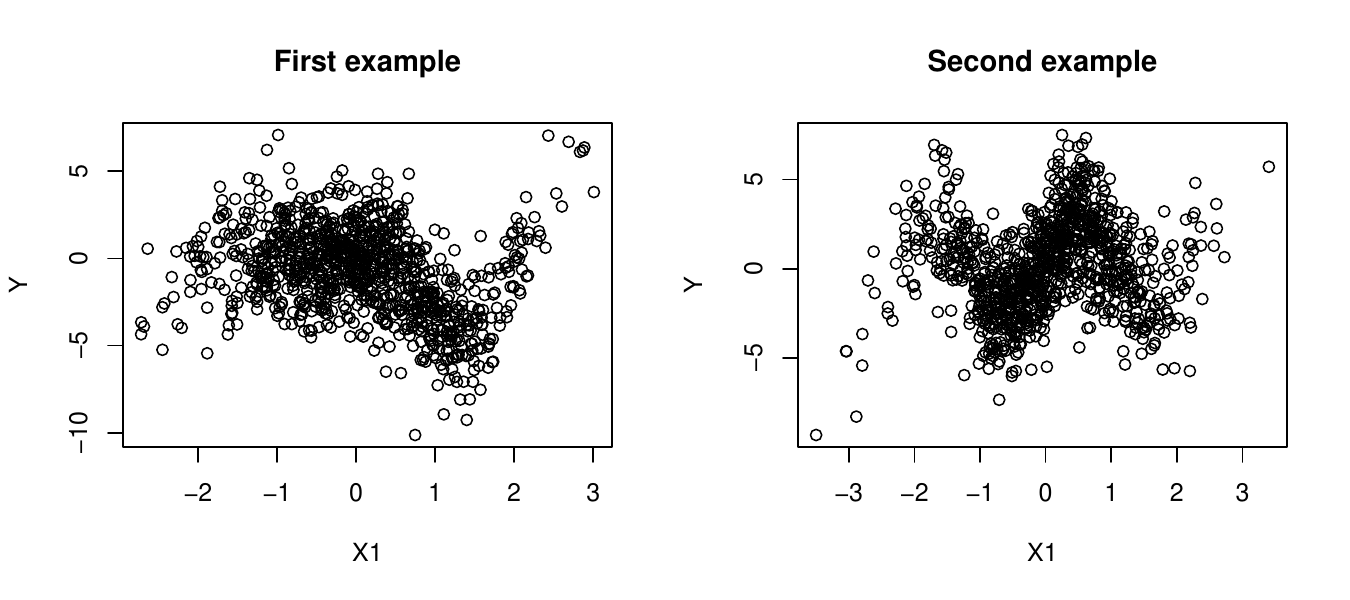}
\caption{Two examples of one dimensional functions generated using the Perlin noise approach}
\label{Figure_perlin_1}
\end{figure}

\begin{figure}[ht]
\centering
\includegraphics[scale=0.3]{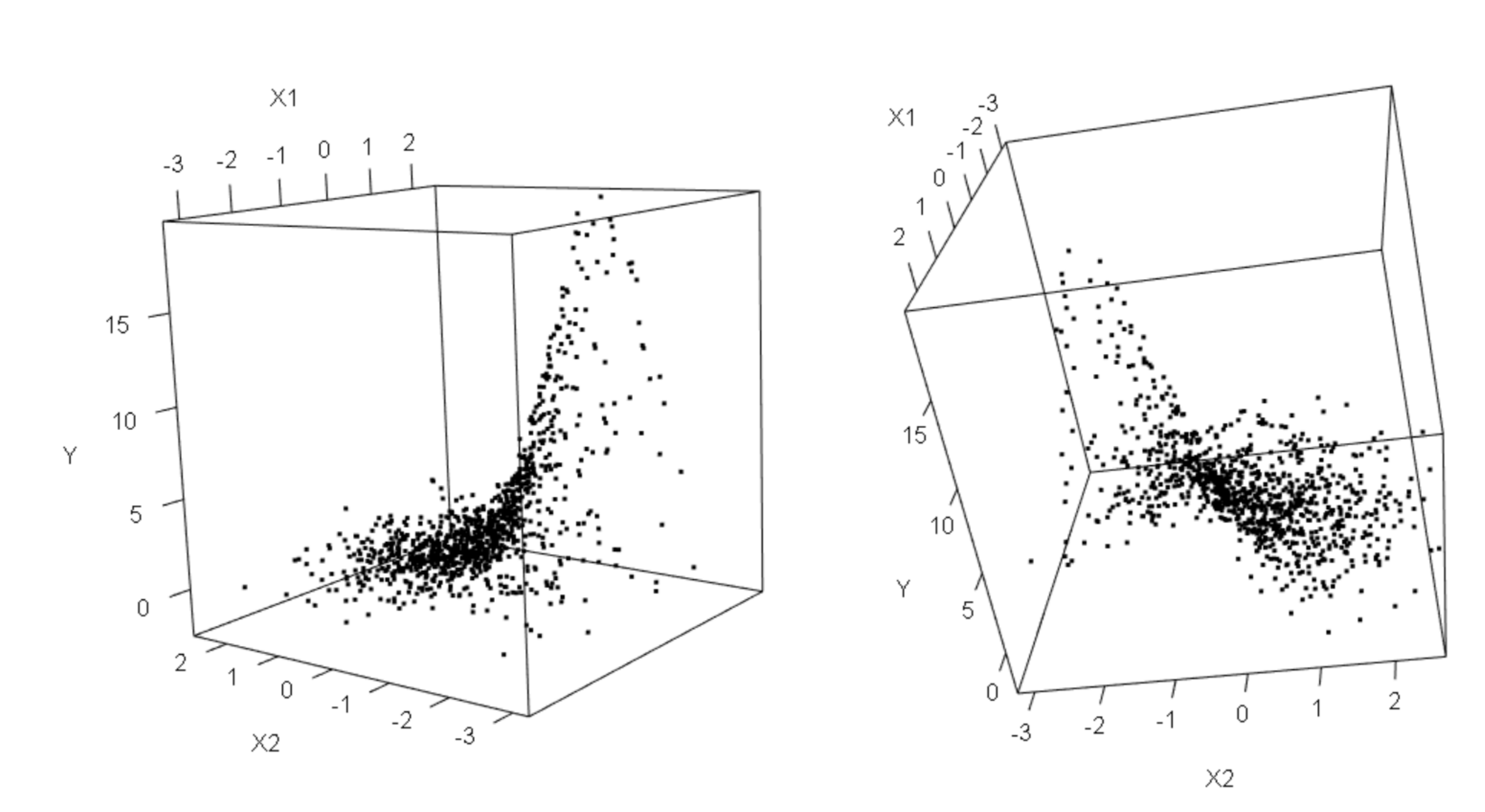}
\caption{A typical two-dimensional function generated using the Perlin noise approach, shown from two different angles }
\label{Figure_perlin_2}
\end{figure}

\section{Appendix:  Auxiliary results}
\label{Appendix_Auxiliary}

\begin{lemma}\label{distributionalequalitylemma}
Let $X$ be a non-degenerate continuous real random variable. Let $a,b\in\mathbb{R}$ such that 
\begin{equation}\label{qwerty}
a+bX\overset{D}{=}X.
\end{equation}
Then, either $(a,b) = (0,1)$ or $(a,b) = (2med(X), -1)$. Here, $med(X)$ is the median of $X$.
\end{lemma}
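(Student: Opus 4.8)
The plan is to prove the statement in two stages: first I would show that the distributional fixed-point equation $a+bX\overset{D}{=}X$ forces $|b|=1$, and then I would determine the admissible value of $a$ separately in the two remaining cases $b=1$ and $b=-1$.

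For the first stage, the key observation is that the relation is preserved under the affine map $T(x)=a+bx$: since $T(X)\overset{D}{=}X$ and equality in distribution is preserved by measurable maps, iterating gives $T^{n}(X)\overset{D}{=}X$ for every $n\in\mathbb{N}$, where $T^{n}(x)=b^{n}x+a\frac{b^{n}-1}{b-1}$ (for $b\neq 1$). If $|b|<1$, then as $n\to\infty$ the affine maps $T^{n}$ converge pointwise to the constant $\frac{a}{1-b}$, so $T^{n}(X)$ converges in distribution to a point mass; combined with $T^{n}(X)\overset{D}{=}X$ this would make $X$ degenerate, contradicting the hypothesis. The case $|b|>1$ reduces to the previous one by applying the inverse map $T^{-1}$, whose multiplier is $1/b$ with $|1/b|<1$. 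Hence $|b|=1$. An equivalent route is to pass to characteristic functions: writing $\varphi$ for the characteristic function of $X$, the relation reads $\varphi(t)=e^{ita}\varphi(bt)$, so $|\varphi(t)|=|\varphi(bt)|$, and the same iteration on $|\varphi|$ forces $|\varphi|\equiv 1$, i.e. degeneracy, unless $|b|=1$.

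For the second stage, suppose first $b=1$, so $X\overset{D}{=}a+X$. In terms of the distribution function this says $F(x)=F(x-a)$ for all $x$, i.e. $F$ is periodic with period $a$; since a distribution function is bounded and monotone, periodicity is possible only if $a=0$, giving $(a,b)=(0,1)$. Suppose instead $b=-1$, so $X\overset{D}{=}a-X$. Using that $X$ is continuous (no atoms), this yields $F(x)=1-F(a-x)$ for all $x$; evaluating at $x=a/2$ gives $F(a/2)=\tfrac12$, so $a/2$ satisfies the defining inequalities of the median. By uniqueness of the median for continuous variables, $med(X)=a/2$, that is $(a,b)=(2\,med(X),-1)$.

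The main obstacle is the first stage: ruling out $|b|\neq 1$ without any moment assumptions, since $X$ need not have a finite mean or variance. The iteration argument above is what makes this clean, as it uses only preservation of equality in distribution under the affine map together with convergence of the iterated maps; the remaining steps are then routine manipulations of the distribution function.
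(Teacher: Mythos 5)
Your proof is correct, but it reaches the key conclusion $|b|=1$ by a genuinely different route than the paper. The paper's argument is quantile-based: it fixes $q\in(0.5,1)$ with $F_X^{-1}(q)\neq F_X^{-1}(1-q)$ and compares the interquantile distance $D=F_X^{-1}(q)-F_X^{-1}(1-q)$ on both sides of $a+bX\overset{D}{=}X$, obtaining $bD=D$ for $b\geq 0$ and $-bD=D$ for $b<0$, hence $b=\pm 1$ in one step; this is short but requires the (slightly delicate) transformation rules for quantiles under negative scaling. Your dynamical argument instead iterates the affine map $T(x)=a+bx$, uses $T^n(X)\overset{D}{=}X$, and lets the contraction ($|b|<1$, with the case $|b|>1$ reduced to it via $T^{-1}$, which also satisfies $T^{-1}(X)\overset{D}{=}X$ by injectivity of $T$) force convergence of the laws to a point mass, contradicting non-degeneracy; this avoids quantile algebra entirely, needs only weak convergence, and your characteristic-function variant $\varphi(t)=e^{ita}\varphi(bt)$ is the standard textbook incarnation of the same idea. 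For pinning down $a$, the two proofs are essentially equivalent in content though different in mechanics: for $b=1$ the paper compares $med(a+X)$ with $med(X)$ while you observe that $F(x)=F(x-a)$ makes $F$ periodic, which is impossible for a nonconstant bounded monotone function unless $a=0$; for $b=-1$ the paper applies the median to both sides to get $med(a-X)=med(X)$, while you derive the reflection identity $F(x)=1-F(a-x)$ (using the absence of atoms) and evaluate at $x=a/2$ to get $F(a/2)=\tfrac12$. Note that both your step ``by uniqueness of the median'' and the paper's own median manipulation rely on the paper's stated convention that the median of a continuous variable is unique (which can fail when $F$ is flat at level $\tfrac12$, e.g.\ for a uniform mixture on $[0,1]\cup[2,3]$), so on this point you are no worse off than the original; overall your version trades the paper's brevity for a softer, more robust argument that generalizes readily beyond affine maps.
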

\begin{proof}
\textit{Idea of the proof assuming a finite variance of $X$: } If $X$ has finite variance, then (\ref{qwerty}) implies $var(a+bX) = var(X)$, rewriting gives us  $b^2 var(X)=var(X)$, and hence, $b=\pm 1$. Now, (\ref{qwerty}) also implies $\mathbb{E}(a+bX) = \mathbb{E}(X)$, hence $a=(1-b)\mathbb{E}(X)$. Therefore, if $b=1$, then $a= 0$, and if $b=-1$, then $a=2\mathbb{E}(X)$. 

\textit{Proof without the moment assumption:}  (\ref{qwerty}) implies that for any $q\in (0.5,1)$, the difference between the $q$ quantile and $(1-q)$ quantile should be the same on both sides of (\ref{qwerty}). Denote $F^{-1}_X(q)$ a $q-$quantile of $X$ and assume that $F^{-1}_X(q)\neq F^{-1}_X(1-q)$ (since $X$ is non-degenerate, such $q$ exist). We get 
$$
F^{-1}_{a+bX}(q) - F^{-1}_{a+bX}(1-q) = F^{-1}_X(q)- F^{-1}_X(1-q)=:D.
$$
Consider $b\geq 0$. Using linearity of the quantile function, we obtain $a+bF^{-1}_X(q) - \big(a+bF^{-1}_X(1-q)\big) = D$ and hence, $bD=D$, which gives us $b=1$. If $b<0$, then an identity $F^{-1}_{a+bX}(q) = a+\big(1-F^{-1}_{-bX}(1-q)\big) = a+\big(1+bF^{-1}_{X}(1-q)\big)$ hold. Hence, we get  $a+[1+bF^{-1}_X(1-q)] - [a+\big(1+bF^{-1}_X(1-q)\big)] = D$. Rewriting the left side, we get $-bD=D$,  which gives us $b=-1$. 

In the case when $b=1$, trivially $a=0$, since otherwise, $med(a+X)\neq med(X)$. If $b=-1$, then applying median on both sides of (\ref{qwerty}) gives us $med(a-X)= med(X)$ and hence, $a=2med(X)$, as we wanted to show. 
\end{proof}

\begin{lemma}\label{CoolLemma}
Let $\textbf{X}=(X_1, \dots, X_k)$ be a continuous random vector with independent components and $s<k$. 
\begin{enumerate}
\item Let $f:\mathbb{R}^{k}\to\mathbb{R}$ be an injective  function such that there does not exist a decomposition  $ f(\textbf{x}) = f_1(\textbf{x}_S) + f_2(\textbf{x}_{\{1, \dots, k\}\setminus S}), \textbf{x}\in\mathbb{R}^{k}$ for any non-empty $S\subset \{1, \dots, k\}$, where $f_1, f_2$ are some measurable functions. 

 Then, a measurable function $h$ does not exist such that for $s<k$ holds
\begin{equation}\label{yuiMAIN}
f(X_1, \dots, X_k)+  h(X_1, \dots, X_s) \indep (X_1, \dots, X_s).
\end{equation}

\item  Let $f_1, \dots, f_k$ be continuous non-constant real functions. A non-zero function $h$ does not exist such that 
\begin{equation}\label{yui}
  h(X_1, \dots, X_s)\big(f_1(X_1)+\dots+f_k(X_k)\big)\indep (X_1, \dots, X_s).
\end{equation}
\item  Let $f_1, \dots, f_k$ be continuous non-constant non-zero real functions. Then, a non-zero function $h$ does not exist such that 
\begin{equation}\label{yuiDVA}
 h(X_1, \dots, X_s) + f_1(X_1)f_2(X_2)\dots f_k(X_k)\indep (X_1, \dots, X_s).
\end{equation}
\item Let $f:\mathbb{R}^{k-s}\to\mathbb{R}$ be measurable function such that $f(X_{s+1}, \dots, X_k)$ is non-degenerate continuous random variable. Functions $h_1, h_2$ does not exist, such that $h_2$ is positive non-constant and
\begin{equation}\label{yuiTRI}
h_1(X_1, \dots, X_s) + h_2(X_1, \dots, X_s)f(X_{s+1}, \dots, X_k)\indep (X_1, \dots, X_s).
\end{equation}

\end{enumerate}
\end{lemma}
\begin{proof}
We use notation $\textbf{X}_S=(X_1, \dots, X_s)^\top$, $\textbf{X}_{\setminus S}=(X_{s+1},\dots X_k)^\top$. 
Let us introduce functionals (not norms, we only use them to simplify notation) $||\cdot||_{plus}$ and  $||\cdot||_{times}$, defined by $|| \textbf{a} ||_{plus} = a_1 +\dots + a_d$, $|| \textbf{a} ||_{times} = a_1 a_2\dots a_d$, for $\textbf{a}=(a_1, \dots, a_d)^\top\in\mathbb{R}^d$. 

\textbf{Part 1:} For a contradiction, let such $h$ exist. Define $\xi:=h(\textbf{X}_S) + f(\textbf{X}_S, \textbf{X}_{\setminus S}) $, which is the left hand side of (\ref{yuiMAIN}). Fix $\textbf{a}_0\in\mathbb{R}^s$ in the support of $\textbf{X}_S$ and define
$$f_1(\textbf{x}):=h(\textbf{a}_0) - h(\textbf{x}), \,\,\text{for}\,\,\textbf{x}\in\mathbb{R}^{s}, \,\,\,\,\,and\,\,\,\,\,\, f_2(\textbf{x}) := f(\textbf{a}_0, \textbf{x})\,\,\text{for}\,\,  \textbf{x}\in\mathbb{R}^{k-s}.   $$

Since $\xi\indep \textbf{X}_S$, for all $\textbf{x}\in\mathbb{R}^{s}$ holds   $\xi\mid [\textbf{X}_S=\textbf{a}_0] \overset{D}{=}\xi\mid [\textbf{X}_S=\textbf{x}]$. Hence, 
\begin{align*}
h(\textbf{x}) + f(\textbf{x}, \textbf{X}_{\setminus S})\overset{D}{=}h(\textbf{a}_0) + f(\textbf{a}_0, \textbf{X}_{\setminus S})\end{align*}
\begin{equation}
    \label{awrefarefs}
    f(\textbf{x}, \textbf{X}_{\setminus S})\overset{D}{=} f_1(\textbf{x}) +f_2(\textbf{X}_{\setminus S}). 
\end{equation}
To extend the equality from equality in distribution to equality everywhere, we use Lemma~\ref{lemma_additivity}. We found an additive decomposition of $f$, which is the desired contradiction. 

\textbf{Part 2: }For a contradiction, let such $h$ exist. First, some notation: Let $Y=f_{s+1}(X_{s+1})+\dots+f_{k}(X_k)$ and define $\xi:=h(\textbf{X}_S)(||f_S(\textbf{X}_S)||_{plus}+Y)$, where  $f_S: \mathbb{R}^s\to\mathbb{R}^s:f_S(\textbf{x})= (f_1(x_1), \dots, f_s(x_s))^\top$, which is the left hand side of (\ref{yui}). 

Choose $\textbf{a}, \textbf{b}, \textbf{c}\in\mathbb{R}^s$ in the support of $\textbf{X}_S$ such that $||f_S(\textbf{a})||_{plus}, ||f_S(\textbf{b})||_{plus}, ||f_S(\textbf{c})||_{plus}$ are distinct and $h(\textbf{b})\neq 0$ (it is possible since $h_i$ are non-constant). 

Since $\xi\indep \textbf{X}_S$, then  $\xi\mid [\textbf{X}_S=\textbf{a}] \overset{D}{=}\xi\mid [\textbf{X}_S=\textbf{b}] \overset{D}{=}\xi\mid [\textbf{X}_S=\textbf{c}]$. Hence, 
\begin{equation}\label{asdfgh}
h(\textbf{a})(||f_S(\textbf{a})||_{plus}+Y)\overset{D}{=}h(\textbf{b})(||f_S(\textbf{b})||_{plus}+Y)\overset{D}{=}h(\textbf{c})(||f_S(\textbf{c})||_{plus}+Y).
\end{equation}
By dividing by a non-zero constant $h(\textbf{b})$ and subtracting a constant $||f_S(\textbf{b})||_{plus}$, we get
$$
\frac{h(\textbf{a})}{h(\textbf{b})}||f_S(\textbf{a})||_{plus}-||f_S(\textbf{b})||_{plus}+\frac{h(\textbf{a})}{h(\textbf{b})}Y\overset{D}{=}Y\overset{D}{=}\frac{h(\textbf{c})}{h(\textbf{b})}||f_S(\textbf{c})||_{plus}-||f_S(\textbf{b})||_{plus}+\frac{h(\textbf{c})}{h(\textbf{b})}Y.
$$
Now we use Lemma \ref{distributionalequalitylemma}. It gives us that $\frac{f(\textbf{a})}{f(\textbf{b})}=\pm 1$ and also  $\frac{f(\textbf{c})}{f(\textbf{b})}=\pm 1$. Therefore, at least two values of $f(\textbf{a}), f(\textbf{b}), f(\textbf{c})$ must be equal (and neither of them are zero). WLOG $f(\textbf{a})= f(\textbf{c})$. Plugging this into equation (\ref{asdfgh}), we get $||h_S(\textbf{a})||_{plus}=||h_S(\textbf{c})||_{plus}$, which is a contradiction since we chose them to be distinct. 

\textbf{Part 3: } We proceed in a similar way to the previous part. For a contradiction, let such $h$ exist. First, some notation: let $Y=f_{s+1}(X_{s+1})\dots f_{k}(X_k)$ and define $\xi:=h(\textbf{X}_S) + (||f_S(\textbf{X}_S)||_{times} \cdot Y)$, where  $f_S: \mathbb{R}^s\to\mathbb{R}^s: f_S(\textbf{x})= (f_1(x_1), \dots, f_s(x_s))^\top$, which is the left hand side of (\ref{yuiDVA}). 

Choose $\textbf{a}, \textbf{b}, \textbf{c}\in\mathbb{R}^s$ in the support of $\textbf{X}_S$ such that $||f_S(\textbf{a})||_{times}, ||f_S(\textbf{b})||_{times}, ||f_S(\textbf{c})||_{times}$ are distinct and $||f_S(\textbf{b})||_{times}\neq 0$. 

Since $\xi\indep \textbf{X}_S$, then  $\xi\mid [\textbf{X}_S=\textbf{a}] \overset{D}{=}\xi\mid [\textbf{X}_S=\textbf{b}] \overset{D}{=}\xi\mid [\textbf{X}_S=\textbf{c}]$. Hence,

\begin{equation}\label{asdfghDVA}
h(\textbf{a}) + ||f_S(\textbf{a})||_{times}\cdot Y\overset{D}{=}h(\textbf{b})+||f_S(\textbf{b})||_{times}\cdot Y\overset{D}{=}h(\textbf{c}) + ||f_S(\textbf{c})||_{times}\cdot Y.
\end{equation}
By dividing by a non-zero constant $||f_S(\textbf{b})||_{times}$ and subtracting constant $h(\textbf{b})$, we get
$$
h(\textbf{a}) - h(\textbf{b}) + \frac{||f_S(\textbf{a})||_{times}}{||f_S(\textbf{b})||_{times}}Y \overset{D}{=}Y\overset{D}{=} h(\textbf{c}) - h(\textbf{b}) + \frac{||f_S(\textbf{c})||_{times}}{||f_S(\textbf{b})||_{times}}Y
$$

Now we use lemma \ref{distributionalequalitylemma}. It gives us that $\frac{||f_S(\textbf{a})||_{times}}{||f_S(\textbf{b})||_{times}}=\pm 1$ and also  $\frac{||f_S(\textbf{c})||_{times}}{||f_S(\textbf{b})||_{times}}=\pm 1$. Therefore, at least two values of $||f_S(\textbf{a})||_{times}, ||f_S(\textbf{b})||_{times}, ||f_S(\textbf{c})||_{times}$ must be equal, which is a contradiction since we chose them to be  distinct. 

\textbf{Part 4: } For a contradiction, let $h_1, h_2$ exist. Denote $Y = f(X_{s+1}, \dots, X_k)$. Choose $\textbf{a}, \textbf{b}\in\mathbb{R}^s$ in the support of $\textbf{X}_S$ such that $h_2(\textbf{a})\neq h_2(\textbf{b})\neq 0$. From (\ref{yuiTRI}), we get $h_1(\textbf{a}) + h_2(\textbf{a})Y \overset{D}{=}h_1(\textbf{b}) + h_2(\textbf{b})Y$. By rewriting, we get $\frac{h_1(\textbf{a})-h_1(\textbf{b})}{h_2(\textbf{b})} + \frac{h_2(\textbf{a})}{h_2(\textbf{b})}Y \overset{D}{=}Y$. Applying Lemma  \ref{distributionalequalitylemma}, we obtain $\frac{h_2(\textbf{a})}{h_2(\textbf{b})}=\pm 1$. Since $h_2$ is positive, we get $h_2(\textbf{a}) = h_2(\textbf{b})$. This is a contradiction. 
\end{proof}

\begin{lemma}
\label{lemma_additivity}
Let $X$ be a random variable with strictly increasing distribution function $F_X$. Let $f(x,y)$ be a function for which an inverse with respect to $y$ exists (e.g. if $f$ is injective or strictly increasing and continuous in y, as stated by the Inverse Function Theorem \citep{Inverse_function_theorem}). Let
\begin{equation}
\label{lemma_additivity_equation}
    f(x, X) \overset{D}{=}h_1(x) + h_2(X), \,\,\,\,\,\,\,\,for\,\,all\,\,\,x\in\mathbb{R}, 
\end{equation}
for some functions $h_1, h_2$, where $h_2$ is measurable.

Then, there exist functions $\tilde{h}_1, \tilde{h}_2$, where $\tilde{h}_2$ is measurable, such that 
$$
f(x, y) \overset{}{=}\tilde{h}_1(x) + \tilde{h}_2(y), \,\,\,\,\,\,\,\,for\,\,all\,\,\,x,y\in\mathbb{R}.
$$
\end{lemma}

\begin{proof}
\label{proof of lemma_additivity}Let $g(x, y):=f(x, y) - h_1(x)$. We have that $g(x, X) \overset{D}{=}g(\tilde{x}, X)$ for all $x, \tilde{x}$. 
Let $g^{-1}(x, y)$ denote the inverse of $g(x, y)$ with respect to $y$ for a given $x$. The existence of this inverse follows directly from the existence of an inverse of $f$.  

Let $a\in\mathbb{R}$. Then 
$$ \mathbb P \left (g(x,X) \le a \right)=\mathbb P \left (X \le g^{-1}(x, a) \right)=F_X(g^{-1}(x, a)), \, \forall x\in\mathbb{R},$$where $F_X$ is the distribution function of $X$. 

Therefore
$F_X(g^{-1}(x, a)) = F_X(g^{-1}(\tilde{x}, a))$ for all $x, \tilde{x}\in\mathbb{R}$. Since $F_X$ is strictly increasing function, we obtain $g^{-1}(x, a) = g^{-1}(\tilde{x}, a)$. We showed that $g^{-1}$ does not depend on $x$. Since
$$y=g^{-1}(x, a) \Leftrightarrow g(x,y)=a,$$ we also obtain that $g(x, a) = g(\tilde{x}, a)$. Therefore, $g(x, a)$ does not depend on $x$ and we can write $g(x,y) = \tilde{h}_2(y)$ for some function $\tilde{h}_2$. We showed that 
$f(x, y) = h_1(x) + \tilde{h}_2(y)$ for all $x, y$.     
\end{proof}
\textit{Remark:} We show a periodic function $f$ such that the statement of Lemma~\ref{lemma_additivity} is not valid.

Let $Y \sim N(0,1)$. Define the continuous function $F_Y(y) = P[Y\leq y]$ for $y \in \mathbb{R}$. Note that $F_Y(Y) \sim \mbox{U}(0,1)$.
Define the continuous functions $f:\mathbb{R}^2\rightarrow\mathbb{R}$, $h_1:\mathbb{R}\rightarrow\mathbb{R}$, $h_2:\mathbb{R}\rightarrow\mathbb{R}$ by
\begin{align*}
f(x,y) &= \cos(2\pi F_Y(y) + x) \quad \forall (x,y)\in\mathbb{R}^2\\
h_1(x) &= 0,  \,\,\,\,\,\,\,\,\,
h_2(y)= \cos(2\pi F_Y(y)) \quad \forall x,y \in \mathbb{R}.
\end{align*} 
Then 
$$ f(x,Y) = \cos(2\pi F_Y(Y) + x)\overset{D}{=} \cos(2\pi F_Y(Y)) \quad \forall x \in \mathbb{R}.$$
In particular, 
$$f(x,Y) \overset{D}{=} h_1(x) + h_2(Y) \quad \forall x \in \mathbb{R}.$$
However, $f(x,y)$ does not have the form $\tilde{h}_1(x)+\tilde{h}_2(y)$.

\begin{lemma}
Let $X,Y$ be continuous random variables and $f$ is a (non-random) injective function on the support of $X$. Then, 
\begin{equation}\label{trivial_identity}
X\indep Y \iff f(X)\indep Y.
\end{equation}
\end{lemma}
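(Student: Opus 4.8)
The plan is to derive both implications from a single general principle—independence is preserved when a measurable function is applied to one of the two variables—and then to use the invertibility of $f$ to obtain the converse for free.

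First I would establish the forward direction $X\indep Y \implies f(X)\indep Y$, which holds for \emph{any} measurable $f$ (strict monotonicity is not yet needed here). The cleanest way is to verify the factorization of the joint law directly: for Borel sets $A,B\subseteq\mathbb{R}$ one has $\{f(X)\in A\}=\{X\in f^{-1}(A)\}$, so that $P(f(X)\in A,\,Y\in B)=P(X\in f^{-1}(A),\,Y\in B)=P(X\in f^{-1}(A))\,P(Y\in B)=P(f(X)\in A)\,P(Y\in B)$, where the middle equality is exactly the independence of $X$ and $Y$. Equivalently, one may note $\sigma(f(X))\subseteq\sigma(X)$ and invoke that a sub-$\sigma$-algebra of a $\sigma$-algebra independent of $\sigma(Y)$ is itself independent of $\sigma(Y)$.

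Second, for the converse $f(X)\indep Y \implies X\indep Y$, I would use that a strictly increasing function is injective and therefore admits an inverse $f^{-1}$ defined on the range $f(\mathbb{R})$. Since $f^{-1}$ is itself strictly increasing on its domain, it is monotone and hence Borel measurable. As $X=f^{-1}(f(X))$ pointwise, I can simply re-apply the forward implication already proven, now with the measurable map $f^{-1}$ playing the role of $f$ and with $f(X)$ playing the role of $X$: this yields $f(X)\indep Y \implies f^{-1}(f(X))=X\indep Y$, completing the equivalence.

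There is essentially no obstacle here; the only point that genuinely requires the strict monotonicity hypothesis (rather than mere measurability of $f$) is the measurability of $f^{-1}$ on $f(\mathbb{R})$, which is what makes the converse reduce to the forward direction. This is immediate since a monotone real function is automatically Borel measurable. I would also remark that the continuity of $X$ and $Y$ is not actually used—the statement and proof go through for arbitrary real random variables—so the lemma is, as its equation label $(\ding{95})$ suggests, a routine bookkeeping fact recorded for later reference.
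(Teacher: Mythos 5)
Your proof is correct: the paper itself offers no argument (its proof reads, in full, ``This statement is trivial''), and your two-step argument---independence is inherited by the sub-$\sigma$-algebra $\sigma(f(X))\subseteq\sigma(X)$, and the converse reduces to the forward direction because the inverse of a strictly increasing function is monotone on $f(\mathbb{R})$ and hence Borel measurable---is precisely the standard reasoning that the paper's remark implicitly appeals to. Your side observations are also accurate: strict monotonicity is needed only to make the converse go through, and the continuity hypothesis on $X$ and $Y$ is never used.
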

\begin{proof}
This statement is trivial. 
\end{proof}

\section*{D. Appendix: Proofs}
\label{Section_proofs}

\begin{customprop}{
\ref{TheoremFidentifiabilityWithChild}}
Let $(X_0, \textbf{X})$ follow an (identifiable) restricted $\mathcal{F}_A$-model with DAG $\mathcal{G}$, such that all $\textbf{X}$ are neighbors of $X_0$ in $\mathcal{G}$.   Let $S \subseteq \{1, \dots, p\}$ contain a child of $X_0$ in $\mathcal{G}$.  Then, $S$ is not $\mathcal{F}_A$-plausible. 
\end{customprop}

\begin{proof}
 \label{proof of TheoremFidentifiabilityWithChild} 
    For a contradiction, let $S$ be $\mathcal{F}_A$-plausible. Without loss of generality, let $X_1$ be a childless child of $X_0$ such that $1\in S$ (the set of children of $X_0$ is nonempty by assumption, and one of them must be childless to avoid cycles). The idea of the proof is that we define two bivariate $\mathcal{F}_A$-models, one with $X_0\to X_1$ and one with $X_1\to X_0$, which will lead to a contradiction with the identifiability of the original restricted $\mathcal{F}_A$-model. 

Since $(X_0, \textbf{X})$ follow an $\mathcal{F}_A$-model, we can write  $X_i = f_i(\textbf{X}_{pa_i}) + \eta_i$, where $f_i$ are some measurable functions and $\eta_i$ are jointly independent, $i\in \{0, \dots, p\}$. Specifically, we have 
\begin{equation*}
    \begin{split}
        X_0 = f_0(\textbf{X}_{pa_0}) + \eta_0, \,\,\,\,\,\,\,X_1 = f_1(X_0, \textbf{X}_{pa_1\setminus\{1\}}) + \eta_1, 
    \end{split}
\end{equation*}
where $\eta_1 \indep \textbf{X}_{\{0, 2, 3, \dots, p\}}$ .Conditioning on $\textbf{X}_{\{ 2, 3, \dots, p\}} = \textbf{x}$, we obtain 
\begin{equation*}
    \begin{split}
\textbf{SCM\,1:} \,\,\,\,\,\,\,\,\,\,\,\,\,       X_0 = \tilde{\eta}_0, \,\,\,\,\,\,\,X_1 = f_1(X_0, \textbf{x}_{pa_1\setminus\{1\}}) + \eta_1, 
    \end{split}
\end{equation*}
where $\tilde{\eta}_0\sim X_0\mid\textbf{X}_{\{2, 3, \dots, p\}} = \textbf{x}$ and $\eta_1\indep X_0$.

From the fact that $S$ is $\mathcal{F}_A$-plausible, we can find a function $f$ such that $\eta_S:=X_0 - f(\textbf{X}_S)$ satisfies $\eta_S\indep \textbf{X}_S$. Hence, we can write $$
X_0 =f(X_1, \textbf{X}_{S\setminus\{1\}}) + \eta_S,$$ where $\eta_S\indep \textbf{X}_S$.  Conditioning on  $\textbf{X}_{\{2, 3, \dots, p\}} = \textbf{x}$, we obtain
$$
\textbf{SCM\,2:} \,\,\,\,\,\,\,\,\,\,\,\,\,      X_1 = \tilde{\eta}_1, \,\,\,\,\,X_0 =f(X_1, \textbf{x}_{S\setminus\{1\}}) + \eta_S,$$ where $\tilde{\eta}_1\sim X_1\mid X_{\{2, 3, \dots, p\}} = \textbf{x}$  and $\eta_S\indep X_1$. 

Notice that in both Models 1 and 2, the joint distribution of $(X_0, X_1)$ is equal to $P_{X_0, X_1\mid (X_2, \dots, X_p)=\textbf{x}}$ and hence, we were able to find two additive noise models generating the same joint distribution, where the first model follows restricted additive noise model. This is a direct contradiction with the definition of restricted additive noise model. Therefore,  $S$ is not $\mathcal{F}_F$-plausible. 
\end{proof}

\begin{customlem}{\ref{pairwise_implies_global}}
Pairwise identifiable $\mathcal{F}$-model defined in Appendix~\ref{Appendix_pairwise_identifiability} is identifiable.  
\end{customlem}

\begin{proof}\label{Proof of pairwise_implies_global}
For a contradiction, let there be two $\mathcal{F}$-models with causal graphs $\mathcal{G}\neq \mathcal{G}'$ that both generate the same joint distribution $P_{(X_0, \textbf{X})}$.   Using Proposition 29 in \cite{Peters2014}, variables $L,K\in \{X_0, \dots, X_p\}$ exist, such that 
\begin{itemize}
\item $K\to L$ in $\mathcal{G}$ and $L\to K$ in $\mathcal{G}'$,
\item $S:=\underbrace{\big\{pa_L(\mathcal{G})\setminus\{K\}\big\}}_\text{\textbf{Q}}\cup\underbrace{\big\{pa_K(\mathcal{G}')\setminus\{L\}\big\}}_\text{\textbf{R}}\subseteq \big\{nd_L(\mathcal{G}) \cap nd_K(\mathcal{G}')\setminus\{K,L\}\big\} $. 
\end{itemize}
For this $S$, choose $x_S$ according to the condition in the definition of pairwise identifiability. We use the notation $x_S=(x_q, x_r)$, where $q\in \textbf{Q}, r\in \textbf{R}$, and we define $K^\star := K\mid \{X_S=x_S\}$ and $L^\star := L\mid \{X_S=x_S\}$.  Now we use Lemma 36 and Lemma 37 from \cite{Peters2014}.  Since $K\to L$ in $\mathcal{G}$, we get $$K^\star=\tilde{\varepsilon}_{K^\star},\,\,\,\,\,\,\,\,\,\, L^\star = f_{L^\star}(K^\star, \varepsilon_L),$$ 
where $\tilde{\varepsilon}_{K^\star} = K\mid \{X_S=x_S\}$ and $\varepsilon_L\indep K^\star$. We obtained a bivariate $\mathcal{F}$-model with $K^\star\to L^\star$. However, the same holds for the other direction; from $L\to K$ in $\mathcal{G}'$, we get $$L^\star=\tilde{\varepsilon}_{L^\star},\,\,\,\,\,\,\,\,\,\, K^\star = f_{K^\star}(L^\star, \varepsilon_K),$$ 
 where $\tilde{\varepsilon}_{L^\star} =L\mid \{X_S=x_S\}$ and $\varepsilon_K\indep L^\star$. We obtained a bivariate $\mathcal{F}$-model with $L^\star\to K^\star$, which is a contradiction. 
\end{proof}

\begin{customprop}{\ref{proposition_for_pairwise_F_model_identifiability}}
Let $(X_0, \textbf{X})$ follow a pairwise identifiable $\mathcal{F}$-model with DAG $\mathcal{G}$, such that all $\textbf{X}$ are neighbors of $X_0$ in $\mathcal{G}$.   Let $S \subseteq \{1, \dots, p\}$ contain a child of $X_0$ in $\mathcal{G}$.  Then, $S$ is not $\mathcal{F}$-plausible. 
\end{customprop}

\begin{proof}\label{Proof of proposition_for_pairwise_F_model_identifiability}
    For a contradiction, let $S$ be $\mathcal{F}_A$-plausible. Without loss of generality, let $X_1$ be a childless child of $X_0$ such that $1\in S$ (the set of children of $X_0$ is nonempty by assumption, and one of them must be childless to avoid cycles). The idea of the proof is that we define two bivariate $\mathcal{F}$-models, one with $X_0\to X_1$ and one with $X_1\to X_0$, which will lead to a contradiction with the pairwise identifiability. 

  Since $(X_0, \textbf{X})$ follow an $\mathcal{F}$-model, we can write  $X_i = f_i(\textbf{X}_{pa_i(\mathcal{G})}, \varepsilon_i)$, where $f_i\in\mathcal{F}$ and $\varepsilon_i$ are jointly independent, $i\in \{0, 1, \dots, p\}$. We use the  pairwise identifiability condition. For a specific choice $(X_0, X_1)$ and $\tilde{S} = nd_1(\mathcal{G})\setminus\{0,1\} = \{2, \dots, p\}$ (the second equality holds since $X_1$ is a childless child),  $\textbf{x}_{\tilde{S}}: p_{\tilde{S}}(\textbf{x}_{\tilde{S}})>0$ exists, satisfying the condition that a bivariate $\mathcal{F}$-model defined as
\begin{equation}\label{tyuiop}
\tilde{X}_0=\tilde{\varepsilon}_0, \tilde{X}_1 = \tilde{f}_1(\tilde{X}_0, \tilde{\varepsilon}_1)
\end{equation}
is identifiable, where  $P_{\tilde{\varepsilon}_0} = P_{X_0\mid \textbf{X}_{\tilde{S}} = x_{\tilde{S}}}    $ and $\tilde{f}_1(x, \varepsilon) =f(\textbf{x}_{pa_1\setminus\{0\}}, x, \varepsilon)$, $\tilde{\varepsilon}_1\indep \tilde{\varepsilon}_0$ . 

From the fact that $S$ is $\mathcal{F}$-plausible, $f\in\mathcal{F}$ exists, such that $\varepsilon_S:=f^\leftarrow(\textbf{X}_S, X_0)$ satisfies $\varepsilon_S\indep \textbf{X}_S, \varepsilon_S\sim U(0,1)$. Hence, we can define a model  $$\tilde{\tilde{X}}_1 = \tilde{\tilde{\varepsilon}}_1 , \tilde{\tilde{X}}_0 = \tilde{\tilde{f}}(\tilde{\tilde{X}}_1, \varepsilon_S),$$ where $P_{\tilde{\tilde{\varepsilon}}_1} = P_{X_1\mid \textbf{X}_{\tilde{S}} = x_{\tilde{S}}}    $ and $\tilde{\tilde{f}}(\textbf{x}, \varepsilon) =f(\textbf{x}_{\tilde{S}}, x, \varepsilon)$. In this model, $\varepsilon_S\indep \tilde{\tilde{\varepsilon}}_1$. 

Now, note that $(\tilde{X}_0,\tilde{X}_1)\overset{D}{=}(\tilde{\tilde{X}}_0, \tilde{\tilde{X}}_1)$, since both sides are distributed as $\big[(X_0, X_1)\mid X_{\tilde{S}}\big]$.  This is a contradiction with the identifiability of (\ref{tyuiop}). Therefore,  $S$ is not $\mathcal{F}_F$-plausible.
\end{proof}

\begin{customlem}{\ref{LemmaAboutUnidentifiabilityFL}}
Let $(Y, \textbf{X})\in\mathbb{R}\times \mathbb{R}^p$ follow an $\mathcal{F}_L$-model with DAG $\mathcal{G}_0$ and $pa_Y(\mathcal{G}_0)\neq\emptyset$. Then, $|S_{\mathcal{F}_L}(Y)| \leq 1$ ($|S|$ represents the number of elements of the set $S$). Moreover, if  $a,b\in an_Y(\mathcal{G}_0)$ that are d-separated in $\mathcal{G}_0$ exist, then $S_{\mathcal{F}_L}(Y) = \emptyset$. 
\end{customlem}

\begin{proof}\label{Proof of LemmaAboutUnidentifiabilityFL}
First, we show $|S_{\mathcal{F}_L}(Y)| \leq 1$. Let $a\in an_Y(\mathcal{G}_0)\cap Source(\mathcal{G}_0)$. Such $a$ exists since $pa_Y(\mathcal{G}_0)\neq\emptyset$. We show that $S = \{a\}$ is an $\mathcal{F}_L-$plausible set. 

Denote $X_0:= Y$.  Since $\mathcal{G}_0$ is acyclic, it is possible to express recursively each variable $X_j , j= 0, \dots, p, $ as a weighted sum of the noise terms $\varepsilon_0, \dots, \varepsilon_p$ that belong to the ancestors of $X_j$. Let us write the Linear SCM with notation 
\begin{equation*}
X_i = \sum_{j\in pa_i}\beta_{j,i}X_j + \varepsilon_i  = \sum_{j\in an_i}\beta_{j\to i}\varepsilon_j,
\end{equation*}
where $\beta_{j,i}$ are non-zero constants and $\beta_{j\to i}$ is the sum of distinct weighted directed paths from node $j$ to node $i$, with a convention $\beta_{j\to j} := 1$.\footnote{To provide an example of the notation, if $X_1 = \varepsilon_1, X_2 = 2X_1 + \varepsilon_2, X_3 = 3X_1 + 4X_2+\varepsilon_3$, then $X_3 = 11\varepsilon_1 + 4\varepsilon_2 + 1\varepsilon_3 = \beta_{1\to 3}\varepsilon_1 +\beta_{2\to 3}\varepsilon_2 + \beta_{3\to 3}\varepsilon_1 $.}

Using this notation, note that 
$$X_0 = \sum_{j\in an_0}\beta_{j\to i}\varepsilon_j = \beta_{a\to 0}\varepsilon_a + \sum_{j\in an_0\setminus \{a\}}\beta_{j\to i}\varepsilon_j =\beta_{a\to 0} X_a + \sum_{j\in an_0\setminus \{a\}}\beta_{j\to i}\varepsilon_j ,$$ where $X_a\indep \sum_{j\in an_i\setminus \{a\}}\beta_{j\to i}\varepsilon_j$ since $a\in Source(\mathcal{G}_0)$. Hence, $Y - \beta_{a\to 0} X_a\indep X_a$, which is almost the definition of $\mathcal{F}_L$-plausibility of set $S = \{a\}$. More rigorously, for  $S = \{a\}$, we can find $f\in\mathcal{F}_L$ such that  $f_Y^{\leftarrow}({X}_{S}, Y)\indep X_S$ and $f_Y^{\leftarrow}({X}_{S}, Y)\sim U(0,1)$. This function can be defined as 
$$
f(x, \varepsilon) = \beta_{a\to 0}x + g^{-1}(\varepsilon), \,\,\,x\in\mathbb{R}, \varepsilon\in (0,1),
$$
where $g$ is the distribution function of $(Y - \beta_{a\to 0}X_S)$. This function obviously satisfies $f\in\mathcal{F}_L$. Moreover, since $f_Y^{\leftarrow}(X_{S}, Y) = g(Y - \beta_{a\to 0}X_S)$, it holds that $f_Y^{\leftarrow}({X}_{S}, Y)\indep X_S$ and $f_Y^{\leftarrow}({X}_{S}, Y)\sim U(0,1)$, which is what we wanted to show. Hence,  $|S_{\mathcal{F}_L}(Y)|\leq 1$, since $S_{\mathcal{F}_L}(Y)\subseteq S = \{a\}$. 

Now, let $a,b\in an_Y(\mathcal{G}_0)$ that are d-separated in $\mathcal{G}_0$. Let $a', b'\in \mathcal{G}_0$ such that $a'\in \big\{an_a(\mathcal{G}_0)\cup \{a\}\big \}\cap Source(\mathcal{G}_0)$,   $b'\in \big\{an_b(\mathcal{G}_0)\cup \{b\}\big \}\cap Source(\mathcal{G}_0)$. They are well defined since the sets $an_a(\mathcal{G}_0)\cup \{a\}$,  $an_b(\mathcal{G}_0)\cup \{b\}$  must contain some source node. Since $a,b$ are d-separated,   $ \big\{an_a(\mathcal{G}_0)\cup \{a\}\big \}$ and $\big\{an_b(\mathcal{G}_0)\cup \{b\}\big \}$ are disjoint sets, $a'\neq b'$ (they are even d-separated in $\mathcal{G}_0$). 

Using the same argument as in the first part of the proof, since $a'\in an_Y(\mathcal{G}_0)\cap Source(\mathcal{G}_0)$, it holds that  $S = \{a\}$ is an $\mathcal{F}_L-$plausible set.  $S = \{b\}$ is also an $\mathcal{F}_L-$plausible set since $b'\in an_Y(\mathcal{G}_0)\cap Source(\mathcal{G}_0)$. Together, $S_{\mathcal{F}_L}(Y)\subseteq \{a\}$ and $S_{\mathcal{F}_L}(Y)\subseteq \{b\}$. We showed that  $S_{\mathcal{F}_L}(Y) = \emptyset$. 
\end{proof}

\begin{customlem}{\ref{lemma158}}
 Let $\mathcal{F}\subseteq\mathcal{I}_m$. Let $(X_0, \textbf{X})\in\mathbb{R}\times \mathbb{R}^p$ follow an $\mathcal{F}$-model with DAG $\mathcal{G}_0$ and $pa_{X_0}(\mathcal{G}_0)\neq \emptyset$. Let  $S\subseteq \{1, \dots, p\}$ be a non-empty set. If $(X_0, \textbf{X})$ is marginalizable to $S\cup\{0\}$, then $S_{\mathcal{F}}(X_0)\subseteq S$. 
\end{customlem}
\begin{proof}
 \label{Proof of lemma158}
 Since  $(X_0, \textbf{X})$ is marginalizable to $S\cup\{0\}$,  $(X_0, \textbf{X}_S)$ follows an $\mathcal{F}$-model. Therefore, $f_0\in\mathcal{F}$ exists, such that $X_0 = f_0(X_{\tilde{S}}, \varepsilon_0)$ for some $\tilde{S}\subseteq S$, $\varepsilon_0\indep X_{\tilde{S}}$, $\varepsilon_0\sim U(0,1)$. 
In other words, $ f_0^{\leftarrow}(X_{\tilde{S}}, X_0)\indep X_{\tilde{S}}$,  $f_0^{\leftarrow}(X_{\tilde{S}}, X_0)\sim U(0,1)$, which is exactly the definition of $\mathcal{F}$-plausibility. Hence, $\tilde{S}$ is $\mathcal{F}$-plausible and consequently,  $S_{\mathcal{F}}(X_0)\subseteq \tilde{S}\subseteq S$. 
 \end{proof}

\begin{customprop}{\ref{Support_proposition}}
Let $(Y, \textbf{X})\in\mathbb{R}\times \mathbb{R}^p$ follow an SCM with DAG $\mathcal{G}_0$. Let $S\subseteq\{1, \dots, p\}$ be a non-empty set.

\begin{itemize}
    \item (Additive case) Let  $\underline{\Psi}: \mathbb{R}^{\mid S\mid}\to \mathbb{R}$ be a finite lower support of $ (Y\mid \textbf{X}_S=\textbf{x}_S)$. 
If
\begin{equation} \tag{\ref{eq9987_additive}}
Y - \underline{\Psi}(\textbf{X}_S)\not\indep \textbf{X}_S,
\end{equation}
then $S$ is not $\mathcal{F}_{A}$-plausible. 

 \item (Location-scale case) Let  $\underline{\Psi},\overline{\Psi}: \mathbb{R}^{\mid S\mid}\to \mathbb{R}$ be real functions such that
\begin{equation*} 
supp(Y\mid \textbf{X}_S=\textbf{x}) = \big(\underline{\Psi} (\textbf{x}),\overline{\Psi}(\textbf{x})\big), \,\,\,\,\,\,\, \forall \textbf{x}\in supp(\textbf{X}_S).
\end{equation*}
If
\begin{equation}\tag{\ref{eq9987}}
\frac{Y - \underline{\Psi}(\textbf{X}_S)}{\overline{\Psi}(\textbf{X}_S) - \underline{\Psi}(\textbf{X}_S)}\not\indep \textbf{X}_S,
\end{equation}
then $S$ is not $\mathcal{F}_{LS}$-plausible. 
\end{itemize}
\end{customprop}

\begin{proof}\label{Proof of Support_proposition}
\textbf{First bullet-point: }
For a contradiction, let $S$ be  $\mathcal{F}_{A}$-plausible. Hence, $f\in\mathcal{F}_{A}$ exists such that 
\begin{equation}\label{eq74_additive}
f^{\leftarrow}(\textbf{X}_S, Y)\indep\textbf{ X}_S.
\end{equation}
Since $f\in\mathcal{F}_{A}$, we can write $f^{\leftarrow}(\textbf{x},y) = q\big(y - \mu(\textbf{x})\big)$  for some function $\mu(\cdot)$ and for some  distribution function $q(\cdot)$. Using this notation, (\ref{eq74_additive}) is equivalent to
\begin{equation} \label{eq989_additive}
Y - {\mu}(\textbf{X}_S)\indep \textbf{X}_S.
\end{equation}
Denote $W_{\textbf{x}}:= (Y\mid \textbf{X}_S=\textbf{x})$. From (\ref{eq989_additive}), we get that for all $\textbf{x},\textbf{y}$ in the support of $\textbf{X}_S$, it must hold that
\begin{equation}\label{eq7285_additive}
W_\textbf{x} - {\mu}(\textbf{x})\overset{D}{=}W_\textbf{y} - {\mu}(\textbf{y}).
\end{equation}
Hence, supports must also match, i.e., (\ref{eq7285_additive}) implies
\begin{align*}
\ \underline{\Psi} (\textbf{x}) - {\mu}(\textbf{x})&\overset{}{=} \underline{\Psi} (\textbf{y}) - {\mu}(\textbf{y}),
\end{align*}
for all $\textbf{x},\textbf{y}$ in the support of $\textbf{X}_S$. Solving for $\mu$  gives us 
\begin{align*}
{\mu}(\textbf{x})&\overset{}{=}c_1+ \underline{\Psi} (\textbf{x}) , 
\end{align*}
where $c_1 \in  \mathbb{R}$ is some constant. Plugging this into (\ref{eq989_additive}) gives us a contradiction with (\ref{eq9987_additive}). 

\textbf{Second bullet-point:} For a contradiction, let $S$ be  $\mathcal{F}_{LS}$-plausible. Hence, $f\in\mathcal{F}_{LS}$ exists such that 
\begin{equation}\label{eq74}
f^{\leftarrow}(\textbf{X}_S, Y)\indep\textbf{ X}_S.
\end{equation}
Since $f\in\mathcal{F}_{LS}$, we can write $f^{\leftarrow}(\textbf{x},y) = q\big(\frac{y - \mu(\textbf{x})}{\sigma(\textbf{x})}\big)$  for some functions $\mu(\cdot), \sigma(\cdot)>0$ and for some (continuous) distribution function $q(\cdot)$. Using this notation, (\ref{eq74}) is equivalent to
\begin{equation} \label{eq989}
\frac{Y - {\mu}(\textbf{X}_S)}{{\sigma}(\textbf{X}_S)}\indep \textbf{X}_S.
\end{equation}
Denote $W_{\textbf{x}}:= (Y\mid \textbf{X}_S=\textbf{x})$. From (\ref{eq989}), we get that for all $\textbf{x},\textbf{y}$ in the support of $\textbf{X}_S$, it must hold that
\begin{equation}\label{eq7285}
\frac{W_\textbf{x} - {\mu}(\textbf{x})}{{\sigma}(\textbf{x})}\overset{D}{=}\frac{W_\textbf{y} - {\mu}(\textbf{y})}{{\sigma}(\textbf{y})}.
\end{equation}
Hence, supports must also match, i.e., (\ref{eq7285}) implies
\begin{align*}
\frac{ \underline{\Psi} (\textbf{x}) - {\mu}(\textbf{x})}{{\sigma}(\textbf{x})}&\overset{}{=}\frac{ \underline{\Psi} (\textbf{y}) - {\mu}(\textbf{y})}{{\sigma}(\textbf{y})}, \,\,\,\,\,\,\,\,\,\,\,\,\,\,\,\,\,\,\,\,\,\,\,\,\,\,
\frac{ \overline{\Psi}(\textbf{x}) - {\mu}(\textbf{x})}{{\sigma}(\textbf{x})}\overset{}{=}\frac{ \overline{\Psi}(\textbf{y}) - {\mu}(\textbf{y})}{{\sigma}(\textbf{y})},
\end{align*}
for all $\textbf{x},\textbf{y}$ in the support of $\textbf{X}_S$. Solving for $\mu, \sigma$  gives us 
\begin{align*}
{\mu}(\textbf{x})&\overset{}{=}c_1+ \underline{\Psi} (\textbf{x}) , \,\,\,\,\,\,\,\,\,\,\,
\sigma(\textbf{x})\overset{}{=}c_2\cdot [\overline{\Psi} (\textbf{x})-\underline{\Psi}(\textbf{x})],
\end{align*}
where $c_1 \in  \mathbb{R}, c_2 \in \mathbb{R}_{+}$ are some constants. Plugging this into (\ref{eq989}) gives us a contradiction with (\ref{eq9987}). 
\end{proof}

\begin{customprop}{\ref{LemmaOLocationScaleinseparabilite}}
Let \((Y, \mathbf{X}) \in \mathbb{R} \times \mathbb{R}^p\) is continuous and satisfy \eqref{SCM_for_Y} with \(pa_Y \neq \emptyset\) and $\mathcal{F} = \mathcal{F}_{LS}$. Then,  $S\subsetneq pa_Y$ is not $\mathcal{F}_{LS}$-plausible if $\textbf{X}_{pa_Y}$ has independent components and $f_Y$  have the form $$f_Y(\textbf{x}, \varepsilon)=\mu(\textbf{x}) + \sigma(\textbf{x})\varepsilon, $$where $\theta(\textbf{x}) = \big(\mu(\textbf{x}), \sigma(\textbf{x})\big)^\top$ is additive in both components, that is,  
$\mu(\textbf{x}) = h_{1, \mu}(x_1)+\dots + h_{k, \mu}(x_k)$ and 
$\sigma(\textbf{x}) =h_{1, \sigma}(x_1)+\dots + h_{k, \sigma}(x_k)$ for some continuous non-constant non-zero functions $h_{i,\cdot}$, where we also assume $h_{i,\sigma}>0$, $i=1, \dots, k$.  
\end{customprop}
\begin{proof}
\label{Proof of LemmaOLocationScaleinseparabilite}
For a contradiction, consider that $S$ is $\mathcal{F}_{LS}$-plausible. Without loss of generality, let $S=\{1, \dots, s\}$ for $s<k=|pa_Y|$. Then, $g\in\mathcal{F}_{LS}$ exist such that  $g^{\leftarrow}\big(\textbf{X}_S, Y\big)\indep \textbf{X}_S$. Since $g\in\mathcal{F}_{LS}$, we can write $g(\textbf{x}_S, e) = \mu_g(\textbf{x}_S) + \sigma_g(\textbf{x}_S)q^{-1}(e)$ for some function $\theta_g=(\mu_g, \sigma_g)$ that is minimal almost surely and hence non-constant in neither of the arguments. Inverse of such a function is in the form $g^{\leftarrow}(\textbf{x}_S, e) = q(\frac{e - \mu_g(\textbf{x}_S)}{\sigma_g(\textbf{x}_S)})$. 

Hence, simply rewriting   
$$\textbf{X}_S\indep g^{\leftarrow}\big(\textbf{X}_S, Y\big)
= q\bigg(\frac{Y - \mu_g(\textbf{X}_S)}{\sigma_g(\textbf{X}_S)}\bigg),$$
and using \eqref{trivial_identity} and $Y = \mu(\textbf{X}_{pa_Y}) + \sigma(\textbf{X}_{pa_Y})\varepsilon_Y$ we get 
\begin{equation}\label{dfrgeTRI}
\textbf{X}_S\indep \frac{\mu(\textbf{X}_{pa_Y}) + \sigma(\textbf{X}_{pa_Y})\varepsilon_Y - \mu_g(\textbf{X}_S)}{\sigma_g(\textbf{X}_S)}. 
\end{equation}
Equation (\ref{dfrgeTRI}) can be equivalently rewritten into
\begin{equation}\label{erty}
\textbf{X}_S\indep f_1(\textbf{X}_S)  + f_2(\textbf{X}_S)h(\textbf{X}_{S^c}, \varepsilon_Y),
\end{equation}
where $ S^c=pa_Y\setminus S$, and 
$$f_1(\textbf{x}) =\frac{ h_{1, \mu}(x_1)+\dots + h_{s, \mu}(x_s)- \mu_g(\textbf{x})}{\sigma_g(\textbf{x})},\,\,\,\,\,\,\,\,\,\,f_2(\textbf{x}) =  \frac{ h_{1, \mu}(x_1)+\dots + h_{s, \mu}(x_s)}{\sigma_g(\textbf{x})},$$
$$
h(\textbf{x}, \varepsilon) = h_{s+1, \mu}(x_{s+1})+\dots + h_{k, \mu}(x_k) + [h_{s+1, \sigma}(x_{s+1})+\dots + h_{k, \sigma}(x_k)]\varepsilon.
$$
However, independence (\ref{erty}) is a contradiction with Lemma \ref{CoolLemma} part 4. 
\end{proof}

\begin{customprop}{\ref{LemmaOParetoinseparabilite}}
 Consider $(Y, \textbf{X})\in\mathbb{R}\times \mathbb{R}^p$ satisfying \eqref{SCM_for_Y} with  $pa_Y\neq \emptyset$ and $\mathcal{F} = \mathcal{F}_{F}$. where $F$ be a distribution function whose parameter acts multiplicatively. 
 Let $\textbf{X}_{pa_Y}$ be a continuous random vector with full support and independent components. 
\begin{itemize}
\item Consider $f_Y\in\mathcal{F}_F$ in the form $f_Y(\textbf{x}, \varepsilon)=F^{-1}\big(\varepsilon, \theta(\textbf{x})\big)$ with additive function $\theta(x_1, \dots, x_k) = h_1(x_1)+\dots + h_k(x_k)$, where $h_i$ are continuous non-constant real functions. Then, then every $S\subsetneq pa_Y$ is not $\mathcal{F}_{F}$-plausible. 
\item Consider $f_Y\in\mathcal{F}_F$ in the form $f_Y(\textbf{x}, \varepsilon)=F^{-1}\big(\varepsilon, \theta(\textbf{x})\big)$ with multiplicative function   $\theta(x_1, \dots, x_k) = h_1(\textbf{x}_S)\cdot h_2(\textbf{x}_{\{1, \dots, k\}\setminus S})$ for some $S\subsetneq \{1, \dots, k\}$, where $h_1, h_2$ are continuous non-constant non-zero real functions. Then, $S_{\mathcal{F}_F}(Y)=\emptyset$.
\end{itemize}
\end{customprop}
\begin{proof}\label{Proof of LemmaOParetoinseparabilite}
\textbf{The first bullet-point}: For a contradiction, consider that $S=\{1, \dots, s\}\subset \{1, \dots, k\}$ is $\mathcal{F}_{F}$-plausible. Then for almost all $z\in(0,1)$, there exist $g\in\mathcal{F}_F$ such that  $g^{\leftarrow}\big(\textbf{X}_S, f(\textbf{X}, z)\big)\indep \textbf{X}_S$. Since $g\in\mathcal{F}_F$, we can write $g^{\leftarrow}(\textbf{x}_S, \cdot) = F\big(\cdot, \theta_g(\textbf{x}_S)\big)$ for some non-constant function $\theta_g$. Hence, 
$$\textbf{X}_S\indep g^{\leftarrow}\big(\textbf{X}_S, f(\textbf{X}, z)\big)
= F[F^{-1}\big(z, \theta(\textbf{X})\big), \theta_g(\textbf{X}_S)]
= f_1[z, f_2\big(\theta_g(\textbf{X}_S)\big) \cdot\theta(\textbf{X})].$$ 
We use identity (\ref{trivial_identity}). Since $f_1$ is invertible, we obtain 
\begin{equation}\label{dfrge}
\textbf{X}_S\indep f_2\big(\theta_g(\textbf{X}_S)\big) \cdot\theta(\textbf{X}).
\end{equation}
Define $\tilde{\theta}_g(\textbf{X}_S):=f_2\big(\theta_g(\textbf{X}_S)\big)$. Finally, since $\theta(\textbf{X})$ is an additive function from the assumptions, (\ref{dfrge}) is equivalent to
$$
\tilde{\theta}_g(\textbf{X}_S)[h_1(X_1) + \dots + h_k(X_k)]\indep \textbf{X}_S. 
$$
However, that is a contradiction with Lemma \ref{CoolLemma} part 2. 

\textbf{The second bullet-point}:  
We show that $S$ is $\mathcal{F}_F$-plausible set by finding an appropriate function $g\in\mathcal{F}_F$ such that $g^{\leftarrow}\big(\textbf{X}_S, f_Y(\textbf{X}, \varepsilon_Y)\big)\indep \textbf{X}_S$. Since it must hold that  $g\in\mathcal{F}_F$, we write  $g^{\leftarrow}(\textbf{x}_S, \cdot) = F\big(\cdot, \theta_g(\textbf{x}_S)\big)$ for some $\theta_g$. 

Rewrite 
 $$g^{\leftarrow}\big(\textbf{X}_S, f(\textbf{X}, \varepsilon_Y)\big) = F[F^{-1}\big(\varepsilon_Y, \theta(\textbf{X})\big)     ,\theta_g(\textbf{X}_S)]
 = f_1[\varepsilon_Y, f_2\big(\theta_g(\textbf{X}_S)\big) \cdot\theta(\textbf{X})],$$
where $f_1, f_2$ are from (\ref{postMultiplDefinition}). 
We choose $\theta_g$ such that $f_2\big(\theta_g(\textbf{x}_S)\big) = \frac{1}{h_1(\textbf{x}_S)}$. Obviously, $g\in\mathcal{F}_F$. Then, by extending $\theta$ to its multiplicative form, we get 
$$ f_1[\varepsilon_Y, f_2\big(\theta_g(\textbf{X}_S)\big) \cdot\theta(\textbf{X})] =  f_1[\varepsilon_Y,  h_2(\textbf{X}_{\{1, \dots, k\}\setminus S})]\indep \textbf{X}_S.$$
Together, we found $g\in\mathcal{F}_F$ defined by $g^{\leftarrow}(\textbf{x}_S, \cdot) = F\big(\cdot, f_2^{-1}(\frac{1}{h_1(\textbf{x}_S)})\big)$ that satisfy   $g^{\leftarrow}\big(\textbf{X}_S, f(\textbf{X}, \varepsilon_Y)\big)\indep \textbf{X}_S$. Hence, $S$ is $\mathcal{F}_F$-plausible. The analogous argument can be given for the set $\{1, \dots, k\}\setminus S$. Hence,  $S_{\mathcal{F}_F}(Y)\subseteq S\cap( \{1, \dots, k\}\setminus S) = \emptyset$.
\end{proof}

\begin{customthm}{\ref{Theorem_in_section2}}
Let \((Y, \mathbf{X}) \in \mathbb{R} \times \mathbb{R}^p\) be a continuous random vector that satisfy \eqref{SCM_for_Y}. Suppose  \(\mathcal{F} = \mathcal{F}_A\) and consider a non-empty set \(S \subsetneq pa_Y\). 

\begin{itemize}
 \item (Independent case) Assume that \(\mathbf{X}_{pa_Y}\) has independent components and \(f_Y\) is an injective function. Then, the set \(S\) is \(\mathcal{F}_A\)-plausible if and only if \(f_Y\) can be decomposed as follows:  
    \begin{equation}\tag{\ref{decomposition_condition_independent}}
        f_Y(\mathbf{x}, e) = h_1(\mathbf{x}_S) + h_2(\mathbf{x}_{pa_Y \setminus S}) + q^{-1}(e), \quad\quad \forall
        \mathbf{x} \in \mathbb{R}^{|pa_Y|}, \, e \in (0, 1),
    \end{equation}
    where \(h_1, h_2\) are measurable functions, \(q^{-1}\) is a quantile function, and \(pa_Y \setminus S = \{i \in pa_Y : i \not\in S\}\).

    \item (General case) The set \(S\) is \(\mathcal{F}_A\)-plausible if and only if the function \(f_Y\) can be expressed as:
    \begin{equation}\tag{\ref{decomposition_condition}}
        f_Y(\mathbf{x}, e) = h_1(\mathbf{x}_S) + h_2(\mathbf{x}) + q^{-1}(e), \quad\quad \forall
        \mathbf{x} \in \mathbb{R}^{|pa_Y|}, \, e \in (0, 1),
    \end{equation}
    for some measurable function \(h_1\), quantile function \(q^{-1}\), and a function \(h_2\) such that 
    \[
    h_2 \in \mathcal{H}_{\mathbf{X}_{pa_Y}}(S) := \{f : \mathbb{R}^{|pa_Y|} \to \mathbb{R} \mid f(\mathbf{X}_{pa_Y}) \indep \mathbf{X}_S\}.
    \]

    \item As a consequence, \(S_{\mathcal{F}_A}(Y) = pa_Y\) if and only if:  
    \begin{enumerate}
        \item \(f_Y\) cannot be expressed in the form of \eqref{decomposition_condition} for any \(S \subsetneq pa_Y\), and  
        \item every set \(S\) that is neither a subset nor a superset of \(pa_Y\) (i.e., \(pa_Y \not\subseteq S \not\subseteq pa_Y\)) is not \(\mathcal{F}_A\)-plausible (e.g., under the assumptions of Proposition~\ref{TheoremFidentifiabilityWithChild}).  
    \end{enumerate}
\end{itemize}
\end{customthm}

\begin{proof}
\label{Proof of Theorem_in_section2}

    \textbf{The second bullet-point}:   ''$\impliedby$'' if $f_Y$ has a form \eqref{decomposition_condition}, then \( S \) is an \( \mathcal{F}_A \)-plausible set since we can find \( f \in \mathcal{F}_A \) such that $f^{\leftarrow}(X_S, Y) \indep X_S$. Such a choice is $f(\textbf{x}_S, e) = h_1(\textbf{x}_s) + q^{-1}(e)$, since $Y - f(\textbf{X}_S, \varepsilon_Y) =h_1(\textbf{X}_S)+  h_2(\textbf{X}_{pa_Y}) + q^{-1}(\varepsilon_Y) - h_1(\textbf{X}_S) - q^{-1}(\varepsilon_Y) =  h_2(\textbf{X}_{pa_Y})\indep \textbf{X}_S$. 
    
    ''$\implies$'':  Assume that \( S \) is an \( \mathcal{F}_A \)-plausible set; hence there exists \( f \in \mathcal{F}_A \) such that $f^{\leftarrow}(\textbf{X}_S, Y) \indep \textbf{X}_S$. We use the following notation: \( f(\textbf{x}, e) = \mu(\textbf{x}) + \tilde{q}^{-1}(e) \) for \( \textbf{x} \in \mathbb{R}^{|S|} \) and \( e \in (0,1) \), where \( \mu \) is some function and \( \tilde{q}^{-1} \) is a quantile function. Additive functions have an inverse in the form \( f^{\leftarrow}(x, y) = \tilde{q}(y - \mu(x)) \) for \( x \in \mathbb{R}^{|S|} \) and \( y \in \mathbb{R} \) (see the discussion in Appendix~\ref{Appendix_A.1.}). Notice that due to an assumption of continuity of $(Y, \textbf{X})$, $ \tilde{q}^{-1}$ must be injective on the support of $Y$ and we can use identity the \eqref{trivial_identity}. We therefore have:
\begin{equation*}
    \begin{split}
\text{S is } \mathcal{F}_A\text{-plausible}& \iff f^{\leftarrow}(\textbf{X}_S, Y) \indep\textbf{ X}_S \iff Y - \mu(\textbf{X}_S) \indep \textbf{X}_S \\&
\iff    f_0(\textbf{X}_{pa_Y}) + q^{-1}(\varepsilon_Y) - \mu(\textbf{X}_S) \indep\textbf{ X}_S        
\\&
\iff    f_0(\textbf{X}_{pa_Y})  - \mu(\textbf{X}_S) \indep\textbf{ X}_S,        
    \end{split}
\end{equation*}
where we used a notation \( Y = f_0(X_1, X_2) + q^{-1}(\varepsilon_Y) \), where \( \varepsilon_Y \indep \textbf{X}_{pa_Y} \). Therefore,  we directly obtain $f_0(x_1, x_2) - \mu(x_1)\in \mathcal{H}_{\textbf{X}}(S) $. Defining $h_1 = \mu, h_2 = f_0 - \mu$ we obtain precisely \eqref{decomposition_condition}.

    \textbf{The first bullet-point}:  ''$\impliedby$'' if $f_Y$ has a form \eqref{decomposition_condition_independent}, then \( S \) is an \( \mathcal{F}_A \)-plausible set since for $f(\textbf{x}_S, e) = h_1(\textbf{x}_S) + q^{-1}(e)$ trivially holds  $f^{\leftarrow}(\textbf{X}_S, Y)\indep \textbf{X}_S$. 
    
    ''$\implies$'':  Consider that $S$ is an  $\mathcal{F}_A$-plausible set; hence there exists \( f \in \mathcal{F}_A \) such that $f^{\leftarrow}(\textbf{X}_S, Y) \indep \textbf{X}_S$.  Let us write $$Y = f_0(X_1, \dots, X_{|pa_Y|}) + q^{-1}(\varepsilon_Y), \,\,\,\,\,\,\,\,\varepsilon_Y\indep \textbf{X}_{pa_Y},$$ for some injective function $f_0$ and injective quantile function $q^{-1}$, and let us write $$f(\textbf{x}, e) = \mu(\textbf{x}) + \tilde{q}^{-1}(e), \,\,\textbf{x}\in\mathbb{R}^{|S|}, e\in (0,1),$$ for some measurable function $\mu$ and quantile function $\tilde{q}^{-1}$. Additive functions have an inverse in a form $f^{\leftarrow}(\textbf{x}, y) = \tilde{q}\big(y-\mu(\textbf{x})\big),\,\,\textbf{x}\in\mathbb{R}^{|S|}, y\in\mathbb{R}$ (see discussion in Appendix~\ref{Appendix_A.1.}).   Using Definition~\ref{Definition1} and identity \eqref{trivial_identity}, we have $Y- \mu(\textbf{X}_S)\indep \textbf{X}_S$. 

Hence, we have
\begin{align*}
    Y- \mu(\textbf{X}_S)&\indep \textbf{X}_S 
    \\ f_0(X_1, \dots, X_{|pa_Y|}) + q^{-1}(\varepsilon_Y) - \mu(\textbf{X}_S) &\indep \textbf{X}_S 
    \\f_0(X_1, \dots, X_{|pa_Y|}) - \mu(\textbf{X}_S) &\indep \textbf{X}_S. 
\end{align*}
Lemma \ref{CoolLemma} part 1 (in particular, the negation of that statement) directly shows that this implies the form~\eqref{decomposition_condition_independent}.  

\textbf{The third bullet point} follows directly from Proposition~\ref{TheoremFidentifiabilityWithChild} and the previous bullet point. Specifically, \(S_{\mathcal{F}}(Y) = pa_Y \iff\) every set \(S\) with \(S \not\supseteq pa_Y\) is not \(\mathcal{F}\)-plausible. This is equivalent to stating that any set \(S\) such that either \(S \subsetneq pa_Y\) or \(pa_Y \not\subseteq S \not\supseteq pa_Y\) is not \(\mathcal{F}\)-plausible. The case \(S \subsetneq pa_Y\) follows from the first assumption, while \(pa_Y \not\subseteq S \not\supseteq pa_Y\) follows from the second.
\end{proof}

\begin{customthm}{\ref{theorem_consistency_ISD}}
Let $(Y, \mathbf{X})$ satisfy \eqref{SCM_for_Y} with $pa_Y \neq \emptyset$.  Assume that the estimator $\hat{S}_{\mathcal{F}}(Y)$ is constructed as described above, using $\hat{f}_{pa_Y} = f_Y$ and valid tests $H_{0, S}^I, H_{0, S}^S, H_{0, S}^D$  for all $S\subseteq \{1, \dots, p\}$ at level $\alpha$ in a sense that for all $S$, $\sup_{P: H_{0, S}^\cdot \text{is true}}P(H_{0, S}^\cdot \text{ is rejected})\leq \alpha$ for all $\cdot \in \{S, I, D\}$. Then
\begin{equation}\label{Theorem2_3alpha}
    P(\hat{S}_{\mathcal{F}}(Y) \subseteq pa_Y) \geq 1-3\alpha. 
\end{equation}

Furthermore, suppose ${S}_{\mathcal{F}}(Y) = pa_Y$, and assume that all tests have non-zero power, i.e., $lim_{n\to\infty}P(H_{0, S}^\cdot \text{ is rejected}\mid  H_{0, S}^\cdot \text{ is false})=1$ for all $\cdot \in \{S, I, D\}$ and all $S\not\supseteq pa_Y$. Then, there exists an integer $n_0$ such that for all $n \geq n_0$, it holds that
\begin{equation}\label{Theorem2_3alpha_eq2}
    P(\hat{S}_{\mathcal{F}}(Y) = pa_Y) \geq 1-3\alpha. 
\end{equation}
\end{customthm}
\begin{proof}
\label{Proof of theorem_consistency_ISD}
\eqref{Theorem2_3alpha} follows directly from the following computations:
\begin{align*}
    & P(\hat{S}_{\mathcal{F}}(Y) \subseteq pa_Y) \geq  P(H_{0, pa_Y}(\mathcal{F}) \text{ is not rejected}) =  P(H_{0, pa_Y}^I, H_{0, pa_Y}^S, H_{0, pa_Y}^D \text{ are not rejected})\\& \geq \prod_{\cdot \in \{I, S, D\} } P(H_{0, pa_Y}^\cdot\text{ is not rejected}) \geq (1-\alpha)^3 > 1-3\alpha \,\,\,\,\,\,\,\,for\,\,\alpha\in(0,1). 
\end{align*} 
As for \eqref{Theorem2_3alpha_eq2}, note that the identifiability of the parents  ${S}_{\mathcal{F}}(Y) =pa_Y$ implies that for any $S\not\supseteq pa_Y$, and any function $\hat{f}_S$, either $\hat{f}_S\not\in\mathcal{F}$, or $\hat{\varepsilon}_S \not\indep \textbf{X}_S$, or $\hat{\varepsilon}_S \not\sim U(0,1)$. In other words, either $H_{0,S}^S$, $H_{0, S}^I$ or $H_{0, S}^D$ is not true. Therefore, there exist $n_0$ such that  $H_{0, S}^\cdot$ will be rejected with large probability, where $\cdot \in \{I, S, D\}$ correspond to the non-true hypothesis. Therefore, 
\begin{align*}
        &\lim_{n\to\infty}P(\hat{S}_{\mathcal{F}}(Y) \neq pa_Y) \leq \lim_{n\to\infty}P(H_{0, pa_Y}(\mathcal{F}) \text{ is rejected, or }\exists S \not\supseteq pa_Y: H_{0, S}(\mathcal{F}) \text{ is not rejected})\\&
        \leq  1-(1-\alpha)^3 +    \lim_{n\to\infty} P(\exists S \not\supseteq pa_Y: H_{0, S}(\mathcal{F}) \text{ is not rejected})
        \\& =   1-(1-\alpha)^3 +    \lim_{n\to\infty} P(\exists S \not\supseteq pa_Y: H_{0, S}^I, H_{0, S}^S, H_{0, S}^D \text{ are not rejected})\\&
\leq 1-(1-\alpha)^3  +  \lim_{n\to\infty} P(\exists S \not\supseteq pa_Y: H_{0, S}^\cdot \text{ is not rejected}\mid  H_{0, S}^\cdot \text{ is false})\\&= 1-(1-\alpha)^3 + 0 <3\alpha.      
\end{align*}
\end{proof}

\begin{customlem}{\ref{lemma_hidden_confounder}}
 Consider $(Y, \textbf{X})\in\mathbb{R}\times \mathbb{R}^p$  satisfy \eqref{SCM_for_Y} with  $\mathcal{F} = \mathcal{F}_A$. Consider $\emptyset\neq hid\subset pa_Y$. Let $S\subseteq pa_Y \cap obs$ and $\tilde{S}:=(pa_Y\cap obs)\setminus S$ such that
 $(\textbf{X}^{hid}, \textbf{X}_S)\indep \textbf{X}_{\tilde{S}}$ (one can consider that $\textbf{X}^{hid}$ cause $\textbf{X}_S$ and $Y$, but not $\textbf{X}_{\tilde{S}}$). 
 
 If $f_Y$ has a form \begin{equation*}
        f_Y(\textbf{x}, e) = h_1(\textbf{X}^{hid}, \textbf{X}_S) + h_2(\textbf{X}_{\tilde{S}}) + q^{-1}(e), \,\,\,\,\,\textbf{x}\in\mathbb{R}^{|pa_Y|}, e\in(0,1),
     \end{equation*}
for some continuous non-constant real functions  $h_1, h_2$ and a quantile function $q^{-1}$. Then, $S_{\mathcal{F}_A}(Y) \subseteq \tilde{S}\subset pa_Y$.
\end{customlem}
\begin{proof}
\label{Proof of lemma_hidden_confounder}
The set $\tilde{S}$ is $\mathcal{F}_A$-plausible since  $Y - h_2(\textbf{X}_{\tilde{S}}) =h_1(\textbf{X}^{hid}, \textbf{X}_S) + q^{-1}(\varepsilon_Y)  \indep \textbf{X}_{\tilde{S}}$. Therefore $S_{\mathcal{F}_A}(Y)\subseteq\tilde{S}$. 
\end{proof} 

\begin{customprop}{\ref{Proposition_consistency}}
Consider $\mathcal{F} = \mathcal{F}_A$ and 
let $(Y, \textbf{X})\in\mathbb{R}\times \mathbb{R}^p$ follow an SCM with DAG $\mathcal{G}_0$ satisfying \eqref{SCM_for_Y}. Assume that every $S \neq pa_Y$ is not $\mathcal{F}$-plausible. 
Then,   

\begin{equation}
  \lim_{n\to\infty} \mathbb{P}(\widehat{pa}_Y  \neq pa_Y) = 0,
\end{equation}   
where $n$ is the size of the random sample and $\widehat{pa}_Y$ is our score-based estimate from Section~\ref{Section_algorithm2} with $\lambda_1, \lambda_2>0, \lambda_3 = 0$, suitable estimation procedure, and HSIC independence measure. 
\end{customprop}
\begin{proof}
\label{Proof of Proposition_consistency}
This result is a simple consequence of Theorem 20 in \cite{reviewANMMooij}. We use the same notation. For a rigorous definition of $HSIC$ and $\widehat{HSIC}$, see Appendix A.1 in \cite{reviewANMMooij}.

We show that $score(S)> score(pa_Y)$ as $n\to\infty$ for any $S\neq pa_Y$. 
The $score(S)$ is defined as the weighted sum of  \textit{Independence} and \textit{Significance} terms. Let us first concentrate on the former. By definition, we write $\textit{Independence} = -\widehat{HSIC}(\textbf{X}_S, \hat{\varepsilon}_S)$. On a population level, it holds (Lemma~12 in \cite{reviewANMMooij}) that  ${HSIC}(\textbf{X}_S, {\varepsilon}_S) > 0 $ and ${HSIC}(\textbf{X}_{pa_Y}, {\varepsilon}_{pa_Y}) = 0$, since $\textbf{X}_S$ and ${\varepsilon}_S$ are not independent (because $S$ is not $\mathcal{F}$-plausible) and  $\textbf{X}_{pa_Y}$ and ${\varepsilon}_{pa_Y}$ are independent (by definition of the SCM). 
By Theorem~20 in \cite{reviewANMMooij}, we obtain $\widehat{HSIC}(\textbf{X}_{pa_Y}, \hat{\varepsilon}_{pa_Y})\to {HSIC}(\textbf{X}_{pa_Y}, {\varepsilon}_{pa_Y})=0$ and $\widehat{HSIC}(\textbf{X}_{S}, \hat{\varepsilon}_{S})\to {HSIC}(\textbf{X}_{S}, {\varepsilon}_{S})>0$, as $n\to\infty$. Therefore, the independence term is strictly smaller (for some large $n$) for $S$ than for $pa_Y$. 

Let us focus on \textit{Significance} term (We work with the \textit{Significance} term somewhat vaguely in this proof. However, we only need $Significance\to 0$ as $n\to\infty$ for $pa_Y$, which is satisfied for any reasonable method of assessing significance of covariates.). Since all $\textbf{X}_{pa_Y}$ are significant (otherwise $f_Y\notin\mathcal{I}_m$) we get that $Significance\to 0$ as $n\to\infty$ for $pa_Y$. Moreover, by definition, always $Significance\geq 0$.

Together, we find that $score(pa_Y)>score(S)$ for large $n$, since 
the \textit{Independence} term is strictly smaller (for large $n$) for $S$ than for $pa_Y$ and \textit{Significance} term converges to $0$ for $pa_Y$ and is non-negative.  
We showed that $pa_Y$ has the largest score among all $S\subseteq \{1, \dots, p\}$ (again, for $n$ large enough). 
\end{proof}

\end{document}